\title{%
  Complexity Classifications for Propositional Abduction in Post's Framework%
  \thanks{Supported by ANR {\it Algorithms and complexity} 07-BLAN-0327-04 and DFG grant VO 630/6-1. An earlier version appeared in the
 \emph{Proc. of 12th International Conference on the Principles of Knowledge Representation and Reasoning}, KR'2010, Toronto, Canada.}%
}
\author{%
  Nadia Creignou\inst{1}, Johannes Schmidt\inst{1}, Michael Thomas\inst{2}%
}
\institute{%
  LIF, UMR CNRS 6166, Aix-Marseille Universit\'{e} \newline%
  163, Avenue de Luminy, 13288 Marseille Cedex 9, France\newline%
  \protect\url{creignou@lif.univ-mrs.fr}\newline%
  \protect\url{johannes.schmidt@lif.univ-mrs.fr}\smallskip\newline%
  \and
  Institut f\"ur Theoretische~Informatik, Gottfried Wilhelm Leibniz Universit\"{a}t\newline%
  Appelstr.~4, 30167~Hannover, Germany\newline%
  \protect\url{thomas@thi.uni-hannover.de}\newline
}
\begin{document}

\maketitle

\begin{abstract}
In this paper we investigate the complexity of abduction, a fundamental and important form of non-monotonic reasoning. Given a knowledge base 
explaining the world's behavior it aims at finding an explanation for some observed manifestation. In this paper we  consider
 propositional abduction, where the knowledge base and the manifestation are represented by propositional formulae. 
The problem of deciding whether there exists an explanation has been shown to be \SigPtwo-complete in general. We focus on formulae in 
which the allowed connectives are taken from certain sets of Boolean functions. We consider different variants of the abduction problem 
in restricting both the manifestations and the hypotheses. For all these variants we obtain a complexity classification  for all possible sets of Boolean functions. In this way, we identify easier cases, namely \NP-complete, \coNP-complete and polynomial cases. Thus, we get a detailed picture of the complexity of the propositional abduction problem, hence highlighting sources of intractability. Further, we address the problem of counting the explanations and draw a complete picture for the counting complexity.
\newline

\medskip
\textbf{Keywords:}
abduction, computational complexity, Post's lattice, propositional logic, boolean connective
\end{abstract}

\newpage
\section{Introduction}
This paper is dedicated to the computational complexity of    abduction, a fundamental and important form of non-monotonic reasoning. Given a  certain consistent knowledge about the world, abductive reasoning is used to generate explanations (or at least telling if there is one) for observed manifestations. Nowadays abduction has taken on fundamental importance in Artificial Intelligence and has  many application areas spanning medical diagnosis~\cite{byaltajo89}, text analysis~\cite{hostapma93}, system diagnosis~\cite{stwo01},
configuration problems~\cite{amfama02}, temporal knowledge bases~\cite{boli00} and has connections to default reasoning~\cite{selman-levesque:1990}.
  
  There are several approaches to formalize the problem of abduction.
  In this paper, we focus on \emph{logic based abduction} in which the knowledge base
  is given as a set $\Gamma$ of propositional formulae. We are interested in deciding whether there exists an \emph{explanation} $E$, \emph{i.e.},
  a set of literals consistent with $\Gamma$ such that $\Gamma$ and $E$ together entail the observation. 

  From a complexity theoretic viewpoint, the abduction problem is very hard since it is $\SigPtwo$-complete and thus situated at the
  second level of the polynomial hierarchy~\cite{eiter-gottlob:1995}. This intractability result raises the question for restrictions leading to
  fragments of lower complexity. Several such restrictions have been considered in previous works. One of the most famous amongst those is
  Schaefer's framework, where formulae are restricted to generalized conjunctive normal form with clauses from a fixed set of
  relations~\cite{crza06,noza05,noza08}.
  
  A similar yet different procedure is to rather require formulae to be constructed from a restricted set of Boolean functions $B$.
  Such formulae are called \emph{$B$-formulae}.
  This approach has first been taken by Lewis, who showed that the satisfiability problem is $\NP$-complete if and only if this set of
  Boolean functions has the ability to express the negation of implication connective $\not\limplies$~\cite{lew79}. 
  Since then, this approach has been applied to a wide range of problems including
  equivalence and implication problems~\cite{rei03,bemethvo08imp},
  satisfiability and model checking in modal and temporal logics~\cite{bhss05,bsssv07},
  default logic~\cite{bemethvo08}, and circumscription~\cite{thomas09}, among others.

  We follow this approach and show that Post's lattice allows to completely classify the complexity of propositional abduction 
  for several variants and all possible sets of allowed Boolean functions.
  We consider two main variants of the abduction problem. In the first one we may build explanations from
  positive and negative literals. We refer to this problem as \emph{symmetric} abduction, $\ABD$ for short.
  The second variant, $\ABD[\HP]$, is the so-called \emph{positive} abduction where we allow only positive literals in the explanations.

  We first examine the symmetric variant in the case where the representation of the manifestation is a positive literal.
  We show that depending on the set $B$ of allowed connectives the abduction problem is either $\SigPtwo$-complete, or $\NP$-complete, or
  in $\P$ and $\ParityL$-hard, or in  $\L$. More precisely, we prove that the complexity of this abduction problem is $\SigPtwo$-complete
  as soon as $B$ can express one of the functions $x \vee (y \wedge \neg z)$, $x \wedge (y \vee \neg z)$ or
  $(x \wedge y) \vee (x \wedge \neg z) \vee (y \wedge \neg z)$. It drops to $\NP$-complete when all functions in $B$ are monotonic and have
  the ability to express one of the functions $x \vee (y \wedge z)$, $x \wedge (y \vee z)$ or
  $(x \wedge y) \vee (x \wedge z) \vee (y \wedge z)$. 
  The problem becomes solvable in polynomial time and is $\ParityL$-hard if $B$-formulae may depend
  on more than one variable while being representable as linear equations. 
  Finally the complexity drops to $\L$ in all remaining cases.
  We then complete our study of symmetric abduction with analogous complexity classifications of the 
  variants of $\ABD$ obtained by restricting the manifestation to be respectively a clause, a term or a $B$-formula.
  
  These results are subsequently extended to positive abduction.
  An overview can be found in Figures~\ref{fig:abduction_complexity} 
  and~\ref{fig:positive_abduction_complexity}.

  Please note that in \cite{crza06}, 
  the authors obtained a complexity classification of the abduction problem in the so-called Schaefer's framework. 
  The two classifications are in the same vein since they classify the complexity of abduction for local restrictions on the knowledge base. 
  However the two results are incomparable in the sense that no classification can be deduced from the other. 
  They only overlap in the particular case of the linear connective $\xor$, for which both types of sets of formulae can be
  seen as systems of linear equations. This special abduction case has been shown to be decidable in polynomial time in \cite{zanuttini03}.

  Besides the decision problem, another natural question is concerned with the number of explanations. This problem refers to the counting
  problem for abduction. The study of the counting complexity of abduction has been started by Hermann and Pichler (\cite{hepi07}).
  We prove here a trichotomy theorem showing that counting the full explanations of symmetric abduction is either $\SHcoNP$-complete or
  $\SHP$-complete or in $\FP$, depending on the set $B$ of allowed connectives. 
  We also consider the counting problem associated with positive abduction, for which we 
  distinguish two frequently used settings: counting either all positive explanations, or counting the subset-minimal.
  For both formalizations of the counting problem, we get a potentially dichotomous classification with one open case.

  The rest of the paper is structured as follows. 
  We first give the necessary preliminaries in Section~\ref{sec:preliminaries}.
  The abduction problems considered herein are defined in Section~\ref{sec:abduction_problem}.
  In Section~\ref{sec:complexity_abduction} we classify the complexity of symmetric abduction, 
  where 
    we first consider the case where the manifestation is  a single positive literal (Section~\ref{subsec:complexity_pq}), 
    and then turn to variants in which the manifestations are clauses, terms and restricted formulae (Section~\ref{subsect:abduction-variants}). 
  Section~\ref{sec:positive_abduction} then studies the complexity of positive abduction. 
  An overview of these results is given in Section~\ref{sec:overview}. 
  Finally, Section~\ref{sec:counting} is dedicated to the counting problem, 
  and Section~\ref{sec:conclusion} contains some concluding remarks.


\section{Preliminaries}\label{sec:preliminaries}
\paragraph{Complexity Theory}

  We require standard notions of complexity theory. 
  For the decision problems the arising complexity degrees encompass the classes $\L$, $\P$, $\NP$, and $\SigPtwo$. 
  For more background information, the reader is referred to~\cite{pap94}. 
  We furthermore require the class $\ParityL$ defined as the class of languages $L$ such that there exists a nondeterministic logspace Turing machine
  that exhibits an odd number of accepting paths if and only if $x\in L$, for all $x$~\cite{budaheme92}. 
  It holds that $\L \subseteq \ParityL \subseteq \P$.
  For our hardness results we employ \emph{logspace many-one reductions}, defined as follows:
  a language $A$ is logspace many-one reducible to some language $B$ (written $A \leqlogm B$) if
  there exists a logspace-computable function $f$ such that $x \in A$ if and only if  $f(x) \in B$.

  A \emph{counting problem} is represented using a \emph{witness function} $w$, which for every input $x$ returns a finite set of witnesses. This witness function gives rise to the following counting problem: given an instance $x$, find the cardinality $\vert w(x)\vert$ of the witness set $w(x)$. The class $\SHP$ is the class of counting problems naturally associated with decision problems in $\NP$. According to \cite{hevo95}  if ${\cal C}$ is a complexity class of decision problems, we define $\SHclass{\cal C}$ to be the class of all counting problems whose witness function is such that the size of every witness $y$ of $x$ is polynomially bounded in the size of $x$, and checking  whether $y\in w(x)$ is in ${\cal C}$. Thus, we have $\SHP=\SHclass{\P}$ and $\SHP\subseteq \SHcoNP$. Completeness of counting problems is usually proved by means of Turing reductions. 
A stronger notion is the parsimonious reduction where the exact number of solutions is conserved by the reduction function.

\paragraph{Propositional formulae}

  We assume familiarity with propositional logic. 
  The set of all propositional formulae is denoted by $\allFormulae$.
  A \emph{model} for a formula $\varphi$ is a truth assignment to the set of its variables that satisfies $\varphi$.
  Further we denote by $\varphi[\alpha/\beta]$ the formula obtained from $\varphi$ by replacing all occurrences of $\alpha$ with $\beta$.
  For a given set $\Gamma$ of formulae, we write $\Vars{\Gamma}$ to denote the set of variables occurring in $\Gamma$.
  We identify finite $\Gamma$ with the conjunction of all the formulae in $\Gamma$, $\bigwedge_{\varphi\in\Gamma}\varphi$. Naturally,
  $\Gamma[\alpha/\beta]$ then stands for $\bigwedge_{\varphi\in\Gamma}\varphi[\alpha/\beta]$.
  For any formula $\varphi \in \allFormulae$, we write $\Gamma \models \varphi$ if $\Gamma$ entails $\varphi$, 
  \emph{i.e.}, if every model of $\Gamma$ also satisfies $\varphi$. 

  A \emph{literal} $l$ is a variable $x$ or its negation $\neg x$.
  Given a set of variables $V$, $\Lits{V}$ denotes the set of all literals formed upon the variables in $V$, \emph{i.e.},
  $\Lits{V} := V \cup \{\neg x \mid x \in V\}$.
  A \emph{clause} is a disjunction of literals 
  and a \emph{term} is a conjunction of literals.

\paragraph{Clones of Boolean Functions}

  A \emph{clone} is a set of Boolean functions that is closed under superposition, \emph{i.e.}, it contains all projections 
  (that is, the functions $f(a_1, \dots , a_n) = a_k$ for all $1 \leq k \leq n$ and $n \in \N$) 
  and is closed under arbitrary composition.
  Let $B$ be a finite set of Boolean functions. 
  We denote by $[B]$ the smallest clone containing $B$ and call $B$ a \emph{base} for $[B]$.
  In 1941 Post identified the set of all clones of Boolean functions \cite{pos41}.
  He gave a finite base for each of the clones and showed that they form a lattice under the usual $\subseteq$-relation, 
  hence the name \emph{Post's lattice} (see, \emph{e.g.}, Figure~\ref{fig:abduction_complexity}).
  To define the clones we introduce the following notions, where $f$ is an $n$-ary Boolean function:
  \begin{itemize} \itemsep 0pt 
    \item $f$ is \emph{$c$-reproducing} if $f(c, \ldots , c) = c$, $c \in \{\false,\true\}$.
    \item $f$ is \emph{monotonic} if $a_1 \leq b_1, \ldots , a_n \leq b_n$ implies $f(a_1, \ldots , a_n) \leq f(b_1, \ldots , b_n)$.
    \item $f$ is \emph{$c$-separating of degree $k$} if 
    for all $A \subseteq f^{-1}(c)$ of size $|A|=k$ 
    there exists an $i \in \{1, \ldots , n\}$ 
    such that $(a_1, \ldots , a_n) \in A$ implies $a_i = c$, $c \in \{\false,\true\}$.
    \item $f$ is \emph{$c$-separating} if 
    $f$ is $c$-separating of degree $|f^{-1}(c)|$.
    \item $f$ is \emph{self-dual} if $f \equiv \mathrm{dual}(f)$, where $\mathrm{dual}(f)(x_1,\ldots,x_n) := \neg f(\neg x_1, \ldots , \neg x_n)$.
    \item $f$ is \emph{affine} if $f \equiv x_1 \xor \cdots \xor x_n \xor c$ with $c \in \{0, 1\}$.
  \end{itemize}
  
  A list of all clones with definitions and finite bases is given in Table~\ref{tab:clones} on page \pageref{tab:clones}.
  A propositional formula using only functions from $B$ as connectives is called a \emph{$B$-formula}.
  The set of all $B$-formulae is denoted by $\allFormulae(B)$.
  
  Let $f$ be an $n$-ary Boolean function. A $B$-formula $\varphi$ such that $\Vars{\varphi}= \{x_1,\ldots , x_n, y_1,\ldots , y_m\}$ is a
  \emph{$B$-representation} of $f$ if for all $a_1,\ldots, a_n, b_1,\dots , b_m \in \{\false,\true\}$ it holds that $f(a_1,\ldots , a_n)=1$
  if and only if every $\sigma\colon\Vars{\varphi}\longrightarrow\{\false,\true\}$ with $\sigma(x_i)=a_i$ and $\sigma(y_i)=b_i$ for all relevant $i$
  satisfies $\varphi$. Such a $B$-representation exists for every $f\in[B]$. Yet, it may happen that the $B$-representation of some function uses
  some input variable more than once.
  \begin{example}
    Let $h(x,y)=x\wedge \neg y$. An $\{h\}$-representation of the function $x\land y$ is $h(x,h(x,y))$.
  \end{example}

  \begin{table*}[tb]
    \centering
    \fontsize{8pt}{10.4pt}\selectfont
    \begin{tabular}{p{1cm}ll}
      \specialrule{\heavyrulewidth}{0cm}{0cm}
      Name & Definition & Base \\
      \specialrule{\heavyrulewidth}{0cm}{0.03cm}
      $\CloneBF$ & All Boolean functions & $\{x \land y, \neg x\}$ \\
      \hline
      $\CloneR_0$ & $\{f \mid f \text{ is $\false$-reproducing}\}$ & $\{x \land y, x \xor y\}$ \\
      \hline
      $\CloneR_1$ & $\{f \mid f \text{ is $\true$-reproducing}\}$ & $\{x \lor y, x \xor y \xor \true \}$ \\
      \hline
      $\CloneR_2$ & $\CloneR_0 \cap \CloneR_1$ & $\{\lor, x \land (y \xor z \xor \true) \}$ \\
      \hline
      $\CloneM$ & $\{f \mid f \text{ is monotonic}\}$ & $\{x \lor y, x \land y, \false, \true\}$ \\
      \hline
      $\CloneM_1$ & $\CloneM \cap \CloneR_1$ & $\{x \lor y, x \land y, \true\}$ \\
      \hline
      $\CloneM_0$ & $\CloneM \cap \CloneR_0$ & $\{x \lor y, x \land y, \false\}$ \\
      \hline
      $\CloneM_2$ & $\CloneM \cap \CloneR_2$ & $\{x \lor y, x \land y \}$ \\
      \hline

      $\CloneS^n_{0}$ & $\{f \mid f \text{ is $0$-separating of degree } n\}$ & $\{x \to y, \dual{h_n}\}$ \\
      \hline
      $\CloneS_0$ & $\{f \mid f \text{ is $0$-separating}\}$ & $\{x \to y\}$ \\
      \hline
      $\CloneS^n_{1}$ & $\{f \mid f \text{ is $1$-separating of degree } n\}$ & $\{x \wedge \neg y, h_n\}$ \\
      \hline
      $\CloneS_1$ & $\{f \mid f \text{ is $1$-separating}\}$ & $\{x \wedge \neg y\}$ \\
      \hline
      $\CloneS^n_{02}$ & $\CloneS^n_0 \cap \CloneR_2$ & $\{x \lor  (y  \land  \neg z), \dual{h_n}\}$ \\
      \hline
      $\CloneS_{02}$ & $\CloneS_0 \cap \CloneR_2$ & $\{x \lor  (y  \land  \neg z)\}$ \\
      \hline
      $\CloneS^n_{01}$ & $\CloneS^n_0 \cap \CloneM$ & $\{\dual{h_n},\true\}$ \\
      \hline
      $\CloneS_{01}$ & $\CloneS_0 \cap \CloneM$ & $\{x \lor  (y  \land z), \true\}$ \\
      \hline
      $\CloneS^n_{00}$ & $\CloneS^n_0 \cap \CloneR_2 \cap \CloneM$ & $\{x \lor  (y  \land  z), \dual{h_n}\}$ \\
      \hline
      $\CloneS_{00}$ & $\CloneS_0 \cap \CloneR_2 \cap \CloneM$ & $\{x \lor  (y  \land  z)\}$ \\
      \hline
      $\CloneS^n_{12}$ & $\CloneS^n_1 \cap \CloneR_2$ & $\{x \land  (y  \lor  \neg z), h_n\}$ \\
      \hline
      $\CloneS_{12}$ & $\CloneS_1 \cap \CloneR_2$ & $\{x \land  (y  \lor  \neg z)\}$ \\
      \hline
      $\CloneS^n_{11}$ & $\CloneS^n_1 \cap \CloneM$ & $\{h_n, \false\}$ \\
      \hline
      $\CloneS_{11}$ & $\CloneS_1 \cap \CloneM$ & $\{x \land  (y  \lor  z), \false\}$ \\
      \hline
      $\CloneS^n_{10}$ & $\CloneS^n_1 \cap \CloneR_2 \cap \CloneM$ & $\{x \land  (y  \lor  z), h_n\}$ \\
      \hline
      $\CloneS_{10}$ & $\CloneS_1 \cap \CloneR_2 \cap \CloneM$ & $\{x \land  (y  \lor  z)\}$ \\
      \hline

      $\CloneD$ & $\{f \mid f \text{ is self-dual}\}$ & $ \{(x \land \neg y) \lor (x \land \neg z) \lor (\neg y \land  \neg z)\}$ \\
      \hline
      $\CloneD_1$ & $\CloneD \cap \CloneR_2$ & $ \{(x \land y) \lor (x \land \neg z) \lor (y \land \neg z)\}$ \\
      \hline
      $\CloneD_2$ & $\CloneD \cap \CloneM$ & $ \{(x \land y) \lor (x \land z) \lor (y \land z)\}$ \\
      \hline

      $\CloneL$ & $\{f \mid f \text{ is affine}\}$ & $\{ x \xor y,\true\}$ \\
      \hline
      $\CloneL_0$ & $\CloneL \cap \CloneR_\false$ & $\{x \xor y \}$ \\
      \hline
      $\CloneL_1$ & $\CloneL \cap \CloneR_\true$ & $\{x \xor y  \xor \true \}$ \\
      \hline
      $\CloneL_2$ & $\CloneL \cap \CloneR_2$ & $\{x \xor y \xor z\}$ \\
      \hline
      $\CloneL_3$ & $\CloneL \cap \CloneD$ & $\{x  \xor  y  \xor  z  \xor  \true\}$ \\
      \hline

      $\CloneV$ & $\{f \mid  f $ is a disjunction of variables or constants$\}$ & $\{ x \lor y, \false,\true \}$ \\
      \hline
      $\CloneV_0$ & $\CloneV \cap \CloneR_0$ & $\{ x \lor y, \false\}$ \\
      \hline
      $\CloneV_1$ & $\CloneV \cap \CloneR_1$ & $\{ x \lor y, \true\}$ \\
      \hline
      $\CloneV_2$ & $\CloneV \cap \CloneR_2$ & $\{ x \lor y\}$ \\
      \hline

      $\CloneE$ & $\{f \mid  f $ is a conjunction of variables or constants$\}$ & $\{ x \land y, \false, \true \}$ \\
      \hline
      $\CloneE_0$ & $\CloneE \cap \CloneR_0$ & $\{ x \land y, \false\}$ \\
      \hline
      $\CloneE_1$ & $\CloneE \cap \CloneR_1$ & $\{ x \land y, \true\}$ \\
      \hline
      $\CloneE_2$ & $\CloneE \cap \CloneR_2$ & $\{ x \land y\}$ \\
      \hline

      $\CloneN$ & $\{f \mid f $ depends on at most one variable$\}$ & $\{ \neg x,\false,\true\}$ \\
      \hline
      $\CloneN_2$ & $\CloneN \cap \CloneR_2$ & $\{ \neg x\}$ \\
      \hline

      $\CloneI$ & $\{f \mid f \text{ is a projection or a constant}\}$ & $\{\id, \false,\true\}$ \\
      \hline
      $\CloneI_0$ & $\CloneI \cap \CloneR_0$ & $\{\id, \false\}$ \\
      \hline
      $\CloneI_1$ & $\CloneI \cap \CloneR_1$ & $\{\id, \true\}$ \\
      \hline
      $\CloneI_2$ & $\CloneI \cap \CloneR_2$ & $\{\id\}$ \\
      \specialrule{\heavyrulewidth}{0cm}{0cm}
    \end{tabular}
    \smallskip
    \caption{\label{tab:clones}%
      The list of all Boolean clones with definitions and bases, where $h_n := \bigvee^{n+1}_{i=1}\bigwedge^{n+1}_{j=1,j\neq i} x_j$ and
      $\dual{f}(a_1, \dots , a_n) = \neg f(\neg a_1 \dots , \neg a_n)$.
    }
  \end{table*}
  
  \begin{figure*}[t]
    \centering
    \includegraphics[width =\textwidth]{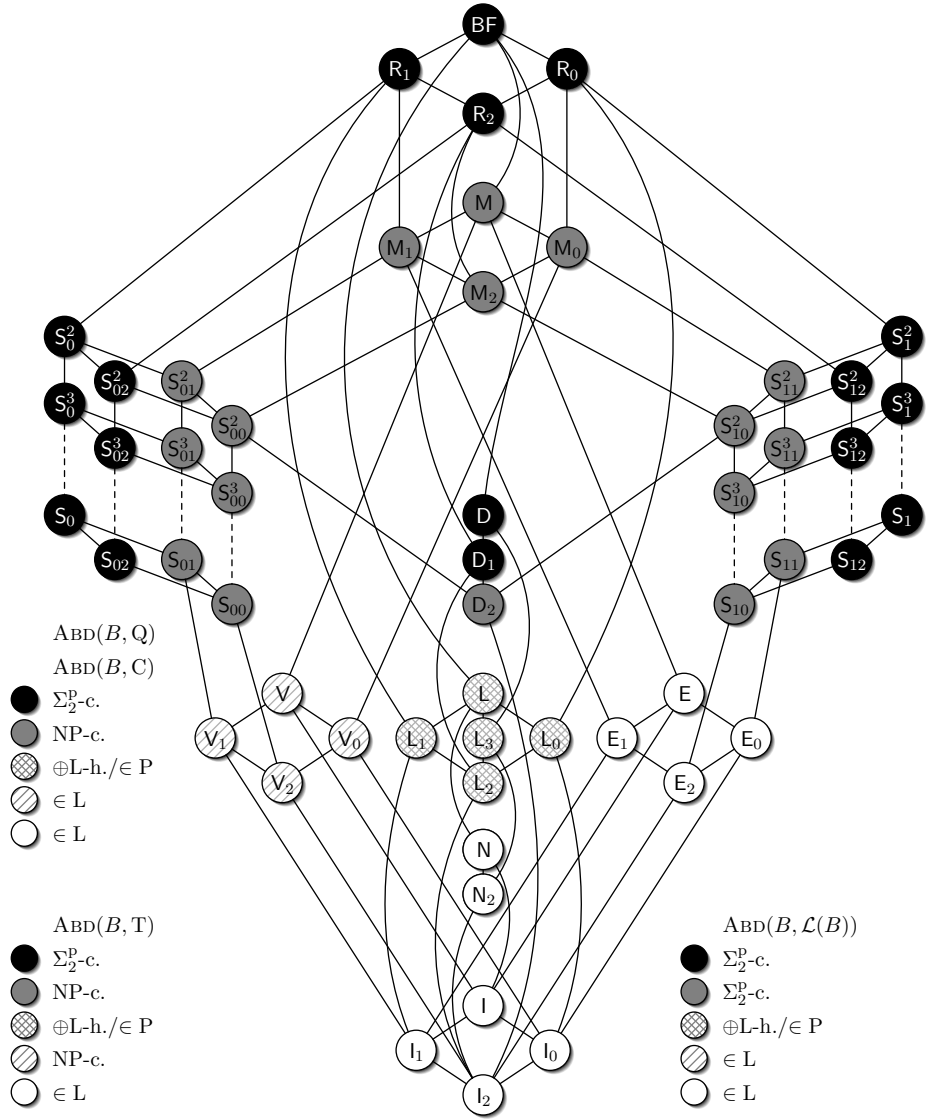}
    \caption{\label{fig:abduction_complexity}Post's lattice showing the complexity of the symmetric abduction problem $\ABD(B,\manif)$ for all sets $B$
      of Boolean functions and the most interesting restrictions $\manif$ of the manifestations. In the legend, $\sL$ abbreviates $\L$ and the suffixes ``-h'' and ``-c'' indicate hardness and completeness respectively.}
  \end{figure*}

  \begin{figure*}[t]
    \centering
    \includegraphics[width =\textwidth]{p-abduction.0} 
    \caption{\label{fig:positive_abduction_complexity} Post's lattice showing the complexity of the positive abduction problem $\ABD[\HP](B,\manif)$
    for all sets $B$ of Boolean functions and the most interesting restrictions $\manif$ of the manifestations. In the legend, $\sL$ abbreviates $\L$ and the suffixes ``-h'' and ``-c'' indicate hardness and completeness respectively.}
  \end{figure*}
\section{The Abduction Problem}\label{sec:abduction_problem}

  Let $B$ be a finite set of Boolean functions. We are interested in the propositional abduction problem
  parameterized by the set $B$ of allowed connectives. 
  We define the \emph{abduction problem for $B$-formulae} as
  \problemdef{$\ABD(B)$}
      {$\calP=(\Gamma,A,\varphi)$, where
      \begin{itemize}
        \item $\Gamma\subseteq\allFormulae(B)$ is a set of $B$-formulae,
        \item $A\subseteq \Vars{\Gamma}$ is a set of variables,
        \item $\varphi \in \allFormulae$ is a formula with $\Vars{\varphi} \subseteq \Vars{\Gamma}\setminus A$.
      \end{itemize}
      }
      {Is there a set $E \subseteq \Lits{A}$ such that $\Gamma \land E$ is satisfiable and $\Gamma \land E \models \varphi$ (or equivalently $\Gamma
      \land E \land \neg\varphi$ is unsatisfiable)?}
  The set $\Gamma$ represents the \emph{knowledge base}. The set $A$ is called the set of \emph{hypotheses} and $\varphi$ is called
  \emph{manifestation} or \emph{query}.
  Furthermore, if such a set $E$ exists, it is called an \emph{explanation} or a \emph{solution} of the abduction problem.
  It is called  a \emph{full explanation} if $\Vars{E}=A$. Observe that every explanation can be extended to a full one.
  We will consider several restrictions of the manifestations of this problem.
  To indicate them, we introduce a second argument $\manif$ meaning that $\varphi$ is required to be
  \begin{itemize}
    \item $\MQ$ (resp.\ $\MPQ$, $\MNQ$): a single literal (resp.\ positive literal, negative literal),
    \item $\MC$ (resp.\ $\MPC$, $\MNC$): a clause (resp.\ positive clause, negative clause),
    \item $\MT$ (resp.\ $\MPT$, $\MNT$): a term (resp.\ positive term, negative term),
    \item $\MF$: a $B$-formula.
  \end{itemize}
  We refer to the above defined abduction problem as \emph{symmetric} abduction, since every variable of the hypotheses $A$ may be taken positive or
  negative to construct an explanation.
  We will also consider \emph{positive} abduction, where we are interested in purely positive explanations only. To indicate this, we
  add the prefix ``$\HP$-''. Thus, for an instance $(\Gamma, A, \varphi)$ of $\ABD[\HP](B,\manif)$, every solution $E$ of $(\Gamma, A, \varphi)$
  has to satisfy $E \subseteq A$.

  The following important lemma makes clear the role of the constants in our abduction problem. 
  It often reduces the number of cases to be considered.
  
  \begin{lemma} \label{lem:constants_sometimes_available}
  Let $B$ be a finite set of Boolean functions.
  \begin{enumerate}
    \item If $\manif \in \{\MQ, \MC, \MT, \MF\}$, then
      \begin{align*}
          \ABD(B,\manif)			& \equivlogm \ABD  (B \cup \{\true\},\manif), \\
          \ABD[\HP](B,\manif) &	\equivlogm \ABD[\HP](B \cup \{\true\},\manif).
      \end{align*}
    \item If $\manif \in \{\MQ, \MC, \MT\}$ and $\lor\in [B]$, then 
          \[\ABD(B,\manif) \equivlogm \ABD(B \cup \{\false\},\manif).\]
  \end{enumerate}
  \end{lemma}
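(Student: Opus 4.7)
The plan is to prove each equivalence via two logspace many-one reductions. The direction $\ABD(B,\manif)\leqlogm\ABD(B\cup\{c\},\manif)$ for $c\in\{\true,\false\}$ is given by the identity map, since $\allFormulae(B)\subseteq\allFormulae(B\cup\{c\})$; the substantial content lies in the converse.

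For item~1, given $(\Gamma,A,\varphi)$ from $\ABD(B\cup\{\true\},\manif)$ (or its $\HP$-variant), I would introduce a fresh variable $t$ and output $(\Gamma[\true/t]\cup\{t\},\,A,\,\varphi[\true/t])$. The lone formula $t$ is a projection, hence a $B$-formula for every $B$; its presence in the knowledge base forces $t=\true$ in every satisfying assignment, so on the original variables $\Gamma'\land E$ and $\Gamma\land E$ coincide for every admissible $E$. Satisfiability and entailment thus transfer directly. Because $A$ is unchanged, the same reduction works for symmetric and for positive abduction.

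For item~2, I would exploit $\lor\in[B]$. Given $(\Gamma,A,\varphi)$, introduce a fresh variable $f$ and output $(\Gamma',A\cup\{f\},\varphi)$ with
\[
\Gamma' \,:=\, \{\,\psi\lor f \mid \psi\in\Gamma[\false/f]\,\} \,\subseteq\, \allFormulae(B).
\]
Each disjunct $\psi\lor f$ collapses to $\psi$ when $f=\false$ and to $\true$ when $f=\true$. The forward direction is direct: for an explanation $E$ of the original, $E\cup\{\neg f\}$ is an explanation of the new instance, since asserting $\neg f$ collapses each $\psi\lor f$ back to $\psi$, after which the substitution $\Gamma[\false/f]$ reinstantiates $\Gamma$.

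The key step, and the main obstacle, is the backward direction: I must show that every explanation $E'$ of the new instance contains $\neg f$, whence $E:=E'\setminus\{\neg f\}$ explains the original. Suppose $\neg f\notin E'$. Setting $f:=\true$ trivializes every $\psi\lor f$, so any assignment consistent with $E'$ satisfies $\Gamma'$. The variables of $\varphi$ lie in $\Vars{\Gamma}\setminus A$, hence are disjoint from $A\cup\{f\}$ and unconstrained by $E'$; under the restriction $\manif\in\{\MQ,\MC,\MT\}$, $\varphi$ is a literal, clause, or term whose variables can be set to falsify it (the degenerate case of a tautological clause can be detected and rerouted syntactically). This produces a model of $\Gamma'\land E'$ refuting $\varphi$, contradicting $\Gamma'\land E'\models\varphi$. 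The exclusion of $\MF$ reflects that $\varphi$ could otherwise contain $\false$, and the exclusion of $\ABD[\HP]$ reflects that the reduction crucially uses the negative literal $\neg f$ in the explanation.
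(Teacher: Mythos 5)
Your proposal is correct and follows essentially the same route as the paper: for the constant $\true$ it substitutes a fresh variable $t$ and adds the unit clause $(t)$, and for $\false$ it adds a fresh hypothesis $f$ and replaces each $\psi\in\Gamma$ by ($\psi[\false/f]\lor f$), arguing that any explanation must contain $\neg f$. The only detail worth noting is that the top-level $\lor$ must be replaced by its $B$-representation (as the paper states), which causes no blow-up since it is applied once per formula.
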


  \begin{proof}
    To reduce $\ABD(B \cup \{\true\},\manif)$ to $\ABD(B,\manif)$ we transform any instance of the first problem in replacing every
    occurrence of $\true$ by a fresh variable $t$ and adding the unit clause $(t)$ to the knowledge base. The same reduction works for $\ABD[\HP]$.
    To prove $\ABD(B \cup \{\false\},\manif)\equivlogm\ABD(B,\manif) $, let $\calP=(\Gamma,A,\psi)$  be an instance of the first problem
    and $f$ be a fresh variable. Since $\manif \in \{\MQ,\MC,\MT\}$, we can suppose w.l.o.g.\ that $\psi$ does not contain $\false$. We map $\calP$ to
    $\calP'=(\Gamma', A', \psi)$, where $\Gamma'$ is the $B$-representation of $\{\varphi[\false/f]\lor f\mid \varphi\in\Gamma\}$ and
    $A' = A \cup \{f\}$.
    \qed
  \end{proof}

  Of course this lemma holds also for purely positive/negative queries, clauses or terms, \emph{i.e.}, $\MQ,\MC,\MT$ can be replaced by
  $\MPQ,\MPC,\MPT$ or $\MNQ,\MNC,\MNT$, respectively.
  
  Observe that if $B_1$ and $B_2$ are two sets of Boolean functions such that $B_1\subseteq [B_2]$, then every function of $B_1$ can be expressed by
  a $B_2$-formula,  namely by its \emph{$B_2$-representation}. This way there is a canonical reduction between $\ABD(B_1,\manif)$ and
  $\ABD(B_2, \manif)$ if $B_1\subseteq [B_2]$: replace all $B_1$-connectives by their $B_2$-representa\-tion. Note that this reduction is not
  necessarily polynomial: Since the $B_2$-representation of some function may use some input variable more than once, the formula size may grow
  exponentially. Nevertheless we will use this reduction very frequently, avoiding an exponential blow-up by special structures of the $B_1$-formulae.

\section{The complexity of $\ABD(B)$}\label{sec:complexity_abduction}
  We commence with the symmetric abduction problem $\ABD(B)$
  The results of this section are summarized in Figure \ref{fig:abduction_complexity}.
  We will first focus on the case where $\varphi$ is a single positive literal, thus discussing the problem $\ABD(B,\MPQ)$.

\subsection{The complexity of $\ABD(B,\MPQ)$}\label{subsec:complexity_pq}
  
  \begin{theorem}\label{thm:main}
    Let $B$ be a finite set of Boolean functions. Then the symmetric abduction problem for propositional $B$-formulae with a positive literal manifestation, $\ABD(B,\MPQ)$,  is
    \begin{enumerate}
      \item $\SigPtwo$-complete if $\CloneS_{02}\subseteq [B]$ or $\CloneS_{12}\subseteq [B]$ or  $\CloneD_{1}\subseteq [B]$,
      \item $\NP$-complete if  $\CloneS_{00}\subseteq [B]\subseteq \CloneM$ or $\CloneS_{10}\subseteq [B]\subseteq \CloneM$ or $\CloneD_{2}
              \subseteq [B]  \subseteq \CloneM$,
      \item in $\P$ and $\ParityL$-hard if $\CloneL_{2}\subseteq [B]\subseteq \CloneL$, and
      \item in $\L$ in all other cases.
    \end{enumerate}
  \end{theorem}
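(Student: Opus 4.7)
The plan is to climb Post's lattice from the top, matching algorithmic upper bounds to structural properties of each clone and then establishing matching hardness lower bounds at the minimal clones on which each complexity level is attained. By Lemma~\ref{lem:constants_sometimes_available} one may freely assume that $\true\in B$, and that $\false\in B$ whenever $\lor\in[B]$, which substantially reduces the number of base cases. The $\SigPtwo$ upper bound is inherited from the general complexity of propositional abduction due to Eiter and Gottlob, so only the hardness reductions of item~1 and the algorithms and matching hardness claims of items~2--4 remain to be established.

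For the upper bounds I would first treat the monotonic branch $[B]\subseteq\CloneM$. Guess an explanation $E\subseteq\Lits{A}$ and verify it: by monotonicity of every $B$-formula, satisfiability of $\Gamma\land E$ reduces to checking whether the maximal extension of $E$ (free variables set to $\true$) satisfies $\Gamma$, and the entailment $\Gamma\land E\models q$ reduces to the analogous check after additionally fixing $q$ to $\false$. Both checks run in polynomial time, yielding $\NP$-membership for item~2. For the affine branch $[B]\subseteq\CloneL$, every $B$-formula can be rewritten as a conjunction of $\xor$-equations in logarithmic space, and Gaussian elimination then decides both satisfiability and entailment, giving the polynomial bound of item~3. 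For all remaining clones (subsets of $\CloneV$, $\CloneE$, $\CloneN$, $\CloneL_0$, $\CloneL_1$, $\CloneL_3$, or $\CloneI$) the $B$-formulas are of such a restricted syntactic shape---conjunctions or disjunctions of literals, single-variable formulas, or very simple affine fragments---that the abduction question can be decided by elementary logspace procedures.

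For the hardness claims I would attack each complexity level at its minimal clones: $\SigPtwo$-hardness directly over the bases $\{x\lor(y\land\neg z)\}$ of $\CloneS_{02}$, $\{x\land(y\lor\neg z)\}$ of $\CloneS_{12}$, and $\{(x\land y)\lor(x\land\neg z)\lor(y\land\neg z)\}$ of $\CloneD_1$, starting from a standard $\SigPtwo$-complete problem such as unrestricted propositional abduction or $\exists\forall$-satisfiability; $\NP$-hardness from propositional satisfiability for item~2; and $\ParityL$-hardness from the solvability of linear systems over $\mathbb{F}_2$ for item~3. Clone containment then propagates each lower bound along the canonical reduction that replaces every base connective by its $B$-representation, provided the resulting formulas remain of polynomial size. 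This size control is where the main difficulty lies: a $B$-representation of a function may repeat input variables, so the naive replacement can blow up exponentially. For the three $\SigPtwo$-hardness reductions I would therefore hand-design the encoding gadgets directly over each chosen base, ensuring that every gadget uses only a bounded number of copies of the original variables. The self-dual ternary base of $\CloneD_1$ is the most delicate case: its single connective is neither monotonic nor linear, so the $\exists\forall$-alternation has to be simulated by a careful combination of a few instances of this operator, while the positivity restriction on the manifestation $q$ forces us to route the final observation through an explicit output gadget before $q$ can be entailed.
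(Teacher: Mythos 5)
Your overall architecture (upper bounds per branch of Post's lattice, hardness at the minimal clones, propagation via $B$-representations with explicit size control) matches the paper's, but several of the concrete steps you sketch either would not work as stated or leave the hardest part undone. The most important point concerns item~1: you propose to hand-design $\SigPtwo$-hardness gadgets directly over the bases $\{x\lor(y\land\neg z)\}$, $\{x\land(y\lor\neg z)\}$ and the ternary self-dual base of $\CloneD_1$, and you correctly identify the $\CloneD_1$ case as delicate --- but you never construct it. The paper avoids this entirely: after adding $\true$ by Lemma~\ref{lem:constants_sometimes_available}(1), one has $\lor\in[B\cup\{\true\}]$, so Lemma~\ref{lem:constants_sometimes_available}(2) also supplies $\false$, and $[\CloneS_{02}\cup\{\false,\true\}]=[\CloneS_{12}\cup\{\false,\true\}]=[\CloneD_1\cup\{\false,\true\}]=\CloneBF$. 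All three cases thus collapse to $[B]=\CloneBF$, and the reduction is from Eiter--Gottlob's clausal abduction, rewriting each clause as an $\lor$-tree of logarithmic depth before substituting $B$-representations of $\lor$ and $\neg$ (this is exactly how the exponential blow-up you worry about is controlled). You invoke the lemma at the outset but then fail to exploit it where it matters most, and the bespoke $\CloneD_1$ gadget that your plan requires is the one piece you do not supply.

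Two further gaps. First, for item~2 a reduction ``from propositional satisfiability'' cannot work naively: the knowledge base must consist of \emph{monotonic} $B$-formulae, which cannot express clauses containing negative literals. The paper instead reduces from monotone $\TWOINTHREESAT$, keeping $\Gamma$ entirely positive and letting the (possibly negative) hypothesis literals in $E$ carry the truth assignment, with auxiliary variables $q_i$ and a single large disjunction routing the manifestation $q$; some such mechanism is indispensable and is absent from your sketch. Second, for item~3 the statement ``Gaussian elimination decides both satisfiability and entailment'' only shows that \emph{verifying} a candidate $E$ is polynomial; the existential search over the exponentially many subsets of $\Lits{A}$ is the actual content of the upper bound, and the paper relies on Zanuttini's dedicated polynomial-time algorithm for affine abduction. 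Finally, your list of ``remaining'' logspace clones wrongly includes $\CloneL_0$, $\CloneL_1$ and $\CloneL_3$: each of these contains $\CloneL_2$ and therefore falls under item~3 ($\ParityL$-hard), not item~4; the clones genuinely left for item~4 are exactly those below $\CloneV$, $\CloneE$ or $\CloneN$.
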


  \begin{remark}\label{rem:recognize_the_clone}
    For such a classification a natural question is: given $B$, how hard is it to determine the complexity of $\ABD(B,\MPQ)$? Solving this
    task requires checking whether certain clones are included in $[B]$ (for lower bounds) and whether $B$ itself is included in certain clones
    (for upper bounds). As shown in \cite{vol09}, the complexity of checking whether certain Boolean functions are included in a clone depends on the
    representation of the Boolean functions. If all functions are
    given by their truth table then the problem is in quasi-polynomial-size $\AC{0}$, while if the input functions are given in a compact way,
    \emph{i.e.}, by circuits, then the above problem becomes $\coNP$-complete.
  \end{remark}

  We split the proof of Theorem~\ref{thm:main} into several propositions.

  \begin{proposition}\label{prop:abd(PQ)-E-N-V-logspace}
    Let $B$ be a finite set of Boolean functions such that $[B] \subseteq \CloneE$ or $[B] \subseteq \CloneN$ or $[B] \subseteq \CloneV$.
    Then $\ABD(B,\MPQ) \in \L$.
  \end{proposition}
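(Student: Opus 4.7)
The plan is to split into the three cases $[B]\subseteq \CloneE$, $[B]\subseteq \CloneN$ and $[B]\subseteq \CloneV$ and to exploit in each the restricted normal form of $B$-formulae.

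If $[B]\subseteq \CloneE$, each $B$-formula in $\Gamma$ is equivalent to a conjunction of variables and constants, so $\Gamma$ is either unsatisfiable or equivalent to $\bigwedge_{x\in X}x$ for a set $X\subseteq \Vars{\Gamma}$ readily computable in logspace by one pass through $\Gamma$. Since the manifestation $q$ lies outside $A$, no set $E\subseteq \Lits{A}$ can change whether $q$ is forced: adding literals on $A$-variables either is inconsistent with $\Gamma$ or leaves the truth value of $q$ unconstrained. Hence the instance has an explanation iff $\Gamma$ is satisfiable and $q\in X$, with witness $E=\emptyset$. The case $[B]\subseteq \CloneN$ is analogous: each $B$-formula is equivalent to a literal or a constant, so $\Gamma$ is essentially a conjunction of unit literals together with a consistency check, and the answer is yes iff $q$ is forced true by $\Gamma$.

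The interesting case is $[B]\subseteq \CloneV$. Here each $B$-formula is equivalent to a positive clause (disjunction of variables) or a constant, and its underlying clause can be extracted in logspace. The heart of the proof is the following structural characterization: the instance admits an explanation iff $\Gamma$ is satisfiable and there exists a clause $c\in\Gamma$ with $q\in c$, $c\setminus\{q\}\subseteq A$, and no clause $c'\in\Gamma$ satisfies $c'\subseteq c\setminus\{q\}$ (viewed as sets of variables). Both directions hinge on the classical observation that a satisfiable purely positive CNF $\Psi$ entails a positive literal $p$ iff $\Psi$ contains $p$ as a unit clause, applied to the reduced formula $\Gamma_E$. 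Given such a $c$, the explanation $E=\{\neg a:a\in c\setminus\{q\}\}$ (extended arbitrarily on $A\setminus(c\setminus\{q\})$) turns $c$ into the unit clause $\{q\}$ in $\Gamma_E$, while the ``no-subclause'' condition prevents any empty clause from appearing in $\Gamma_E$; conversely, any explanation must produce such a $c$ through its false-literal part for entailment and consistency to both hold. The algorithm then enumerates candidate clauses $c$ of $\Gamma$ and, for each passing the two syntactic tests, sweeps over $\Gamma$ to verify the subset condition; two nested logspace counters together with the logspace translation of $B$-formulae to clauses keep everything in $\L$.

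The main obstacle is the $\CloneV$ case, specifically verifying that the ``true'' part of $E$ cannot compensate for an otherwise inconsistent ``false'' part, which is what ensures that the simple syntactic characterization is already complete. Once this is settled, the $\L$ upper bound follows by routine bookkeeping, and the $\CloneE$ and $\CloneN$ cases drop out immediately from their normal forms.
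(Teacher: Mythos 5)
Your proof is correct and follows essentially the same route as the paper: the same three-way case split, with the $\CloneE$/$\CloneN$ cases reduced to observing that the empty set is an explanation whenever any explanation exists, and the $\CloneV$ case handled by the same syntactic characterization (the paper phrases your no-subclause condition as satisfiability of $\Gamma[x_1/\false,\ldots,x_k/\false]$, which for sets of positive clauses is exactly the same test). One small slip: the explanation $E=\{\neg a \mid a\in c\setminus\{q\}\}$ must not be ``extended arbitrarily'' on $A\setminus(c\setminus\{q\})$ with further negative literals, since that could create an empty clause and destroy consistency; but $E$ itself (unextended, or extended only positively) already witnesses existence, so the argument stands.
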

  
  \begin{proof}
    Let $\calP=(\Gamma,A,q)$ be an instance of $\ABD(B,\MPQ)$.
    
    For $[B] = \CloneN$ or $\CloneE$, $\Gamma$ is equivalent to a set of literals,
    hence $\calP$ has the empty set as a solution if $\calP$ possesses a solution at all.
    Finally notice that satisfiability of a set of $\CloneN$-formulae can be tested in logarithmic space \cite{sch05}.
    
    For $[B] = \CloneV$ each formula $\varphi \in \Gamma$ is equivalent to either a constant or disjunction.
    It holds that  $(\Gamma,A,q)$ has a solution if and only if  $\Gamma$ contains a formula
    $\varphi \equiv q \lor x_1 \lor \cdots \lor x_k$ such that $\{x_1, \dots x_k\} \subseteq A$, and $\Gamma[x_1/\false,\ldots,x_k/\false]$ is satisfiable. This can be
    tested in logarithmic space, as substitution of symbols and  evaluation of $\CloneV$-formulae can all be performed in logarithmic space.
    \qed
  \end{proof}

  \begin{proposition}\label{prop:abd(PQ)-L-P}
    Let $B$ be a finite set of Boolean functions such that $\CloneL_2 \subseteq [B] \subseteq \CloneL$. 
    Then $\ABD(B,\MPQ)$ is $\ParityL$-hard and contained in $\P$.
  \end{proposition}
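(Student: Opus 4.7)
The plan for containment in $\P$ is to exploit the affine structure: since $[B]\subseteq\CloneL$, every $B$-formula is equivalent to a linear equation over $\mathbb{F}_2$, so $\Gamma$ translates into a system $Mx=b$ whose solution set is an affine subspace $V\subseteq\mathbb{F}_2^n$. Fix a parameterisation $V = v_0 + U$ by computing a basis of $U$ via Gaussian elimination; each variable $y\in\Vars{\Gamma}$ then induces an affine expression $y|_V = \alpha_y + \beta_y\cdot\lambda$ on the parameters. I would argue that an explanation exists iff $V\neq\emptyset$ and either (i) $\beta_q = 0$ with $\alpha_q = 1$, in which case the empty set already explains $q$, or (ii) $\beta_q \neq 0$ and $\beta_q\in\mathrm{span}\{\beta_a : a\in A\}$; in case (ii), any minimal combination $\beta_q = \sum_{a\in A'}\beta_a$ with $\{\beta_a\}_{a\in A'}$ linearly independent produces an automatically consistent system of constraints on $\lambda$, and the values $c_a$ can be chosen so that the forced value of $q$ is $1$ rather than $0$. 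All of these tests reduce to linear-algebra checks and run in polynomial time.

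\textbf{Lower bound.} It suffices to show $\ParityL$-hardness for $[B] = \CloneL_2$, because the canonical reduction (replacing each connective by its $B$-representation) gives $\ABD(\CloneL_2, \MPQ) \leqlogm \ABD(B, \MPQ)$ whenever $\CloneL_2\subseteq[B]$. The crucial observation is that every $\CloneL_2$-formula is satisfied by the all-ones assignment (since $\CloneL_2\subseteq\CloneR_1$, each function in $\CloneL_2$ is $1$-reproducing), so any $\Gamma\subseteq\allFormulae(\CloneL_2)$ is automatically satisfiable; consequently, when $A = \emptyset$, an explanation exists iff $\Gamma\models q$. The plan is therefore to reduce from the implication problem for $\CloneL_2$-formulas with a positive literal consequent, mapping $(\Gamma, q)$ to the abduction instance $(\Gamma, \emptyset, q)$. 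The main obstacle is invoking or proving $\ParityL$-hardness for this restricted implication problem: the $\ParityL$-completeness of implication for affine formulas is established in the literature~\cite{rei03,bemethvo08imp}, but if the ``positive literal consequent'' restriction is not covered directly, I would instead give a direct reduction from a $\ParityL$-complete $\mathbb{F}_2$-linear-algebra problem (say, deciding whether a given vector lies in the span of given vectors), realising the vectors as the coefficient vectors $\beta_a$ of hypothesis variables and the target vector as $\beta_q$, with auxiliary variables absorbing the parity restrictions inherent to $\CloneL_2$-formulas.
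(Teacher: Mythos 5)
Your upper bound is a self-contained version of what the paper simply outsources: the paper reduces to abduction over knowledge bases given as systems of linear equations and cites \cite{zanuttini03} for membership in $\P$. Your characterisation (an explanation exists iff the solution space $V$ is nonempty and either $q$ is forced to $\true$ on all of $V$, or $\beta_q$ is a nonzero vector in the span of $\{\beta_a : a\in A\}$) is correct: constraining $a=c_a$ for $a$ ranging over a subset with linearly independent $\beta_a$'s is consistent for every choice of the $c_a$, and flipping one $c_a$ flips the forced value of $q$, so the forced value can always be made $\true$. This part is fine and arguably more informative than the paper's citation.

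The lower bound has two weak points. First, the opening claim that the canonical reduction yields $\ABD(\CloneL_2,\MPQ)\leqlogm\ABD(B,\MPQ)$ whenever $\CloneL_2\subseteq[B]$ is not justified as stated: as the paper itself warns, substituting $B$-representations for connectives can blow up the formula size exponentially when a representation uses some input variable more than once. The paper avoids this by writing the affine formulas it constructs as ternary $\xor$-trees of logarithmic depth before substituting; since you build the formulas yourself in the reduction, you can do the same, but this step must be made explicit. Second, the reduction itself is left as a plan with an unresolved dependency: you either need $\ParityL$-hardness of implication for affine formulas with a positive-literal consequent from the literature, or you fall back to a direct linear-algebra construction that you do not carry out (and which must also absorb the constraint that $\CloneL_2$-formulas express only equations with zero constant term and an odd number of variables). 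For comparison, the paper's concrete reduction is from the $\ParityL$-complete problem of deciding that a linear system $S$ over $GF(2)$ has \emph{no} solution ($\ParityL$ is closed under complement): each equation becomes an affine formula, $q$ is $\xor$-ed into exactly those formulas falsified by the all-ones assignment (so that the resulting $\Gamma'$ is guaranteed satisfiable), and then $S$ has no solution iff $\Gamma'\land\neg q$ is unsatisfiable iff $(\Gamma',\emptyset,q)$ has an explanation. Your observation that $\CloneL_2$-formulas are $\true$-reproducing, so that with $A=\emptyset$ abduction degenerates to entailment, is exactly the structural insight the paper exploits; but as written your hardness argument is not complete.
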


  \begin{proof}
    In this case, deciding whether an instance of $\ABD(B,\MPQ)$ has a solution logspace reduces to the problem of deciding whether a propositional
    abduction problem in which the knowledge base is a set of linear equations has a solution. This has been shown to be decidable in polynomial time
    in  \cite{zanuttini03}.
  
    As for the $\ParityL$-hardness, let $B$ be such that $[B]=\CloneL_2$.
    Consider the $\ParityL$-complete problem to determine whether a system of linear equations $S$ over $GF(2)$ has a solution \cite{budaheme92}.
    Note that  $\ParityL$ is closed under complement, so deciding whether such a system has no solution is also  $\ParityL$-complete.
    Let $S=\{s_1,\ldots,s_m\}$ be such a system of linear equations over variables $\{x_1,\ldots,x_n\}$.
    Then, for all $1 \leq i \leq m$, the equation $s_i$ is of the form $ x_{i_1} + \cdots + x_{i_{n_i}} = c_i  \pmod 2$ with $c_i \in \{0,1\}$
    and $i_1,\ldots,i_{n_i} \in \{1,\ldots,n\}$.
    We map $S$ to a set of affine formulae $\Gamma=\{\varphi_1,\ldots,\varphi_m\}$ over variables $\{x_1,\ldots,x_n,q\}$ via
    \begin{center}
    \begin{tabular}{lll}
      $\varphi_i := \ x_{i_1} \xor \cdots \xor x_{i_{n_i}} \xor \true \ $	& if $c_i = 0$	& and	 \\
      $\varphi_i := \ x_{i_1} \xor \cdots \xor x_{i_{n_i}}$ 					 	    & if $c_i = 1$.
    \end{tabular}
    \end{center}
    Now define 
    \begin{align*}
      \Gamma' := \,
            & \{\varphi_i \xor q \mid \varphi_i \in \Gamma  \text{ is such that }\varphi_i(1,\ldots, 1)=0\} \\
      \cup \; & \{\varphi_i \mid \varphi_i \in \Gamma \text{ is such that }\varphi_i(1,\ldots, 1)=1\}.
    \end{align*}
    $\Gamma'$ is obviously satisfied by the assignment mapping all propositions to $\true$.
    It furthermore holds that $S$ has no solution if and only if $\Gamma' \land \neg q$ is unsatisfiable. Hence, we obtain that $S$ has no solution
    if and only if the propositional abduction problem $(\Gamma',\emptyset,q)$ has an explanation. 

    It remains to transform $\Gamma'$ into a set of $B$-formulae in logarithmic space. Since $[B]=\CloneL_2$, we have $x \xor y \xor z\in[B]$.
    We insert parentheses in every formula $\varphi$ of $\Gamma'$ in such a way that we get a ternary $\xor$-tree of logarithmic depth
    whose leaves are either a proposition or the constant 1. Then we replace every node $\xor$ by its equivalent $B$-formula. Thus we get a
    $(B\cup\{1\})$-formula of size  polynomial in the size of the original one. Lemma~\ref{lem:constants_sometimes_available} allows to conclude.
    \qed
  \end{proof}

    Observe that in the cases $[B] \subseteq \CloneL$, $[B] \subseteq \CloneE$, and $[B] \subseteq \CloneV$, 
    the abduction problem for $B$-formulae is self-reducible.
    Roughly speaking, this means that given an instance $\calP$ and a literal $l$,
    we can efficiently compute an instance $\calP'$ such that
    the question whether there exists an explanation $E$ with $l \in E$ 
    reduces to the question whether $\calP'$ admits solutions. 
    It is well-known that for self-reducible problems whose decision problem is in $\P$, the
    lexicographic first solution can be computed in $\FP$. 
    It is an easy exercise to extend this algorithm to enumerate all
    solutions in lexicographic order with polynomial delay and polynomial space. 
    Thus, if $[B] \subseteq \CloneL$ or $[B] \subseteq \CloneE$ or $[B] \subseteq \CloneV$, 
    the explanations of $\ABD(B,\MPQ)$ can be enumerated with polynomial delay and polynomial space 
    according to Proposition~\ref{prop:abd(PQ)-E-N-V-logspace} and~\ref{prop:abd(PQ)-L-P}.

  \begin{proposition}\label{prop:abd(PQ)-M-NP-c}
    Let $B$ be a finite set of Boolean functions such that $\CloneS_{00}  \subseteq [B] \subseteq \CloneM$ or $\CloneS_{10}  \subseteq [B]
    \subseteq \CloneM$ or $\CloneD_{2}  \subseteq [B] \subseteq \CloneM$. Then $\ABD(B,\MPQ)$ is $\NP$-complete.
  \end{proposition}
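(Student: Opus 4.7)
The plan has two parts: membership in $\NP$ when $[B]\subseteq\CloneM$, and $\NP$-hardness for each of the three minimal clones $\CloneS_{00}$, $\CloneS_{10}$, $\CloneD_2$.

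For the upper bound I would guess a full explanation $E\subseteq\Lits{A}$ and verify it in polynomial time. Since $[B]\subseteq\CloneM$, every formula of $\Gamma$ is monotone in all its variables. A monotone formula is satisfied in the subcube fixed by a partial assignment iff the \emph{maximal} element of that subcube satisfies it; hence, to check that $\Gamma\wedge E$ is satisfiable one evaluates $\Gamma$ on the extension of $E$ that maps all remaining free variables (including $q$) to $\true$, and to check $\Gamma\wedge E\models q$ one evaluates $\Gamma$ on the extension of $E\cup\{\neg q\}$ that maps the remaining free variables to $\true$ and asks that this assignment falsifies $\Gamma$. Both are polynomial-time checks.

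For the lower bound I would reduce 3-SAT to $\ABD(B,\MPQ)$, exploiting the fact that the consistency requirement on $\Gamma\wedge E$ together with the symmetric (negative) literals allowed in $E$ lets a monotone knowledge base enforce constraints whose valid explanations encode truth assignments. Consider first the minimal generating case $[B]=\CloneS_{00}$ with connective $x\vee(y\wedge z)$. Given $\varphi=\bigwedge_i C_i$ over $x_1,\dots,x_n$ with $C_i=l_{i,1}\vee l_{i,2}\vee l_{i,3}$, I would introduce two hypothesis variables $a_k^+,a_k^-$ per $x_k$, take $A=\{a_k^+,a_k^-:1\le k\le n\}$, and build a knowledge base containing the canonicality clauses $a_k^+\vee a_k^-$ (forcing valid explanations to set at least one of each pair positively, since otherwise $\Gamma\wedge E$ would be inconsistent) together with the encoding formula $\Phi=q\vee\bigvee_i\psi_i^{\mathrm{bad}}$, where $\psi_i^{\mathrm{bad}}=w_{i,1}\wedge w_{i,2}\wedge w_{i,3}$ is built from the opposite-polarity variables $w_{i,j}=a_k^-$ if $l_{i,j}=x_k$ and $w_{i,j}=a_k^+$ if $l_{i,j}=\neg x_k$. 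A canonical explanation that sets exactly one of each pair positively encodes a truth assignment $\tau$, and one shows that $\Gamma\wedge E\models q$ iff every $\psi_i^{\mathrm{bad}}$ evaluates to $0$ iff $\tau$ satisfies $\varphi$; non-canonical choices that set both $a_k^+,a_k^-$ positively can only make more $\psi_i^{\mathrm{bad}}$ evaluate to $1$, so they cannot rescue an unsatisfiable $\varphi$.

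The hard part is that $\wedge\notin\CloneS_{00}$, so $\psi_i^{\mathrm{bad}}$ is not itself a $B$-formula. The key trick is that the absorbed form $q\vee(a\wedge b\wedge c)$ does lie in $\CloneS_{00}$ and admits the iterated $B$-representation $q\vee(a\wedge(q\vee(b\wedge c)))$ using only $x\vee(y\wedge z)$; combining the resulting sub-formulas $(q\vee\psi_i^{\mathrm{bad}})$ by the $B$-expressible disjunction $x\vee(y\wedge y)$ then produces $\Phi$ as a polynomial-size $B$-formula. The cases $\CloneS_{10}$ and $\CloneD_2$ are handled by first invoking Lemma~\ref{lem:constants_sometimes_available}(1) to assume $\true\in B$; then $\true\wedge(y\vee z)=y\vee z$ yields $\vee$ in the $\CloneS_{10}$ case and $\mathrm{maj}(x,y,\true)=x\vee y$ yields $\vee$ in the $\CloneD_2$ case, and in both augmented clones the construction above can be replayed. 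Finally, the extension from these minimal clones to arbitrary $B$ with $\CloneS_{00}\subseteq[B]\subseteq\CloneM$ (respectively for $\CloneS_{10}$, $\CloneD_2$) follows by the canonical replacement of each connective by its $B$-representation, which incurs only polynomial blow-up since the connectives involved have bounded arity.
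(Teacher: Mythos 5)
Your proof is correct in substance but follows a genuinely different route from the paper's. For membership the two arguments essentially coincide: substitute $E$ into $\Gamma$ and exploit monotonicity so that both the satisfiability and the entailment test reduce to a single evaluation at the all-$\true$ extension. For hardness, however, the paper reduces from $\TWOINTHREESAT$ and first applies both parts of Lemma~\ref{lem:constants_sometimes_available} to pass to $[B\cup\{\false,\true\}]=\CloneM$, after which $\land$ and $\lor$ are freely available; its knowledge base consists of the original positive clauses, additional ``at least two per clause'' clauses, and one large disjunction containing $q$. You instead reduce from $\ThreeSAT$ via a dual-rail encoding $a_k^+,a_k^-$ and keep the construction inside $\CloneS_{00}$ itself by means of the absorption identity $q\lor(a\land b\land c)\equiv q\lor(a\land(q\lor(b\land c)))$, so that the constant $\false$ is never needed. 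Your reduction is sound: the correspondence between explanations respecting the clauses $a_k^+\lor a_k^-$ and satisfying assignments of $\varphi$ works out, and your monotonicity remark correctly disposes of non-canonical explanations. The paper's version is syntactically simpler once $\CloneM$ is available; yours is more self-contained at the clone level.

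Two technical points need repair. First, your justification for the final polynomial size bound --- ``only polynomial blow-up since the connectives involved have bounded arity'' --- is not the right reason and is false as stated: a $B$-representation of a bounded-arity connective may use an argument more than once (indeed $x\lor y$ realized as $x\lor(y\land y)$ duplicates $y$), so substituting representations into a linear-depth chain $F_1\lor(F_2\lor(\cdots))$ yields exponential size. The correct fix, which the paper states explicitly, is to arrange the outer disjunction as a balanced $\lor$-tree of logarithmic depth before replacing connectives by their $B$-representations. Second, for the cases $\CloneS_{10}$ and $\CloneD_2$ you only exhibit $\lor$ in the $\true$-augmented clones, but replaying your construction also requires $x\lor(y\land z)$ (or $\land$); this does hold, since $[\CloneS_{10}\cup\{\true\}]=\CloneM_1$ and $[\CloneD_2\cup\{\true\}]=\CloneS^2_{01}\supseteq\CloneS_{01}\ni x\lor(y\land z)$, but it should be verified rather than asserted.
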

  
  \begin{proof}
    We first show that $\ABD(B,\MPQ)$ is efficiently verifiable. 
    Let $\calP=(\Gamma,A,q)$ be an $\ABD(B,\MPQ)$-instance and $E \subseteq \Lits{A}$ be a candidate for an explanation. 
    Define $\Gamma'$ as the set of formulae obtained from $\Gamma$ by replacing each occurrence of the proposition $x$ with $\false$ if $\neg x \in E$,
    and each occurrence of the proposition $x$ with $\true$ if $x \in E$.
    It holds that $E$ is a solution for $\calP$  if $\Gamma'$ is satisfiable and $\Gamma'[q/\false]$ is not. 
    These tests can be performed in polynomial time, because $\Gamma'$ is a set of monotonic formulae \cite{lew79}.
    Hence, $\ABD(B,\MPQ) \in \NP$.
    
    Next we give a reduction from the $\NP$-complete problem $\TWOINTHREESAT$, \emph{i.e.}, the problem to decide whether there exists an assignment that
    satisfies exactly two propositions in each clause of a given formula in conjunctive normal form with exactly three positive propositions
    per clause, see \cite{sch78}.
    Let $\varphi := \bigwedge_{i \in I} c_i$ with $c_i=x_{i1} \lor x_{i2} \lor x_{i3}$, $i \in I$, be the given formula.
    We map $\varphi$ to the following instance $\calP=(\Gamma,A,q)$.
    Let $q_i$, $i \in I$, be fresh, pairwise distinct propositions and let $A:= \Vars{\varphi} \cup \{q_i\mid i \in I\}$.
    The set $\Gamma$ is defined as
    \begin{align}
      \Gamma :=\,& \label{eq:q-abd-s01-1}
      \{c_i \mid i \in I \} \\
      \cup\; & \label{eq:q-abd-s01-2}
      \{ x_{i1} \lor x_{i2} \lor q_i, x_{i1} \lor x_{i3} \lor q_i, x_{i2} \lor x_{i3} \lor q_i \mid i \in I \} \\  
      \cup\; & \textstyle \label{eq:q-abd-s01-3}
      \{\bigvee_{i \in I} \bigwedge_{j=1}^3 x_{ij} \lor \bigvee_{i \in I} q_i \lor q\}.
    \end{align}
    We show that there is an assignment that sets to true exactly two propositions in each clause of $\varphi$ if and only if $\calP$ has a solution. 
    First, suppose that there exists an assignment $\sigma$ such that for all $i \in I$, there is a permutation $\pi_i$ of $\{1,2,3\}$ such that
    \ $\sigma(x_{i\pi_i(1)})=\false$ and $\sigma(x_{i\pi_i(2)})=\sigma(x_{i\pi_i(3)})=\true$. 
    Thus \eqref{eq:q-abd-s01-1} and \eqref{eq:q-abd-s01-2} are satisfied,
    and \eqref{eq:q-abd-s01-3} is equivalent to $\bigvee_{i \in I} q_i \lor q$.
    From this, it is readily observed that 
    $\{ \neg x \mid \sigma(x)=\false\} \cup \{\neg q_i \mid i \in I\}$ is a solution to $\calP$.

    Conversely, suppose that $\calP$ has an explanation $E$ that is w.l.o.g.\ full.
    Then $\Gamma \land E$ is satisfiable and $\Gamma \land E \models q$. 
    Let  $\sigma \colon \Vars{\Gamma}  \longrightarrow \{\false,\true\}$ be an assignment that satisfies $\Gamma \land E$. Then, for any $x \in A$,
    $\sigma(x)=\false$ if $\neg x \in E$, and $\sigma(x)=\true$ otherwise. Since $\Gamma \land E$ entails $q$
    and as the only occurrence of $q$ is in \eqref{eq:q-abd-s01-3}, 
    we obtain  that
    $\sigma$ sets to $\false$ each $q_i$ and at least one proposition in each clause of $\varphi$.
    Consequently, from \eqref{eq:q-abd-s01-2} it follows that $\sigma$ sets to $\true$  at least two propositions in each clause of $\varphi$.
    Therefore, $\sigma$ sets to $\true$ exactly two propositions in each clause of $\varphi$.
    
    It remains to show that $\calP$ can be transformed into an $\ABD(B,\MPQ)$-instance for all considered $B$.
    Observe that $\lor \in [B\cup\{1\}]$ and $[\CloneS_{00}\cup\{\false, \true\}] = [\CloneD_{2}\cup\{\false, \true\}] = [\CloneS_{10}\cup\{\false,
    \true\}] = \CloneM$. 
    Therefore due to Lemma~\ref{lem:constants_sometimes_available} it suffices to consider the case $[B]=\CloneM$.
    Using the associativity of $\lor$ rewrite \eqref{eq:q-abd-s01-3} as an $\lor$-tree of logarithmic depth and replace all the connectives in
    $\Gamma$ by their B-representation ($\lor,\land \in [B]$).  
    \qed
  \end{proof}

  \begin{proposition}\label{prop:abd(PQ)-SigP2-c}
    Let $B$ be a finite set of Boolean functions such that $\CloneS_{02} \subseteq [B]$ or $\CloneS_{12} \subseteq [B]$ or $\CloneD_{1} \subseteq [B]$.
    Then $\ABD(B,\MPQ)$ is $\SigPtwo$-complete.
  \end{proposition}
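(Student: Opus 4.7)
The plan is to establish $\SigPtwo$-membership by a direct guess-and-check and then to reduce the already $\SigPtwo$-complete problem $\ABD(\allFormulae,\MPQ)$ to $\ABD(B,\MPQ)$ in each of the three minimal cases $[B]\supseteq\CloneS_{02}$, $[B]\supseteq\CloneS_{12}$, and $[B]\supseteq\CloneD_{1}$.

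For membership, I would guess in polynomial time a candidate $E\subseteq\Lits{A}$ and then perform two $\NP$-oracle calls, one verifying satisfiability of $\Gamma\land E$ and the other verifying unsatisfiability of $\Gamma\land E\land\neg q$. This places the problem in $\NP^{\NP}=\SigPtwo$.

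For hardness I would use the same template in the three cases. From an instance $(\Gamma,A,q)$ of $\ABD(\allFormulae,\MPQ)$---whose $\SigPtwo$-hardness follows from the general result of Eiter and Gottlob~\cite{eiter-gottlob:1995} by a standard renaming trick replacing an arbitrary manifestation $\psi$ by a fresh variable $q$ together with $q\leftrightarrow\psi$ in $\Gamma$---I would first Tseitin-flatten $\Gamma$ into a conjunction of clauses of constant size over fresh auxiliary variables. I would then extend the hypothesis set by two fresh variables $t_\false,t_\true$, together with $B$-formula gadgets in the knowledge base that, in every model consistent with a successful explanation, force $t_\false=\false$ and $t_\true=\true$. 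Every short clause could then be rewritten in the base of $[B]$, plugging $t_\false,t_\true$ wherever a constant is required. For $[B]\supseteq\CloneS_{02}$, the base function $x\lor(y\land\neg z)\equiv(x\lor y)\land(z\to x)$ yields both $\lor$ and $\to$ once a constant can be substituted for an input; the case $[B]\supseteq\CloneS_{12}$ is handled dually via $x\land(y\lor\neg z)$. For $[B]\supseteq\CloneD_{1}$, writing $m(x,y,z):=(x\land y)\lor(x\land\neg z)\lor(y\land\neg z)$, one checks $m(x,y,t_\false)\equiv x\lor y$ and $m(x,y,t_\true)\equiv x\land y$, so that the simulated constants restore the full Boolean expressivity.

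The main obstacle I expect is the $\CloneD_{1}$ case. Since $\CloneD_{1}\subseteq\CloneD\cap\CloneR_2$, no $\CloneD_{1}$-formula can by itself output a constant nor apply an unguarded negation, so the entire encoding must be routed through $t_\false,t_\true$. I would need to exhibit an explicit $\CloneD_{1}$-gadget that both forces $t_\false,t_\true$ to their intended values and keeps $\Gamma'\land E$ satisfiable for the intended explanation $E$, while also verifying that a spurious explanation assigning $t_\false,t_\true$ the ``wrong'' way cannot accidentally entail $q$. Analogous but milder checks are needed for $\CloneS_{02}$ and $\CloneS_{12}$, where negation is already available through the $\neg z$ in the base but constants still have to be simulated. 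Once these gadgets are in place, all three reductions are logspace-computable and conclude the $\SigPtwo$-hardness argument.
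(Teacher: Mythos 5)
Your membership argument and the overall hardness template (reduce from Eiter--Gottlob's $\SigPtwo$-complete clausal abduction, re-express clauses over the clone after simulating the constants) coincide with the paper's. But the one step you flag as the ``main obstacle'' --- the gadgets for $t_\false$ and $t_\true$ --- is a genuine gap, and as literally stated your plan for $t_\false$ cannot work: all three clones $\CloneS_{02}$, $\CloneS_{12}$, $\CloneD_1$ lie inside $\CloneR_1$, so every $B$-formula is $\true$-reproducing and the all-true assignment satisfies any knowledge base you can write. Hence no knowledge-base gadget can entail $\neg t_\false$, i.e.\ you cannot ``force $t_\false=\false$'' in all models consistent with an explanation by formulas in $\Gamma$ alone. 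Your parenthetical ``in every model consistent with a successful explanation'' points toward the fix but does not supply it.

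The paper resolves exactly this in Lemma~\ref{lem:constants_sometimes_available}, which is why its proof of the proposition is short: simulating $\true$ is done by a fresh variable $t$ with the unit clause $(t)$ added to $\Gamma$; simulating $\false$ is done \emph{not} by forcing, but by adding a fresh variable $f$ to the hypothesis set $A$ and replacing every $\varphi\in\Gamma$ by $\varphi[\false/f]\lor f$. An explanation that sets $f$ to $\true$ trivializes the entire knowledge base, which then cannot entail the manifestation $q$ (as $q\notin A$), so only explanations containing $\neg f$ survive, and for those the original semantics is restored. Note the order of operations: once $\true$ is available, $\lor\in[B\cup\{\true\}]$ for all three clones, which is what makes the guard $\lor f$ expressible; and since $[\CloneS_{02}\cup\{\false,\true\}]=[\CloneS_{12}\cup\{\false,\true\}]=[\CloneD_1\cup\{\false,\true\}]=\CloneBF$, the lemma collapses all three cases to $[B]=\CloneBF$ at once, so no case-by-case gadgets (and no Tseitin flattening --- the Eiter--Gottlob instance is already clausal) are needed. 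The only remaining care is the potential size blow-up of $B$-representations, which the paper avoids by rewriting each clause as an $\lor$-tree of logarithmic depth before substituting the representations of $\lor$ and $\neg$. If you import that lemma (or reprove its $\false$-guard trick), your argument closes; without it, the $\CloneD_1$ and $\CloneS_{02}$ cases remain unproved.
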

  
  \begin{proof}
    Membership in $\SigPtwo$ is easily seen to hold: given an instance $(\Gamma,A,q)$,
    guess an explanation $E$ and subsequently verify that $\Gamma \land E$ is satisfiable and $\Gamma \land E \land \neg q$ is not.
    
    Observe that $\lor \in [B\cup\{\true\}]$. By virtue of Lemma~\ref{lem:constants_sometimes_available} and the fact that
    $[\CloneS_{02} \cup \{\false,\true\}] = [\CloneS_{12} \cup \{\false,\true\}] = [\CloneD_{1} \cup \{\false,\true\}] = \CloneBF$,  
    it suffices to consider the case $[B] = \CloneBF$.
    In \cite{eiter-gottlob:1995} it has been shown that the propositional abduction problem
    remains $\SigPtwo$-complete when the knowledge base $\Gamma$ is a set of  clauses.
    From such an instance $(\Gamma, A, q)$ we build an instance of $\ABD(B,\MPQ)$ by rewriting first each clause as an $\lor$-tree of logarithmic depth
    and then replacing the occurring connectives $\lor$ and $\neg$ by their $B$-representation, thus concluding the proof.
    \qed
  \end{proof}

\subsection{Variants of $\ABD(B)$} \label{subsect:abduction-variants}
We now consider the symmetric abduction problem for different variants on the manifestations: clause, term and $B$-formula. 
Let us first make a remark on the cases where the manifestation is a (not necessarily positive) literal or a negative literal.

  \begin{remark}\label{rem:Q-PQ-NQ}
    $\ABD(B,\MQ)$ obeys the same classification as $\ABD(B,\MPQ)$ since all bounds, upper and lower, easily carry over.
    For $\ABD(B,\MNQ)$ the problem becomes trivial if $[B] \subseteq \CloneM$. For $[B] \subseteq \CloneL$ $\ABD(B,\MNQ)$ is solvable in
    polynomial time according to \cite{zanuttini03}.
    For the remaining clones (\emph{i.e.}, for $\CloneS_{02} \subseteq [B]$, $\CloneS_{12} \subseteq [B]$, and $\CloneD_1 \subseteq [B]$), 
    we can again easily adapt the proofs of $\ABD(B,\MPQ)$. 
    This way we obtain a dichotomous classification for $\ABD(B,\MNQ)$ into $\P$-complete and $\SigPtwo$-complete cases; thus skipping
    the intermediate $\NP$ level.
  \end{remark}

  For clauses, it is obvious that $\ABD(B,\MPQ) \leqlogm \ABD(B,\MPC)$. Therefore, all hardness results continue to hold for the $\ABD(B,\MPC)$.
  It is an easy exercise to prove that all algorithms that have been developed for a single query can be naturally extended to clauses.
  Therefore, the complexity classifications for the problems $\ABD(B,\MPC)$, $\ABD(B,\MC)$ and $\ABD(B,\MNC)$ are exactly the same as
  for $\ABD(B,\MPQ)$, $\ABD(B,\MQ)$ and $\ABD(B,\MNQ)$, respectively. 

  \begin{theorem}\label{thm:main_variant_PC}
    Let $B$
    be a finite set of Boolean functions. Then the symmetric abduction problem for propositional $B$-formulae with a positive clause manifestation, $\ABD(B,\MPC)$, is
    \begin{enumerate}
      \item $\SigPtwo$-complete if $\CloneS_{02}\subseteq [B]$ or $\CloneS_{12}\subseteq [B]$ or  $\CloneD_{1}\subseteq [B]$,
      \item $\NP$-complete if  $\CloneS_{00}\subseteq [B]\subseteq \CloneM$ or $\CloneS_{10}\subseteq [B]\subseteq \CloneM$ or $\CloneD_{2}
              \subseteq [B]  \subseteq \CloneM$,
      \item in $\P$ and $\ParityL$-hard if $\CloneL_{2}\subseteq [B]\subseteq \CloneL$, and
      \item in $\L$ in all other cases.
    \end{enumerate}
  \end{theorem}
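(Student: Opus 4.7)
The plan is to piggyback on Theorem~\ref{thm:main}: all four complexity levels will be obtained by combining a trivial hardness reduction with straightforward generalisations of the upper-bound algorithms used for $\ABD(B,\MPQ)$.

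For the hardness bounds, I would observe that a positive literal is itself a positive clause, so the identity map gives $\ABD(B,\MPQ)\leqlogm\ABD(B,\MPC)$. This immediately transports the $\SigPtwo$-, $\NP$- and $\ParityL$-hardness results of Propositions~\ref{prop:abd(PQ)-SigP2-c}, \ref{prop:abd(PQ)-M-NP-c} and \ref{prop:abd(PQ)-L-P} to $\ABD(B,\MPC)$ without extra work.

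For the matching upper bounds I would extend each of the algorithms of Section~\ref{subsec:complexity_pq} to a positive clause $\varphi=q_1\lor\cdots\lor q_k$. The $\SigPtwo$ algorithm is unaffected: guess $E\subseteq\Lits{A}$ and check in $\coNP$ that $\Gamma\land E$ is satisfiable and $\Gamma\land E\land\neg q_1\land\cdots\land\neg q_k$ is not. For $\CloneS_{00}\subseteq[B]\subseteq\CloneM$, etc., the same verification strategy as in Proposition~\ref{prop:abd(PQ)-M-NP-c} works: after substituting $E$ into $\Gamma$ we obtain a monotonic set $\Gamma'$, and entailment of $\varphi$ is equivalent to the unsatisfiability of $\Gamma'[q_1/\false,\ldots,q_k/\false]$, which is a monotonic formula and thus decidable in polynomial time by Lewis' theorem. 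For $\CloneL_2\subseteq[B]\subseteq\CloneL$ one simply treats the clause manifestation $q_1\lor\cdots\lor q_k$ as expressing that at least one $q_i$ is true, which is handled by Zanuttini's polynomial-time algorithm for linear abduction essentially unchanged. Finally, for $[B]\subseteq\CloneE$, $\CloneN$ or $\CloneV$ I would revisit the logspace arguments of Proposition~\ref{prop:abd(PQ)-E-N-V-logspace}: in the $\CloneE$ and $\CloneN$ cases $\Gamma$ reduces to a set of literals, so after substituting those values into $\varphi$ one checks the resulting constant, still in $\L$; in the $\CloneV$ case one searches for a formula $\psi\in\Gamma$ equivalent to $q_{i_1}\lor\cdots\lor q_{i_r}\lor x_1\lor\cdots\lor x_m$ with $\{q_{i_1},\ldots,q_{i_r}\}\subseteq\{q_1,\ldots,q_k\}$ and $x_1,\ldots,x_m\in A$, then verifies satisfiability of $\Gamma[x_1/\false,\ldots,x_m/\false]$, all in logspace.

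There is no real obstacle here; the only point that deserves a line of justification is that in every case the entailment test for a clause $q_1\lor\cdots\lor q_k$ reduces to a test against $\neg q_1\land\cdots\land\neg q_k$ of the same syntactic form as the single-literal case (a pure substitution of constants), so the complexity of the verification step is preserved. Combining these extensions with the trivial hardness reduction yields the fourfold classification exactly as stated.
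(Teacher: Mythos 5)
Your proposal is correct and follows essentially the same route as the paper, which establishes Theorem~\ref{thm:main_variant_PC} by observing that $\ABD(B,\MPQ)\leqlogm\ABD(B,\MPC)$ trivially (a positive literal is a positive clause), so all lower bounds transfer, and that the upper-bound algorithms for a single query ``naturally extend to clauses.'' Your write-up simply supplies the details of these extensions (replacing the test against $\neg q$ by one against $\neg q_1\land\cdots\land\neg q_k$ in each verification step, and adjusting the $\CloneV$ and $\CloneE$/$\CloneN$ characterizations accordingly), which the paper leaves as an easy exercise.
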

  Notably, we
  will prove in the next section that allowing for terms as manifestations increases the complexity for the clones $\CloneV$ (from membership
  in $\L$ to $\NP$-completeness), while allowing $B$-formulae as manifestations makes the classification dichotomous again:
  all problems become either $\P$- or $\SigPtwo$-complete.

\subsubsection{The complexity of $\ABD(B,\MPT)$}\label{subsubsec:complexity_pt}
  
  \begin{proposition}\label{prop:S-ABD(PT)-V-NP-c}
    Let $B$ be a finite set of Boolean functions such that $\CloneV_2 \subseteq [B] \subseteq \CloneV$. Then $\ABD(B,\MPT)$ is $\NP$-complete.
  \end{proposition}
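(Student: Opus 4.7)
The plan is to establish both membership in $\NP$ and $\NP$-hardness. For membership I would simply guess a candidate explanation $E \subseteq \Lits{A}$ and verify in polynomial time that $\Gamma \land E$ is satisfiable and that, for each positive literal $q_i$ appearing in the manifestation $\varphi = q_1 \land \cdots \land q_k$, the formula $\Gamma \land E \land \neg q_i$ is unsatisfiable. After substituting the literals of $E$ and the value $\false$ for $q_i$, both tests reduce to satisfiability of $\CloneV$-formulae, which is feasible in logarithmic space in view of Proposition~\ref{prop:abd(PQ)-E-N-V-logspace}; hence $\ABD(B,\MPT) \in \NP$.

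For the lower bound, I would first invoke Lemma~\ref{lem:constants_sometimes_available} (in the positive-term variant given by the remark following it) together with $\lor \in [B]$ to reduce to the case $[B] = \CloneV$, and then observe that within a disjunctive clause the constant $\true$ trivializes the clause while $\false$ is absorbed, so that the clauses of $\Gamma$ may be assumed to be pure positive disjunctions of variables. Writing $N := \{x \in A : \neg x \in E\}$ and $P := \{x \in A : x \in E\}$, the monotone shape of $\Gamma$ yields a clean structural characterization: $\Gamma \land E$ is satisfiable iff $N \cap P = \emptyset$ and no clause of $\Gamma$ has all its variables in $N$, while (since $q_i \notin A$) $\Gamma \land E \models q_i$ iff some clause $C \in \Gamma$ contains $q_i$ with $\Vars{C} \setminus \{q_i\} \subseteq N$. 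In particular $P$ plays no role, so the whole problem collapses to choosing $N \subseteq A$ that for every $q_i$ ``hits'' at least one clause carrying $q_i$ while ``avoiding'' every clause whose variables lie entirely in $A$.

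With this picture, I would reduce from \textsc{Set Splitting} (given a universe $U$ and $\{S_1,\ldots,S_m\}$, decide whether $U$ can be $2$-coloured so that no $S_i$ is monochromatic). Set $A := U$, introduce fresh variables $q_1,\ldots,q_m$ and take
\[
\Gamma := \bigl\{q_i \lor u : 1 \leq i \leq m,\; u \in S_i\bigr\} \cup \bigl\{\textstyle\bigvee_{u \in S_i} u : 1 \leq i \leq m\bigr\}, \qquad \varphi := q_1 \land \cdots \land q_k.
\]
By the characterization above, an explanation $E$ exists iff the corresponding $N$ satisfies $S_i \cap N \neq \emptyset$ (to entail $q_i$ through some clause $q_i \lor u$) and $S_i \not\subseteq N$ (so that the dangerous clause $\bigvee_{u \in S_i} u$ is not falsified) for every $i$, which is exactly the condition that $(U \setminus N, N)$ splits every $S_i$. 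For a general $B$ with $\CloneV_2 \subseteq [B]$ I would finally rewrite each multi-ary disjunction as a balanced binary $\lor$-tree of logarithmic depth and replace each binary $\lor$ by its constant-size $B$-representation, keeping the reduction polynomial and in fact logspace. The main obstacle I expect is the structural characterization in the second paragraph: convincing oneself that $P$ is irrelevant and that entailment truly translates into the hitting/avoiding conditions on $N$ is what makes the Set Splitting reduction go through transparently.
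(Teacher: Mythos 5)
Your proof is correct, but the hardness part takes a genuinely different route from the paper. For membership both arguments coincide: guess $E \subseteq \Lits{A}$ and verify satisfiability of $\Gamma \land E$ and unsatisfiability of $\Gamma \land E \land \neg q_i$ for each conjunct, which is easy for $\CloneV$-formulae. For hardness, the paper reduces from $\ThreeSAT$: it rewrites each clause $c_i$ with primed variables $x'_j$ in place of $\neg x_j$, adds the gadget clauses $x_j \lor x'_j$, $x_j \lor q_j$, $x'_j \lor q_j$, and uses the term $q_1 \land \cdots \land q_n$ to force every full explanation to choose exactly one of $\neg x_j, \neg x'_j$, thereby encoding an assignment whose consistency with the rewritten clauses is exactly satisfiability of the input. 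You instead first prove a clean combinatorial characterization of explanations over positive-clause knowledge bases --- only the negatively selected set $N$ matters, $\Gamma \land E$ is satisfiable iff no all-in-$A$ clause is contained in $N$, and $q_i$ is entailed iff some clause $q_i \lor u_1 \lor \cdots \lor u_r$ has all $u_j \in N$ --- and then read off a reduction from \textsc{Set Splitting}, where $N$ is one colour class. Both reductions are valid and of comparable length; your characterization is essentially the one the paper already uses implicitly in Proposition~\ref{prop:abd(PQ)-E-N-V-logspace} for a single query, and making it explicit buys a more transparent correctness argument at the cost of importing a different (but standard) $\NP$-complete source problem. Two cosmetic points: the displayed manifestation should read $q_1 \land \cdots \land q_m$ rather than $q_1 \land \cdots \land q_k$, and your appeal to Lemma~\ref{lem:constants_sometimes_available} is unnecessary for the lower bound, since your $\Gamma$ consists of pure disjunctions of variables and $\lor \in [B]$ already suffices (the balanced $\lor$-tree rewriting you describe is the same device the paper uses to keep the translation polynomial).
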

    
  \begin{proof}
    Let $B$ be a finite set of Boolean functions such that\ $\CloneV_2 \subseteq [B] \subseteq \CloneV$ and let $\calP=(\Gamma,A,t)$ be an
    instance of	$\ABD(B,\MPT)$. Hence, $\Gamma$ is a set of $B$-formulae and $t$ is a term, $t=\bigwedge_{i=1}^n l_i$. Observe that
    $E$ is a solution for $\calP$ if $\Gamma\land E$ is satisfiable and for every $i=1,\ldots, n$, $\Gamma\land E\land \neg l_i$ is not.
    Given a set $E\subseteq\Lits{A}$, these verifications, which  require substitution of symbols and evaluation of an $\lor$-formula,
    can be performed in polynomial time, thus proving membership in $\NP$.

    To prove $\NP$-hardness, we give a reduction from $\ThreeSAT$. 
    Let $\varphi$ be a 3-CNF-formula, 
    $\varphi := \bigwedge_{i \in I} c_i$.
    Let $x_{1}, \ldots, x_{n}$ enumerate the variables occurring in $\varphi$. 
    Let $x'_{1},\ldots, x'_{n}$ and $q_{1}, \ldots, q_{n}$ be fresh, pairwise distinct variables.
    We map $\varphi$ to $\calP=(\Gamma,A,t)$, where
    \begin{align*}
      \Gamma :=\,& \{ c_i[\neg x_1/x'_1,\ldots, \neg x_n/x'_n]\mid i\in I\}\\
      \cup\;& \{ x_i \lor x'_i,   x_i \lor q_i,  x'_i \lor q_i \mid 1 \leq i \leq n\}, \\
      A:=\,& \{x_1,\ldots,x_n,x'_1,\ldots,x'_n\}, \\
      t :=\,& q_1\land \cdots \land q_n.
    \end{align*}

    We show that $\varphi$ is satisfiable if and only if $\calP$ has a solution.
    First assume that $\varphi$ is satisfied by the assignment $\sigma\colon \{x_{1}, \ldots, x_{n}\} \longrightarrow \{\false,\true\}$. 
    Define $E:= \{ \neg x_i \mid \sigma(x_i)=\false \} \cup \{ \neg x'_i \mid \sigma(x_i)=\true \}$ and 
    $\hat\sigma$ as the extension of $\sigma$ mapping $\hat\sigma(x'_i)=\neg \sigma(x_i)$ and $\hat\sigma(q_i)=\true$ for all $1 \leq i \leq n$. 
    Obviously, $\hat\sigma \models \Gamma \land E$. 
    Furthermore, $\Gamma \land E \models q_i$ for all $1 \leq i \leq n$, 
    because any satisfying assignment of $\Gamma \wedge E$ sets to $\false$ either $x_i$ or $x_i'$ and thus $\{  x_i \lor q_i,  x'_i \lor q_i\}
    \models q_i$. Hence $E$ is an explanation for $\calP$.
    
    Conversely, suppose that $\calP$ has a full explanation $E$. The facts that  $\Gamma\land E\models q_1\land \cdots \land q_n $  and that each $q_i$
    occurs only in the clauses  $ x_i \lor q_i,  x'_i \lor q_i$ enforce that, for every $i$, $E$  contains $\neg x_i$ or $\neg x'_i$. Because of the
    clause $x_i \lor x'_i$, it cannot contain both. Therefore in $E$ the value of $x'_i$ is determined by the value of $x_i$ and is its dual. From this
    it is easy to conclude that the assignment $\sigma\colon \{x_{1}, \ldots, x_{n}\} \longrightarrow \{\false,\true\}$ defined by $\sigma(x_i)=\false$ if $\neg
    x_i\in E$, and $\true$ otherwise, satisfies $\varphi$.
    Finally $\calP$ can be transformed into an $\ABD(B,\MPT)$-instance, because every formula in $\Gamma$ is the disjunction of at most three
    variables and $\lor\in [B]$.
    \qed
  \end{proof}

  \begin{theorem}\label{thm:main_variant_PT}
    Let $B$
    be a finite set of Boolean functions. Then the symmetric abduction problem for propositional $B$-formulae with a positive term manifestation, $\ABD(B,\MPT)$, is
    \begin{enumerate}
      \item $\SigPtwo$-complete if $\CloneS_{02}\subseteq [B]$ or $\CloneS_{12}\subseteq [B]$ or  $\CloneD_{1}\subseteq [B]$,
      \item $\NP$-complete if  $\CloneV_2\subseteq [B]\subseteq \CloneM$ or $\CloneS_{10}\subseteq [B]\subseteq \CloneM$ or $\CloneD_{2}\subseteq [B]
              \subseteq \CloneM$,
      \item in $\P$ and $\ParityL$-hard if $\CloneL_{2}\subseteq [B]\subseteq \CloneL$, and
      \item in $\L$ in all other cases.
    \end{enumerate}
  \end{theorem}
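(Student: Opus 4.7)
My plan is to combine Theorem~\ref{thm:main} with Proposition~\ref{prop:S-ABD(PT)-V-NP-c} and adapt the upper bounds from positive-literal to conjunctive manifestations. Since any positive literal is a positive term of length one, the trivial inclusion gives $\ABD(B,\MPQ) \leqlogm \ABD(B,\MPT)$, and this single reduction transports every hardness result of Theorem~\ref{thm:main}: $\SigPtwo$-hardness in case~1, $\NP$-hardness for $\CloneS_{00}, \CloneS_{10}, \CloneD_2$ in case~2, and $\ParityL$-hardness in case~3. The only genuinely new hardness is $\NP$-hardness for $\CloneV_2 \subseteq [B] \subseteq \CloneV$, which has just been supplied by Proposition~\ref{prop:S-ABD(PT)-V-NP-c}.

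For the upper bounds, $\SigPtwo$-membership is standard (guess $E$ and verify). For the $\NP$ region, where $[B] \subseteq \CloneM$, I would guess a full $E \subseteq \Lits{A}$, test satisfiability of $\Gamma \land E$ by polynomial-time monotonic SAT, and verify $\Gamma \land E \models t$ literal by literal: for each conjunct $q_i$ of $t = q_1 \land \cdots \land q_n$, check unsatisfiability of $\Gamma \land E \land \neg q_i$. After substitution $\Gamma$ remains a set of monotonic formulae and $\neg q_i$ merely fixes $q_i$ to $\false$, so the test is polynomial and $\NP$-membership follows uniformly for all $B$ with $[B] \subseteq \CloneM$, covering in particular the new $\CloneV_2$ case.

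For $\CloneL_2 \subseteq [B] \subseteq \CloneL$ I would follow the approach of Proposition~\ref{prop:abd(PQ)-L-P} and reduce to abduction over systems of linear equations, then appeal to Zanuttini's polynomial algorithm~\cite{zanuttini03}; handling a conjunctive manifestation is no harder than a single literal, since the entailment conditions $\Gamma \land E \models q_i$ translate into linear constraints that can be imposed simultaneously. Finally, for the residual $\L$ cases, the surviving clones lie in $\CloneE$ or $\CloneN$ ($\CloneV$ no longer qualifies since $\CloneV_2$ has been absorbed by the $\NP$-complete region). In both situations $\Gamma$ collapses to a set of unit literal constraints, so satisfiability and entailment of each $q_i$ can be tested independently in logspace, just as in Proposition~\ref{prop:abd(PQ)-E-N-V-logspace}.

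The main obstacle is exactly the $\CloneV_2$ case, where the classification departs from that of $\ABD(B,\MPQ)$: a conjunctive manifestation allows even a purely disjunctive knowledge base to encode $\ThreeSAT$, which is precisely the content of Proposition~\ref{prop:S-ABD(PT)-V-NP-c}. Once this jump is established, all remaining pieces are routine extensions of the arguments already developed for Theorem~\ref{thm:main}.
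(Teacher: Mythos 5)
Your overall decomposition matches the paper's: hardness transported from the positive-literal case via the trivial inclusion $\ABD(B,\MPQ)\leqlogm\ABD(B,\MPT)$, the new $\NP$-hardness for the $\CloneV$-clones from Proposition~\ref{prop:S-ABD(PT)-V-NP-c}, $\NP$-membership for $[B]\subseteq\CloneM$ by guessing $E$ and checking $\Gamma\land E\land\neg q_i$ conjunct by conjunct (monotone satisfiability), and the $\L$ cases handled as in Proposition~\ref{prop:abd(PQ)-E-N-V-logspace}. All of that is sound, and your observation that every clone in case~2 either lies below $\CloneV$ or contains $\CloneS_{00}$ makes the hardness coverage complete.

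The one step you have not actually established is membership in $\P$ for $\CloneL_2\subseteq[B]\subseteq\CloneL$. You appeal to Zanuttini's algorithm \cite{zanuttini03} and assert that ``handling a conjunctive manifestation is no harder than a single literal, since the entailment conditions $\Gamma\land E\models q_i$ translate into linear constraints that can be imposed simultaneously.'' That sentence is doing all the work, and it is not automatic: the condition $\Gamma\land E\models q_i$ is not a linear constraint on $E$ per se, and $\neg t$ for a term $t$ is not affine, so the problem does not reduce to a single linear-system test the way the single-literal or $B$-formula manifestation does. The paper itself does not cite \cite{zanuttini03} here but instead \cite[Theorem~67]{noza08}, which covers term manifestations. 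That the extension is genuinely delicate is witnessed by Proposition~\ref{prop:P-abd(PT)-L0-L3-NP-c}: for \emph{positive} abduction the very same combination (affine knowledge base, positive term manifestation) jumps to $\NP$-completeness. A correct direct argument exists for the symmetric case --- restrict to full explanations, note that whether $q_i$ is constant on the fiber of the solution space over a hypothesis assignment $a$ is independent of $a$, and that when it is constant its value is an affine function of $a$, so that the simultaneous requirements do become a linear system in $a$ --- but you would need to spell this out (or cite the appropriate result of Nordh and Zanuttini) rather than assert it.
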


  \begin{proof}
    \begin{enumerate}
      \item The $\SigPtwo$-hardness follows directly from Proposition~\ref{prop:abd(PQ)-SigP2-c}.
      \item For the clones $\CloneV_2 \subseteq [B] \subseteq \CloneV$, see Proposition~\ref{prop:S-ABD(PT)-V-NP-c}. In all other clones, 
            the $\NP$-hardness follows from a straightforward generalization of the proof of Proposition~\ref{prop:abd(PQ)-M-NP-c}.
      \item Membership in $\P$ follows directly from \cite[Theorem~67]{noza08}, the $\ParityL$-hardness from Proposition~\ref{prop:abd(PQ)-L-P}.
      \item Analogous to Proposition~\ref{prop:abd(PQ)-E-N-V-logspace}. 
      \qed
    \end{enumerate}
  \end{proof}

  \begin{remark}\label{rem:T-PT-NT}
    All upper and lower bounds for $\ABD(B,\MPT)$ easily carry over to $\ABD(B,\MT)$.
    It is also easily seen that $\ABD(B,\MNT)$ is classified exactly as $\ABD(B,\MNQ)$, see Remark~\ref{rem:Q-PQ-NQ}.
  \end{remark}

\subsubsection{The complexity of $\ABD(B,\MF)$}\label{subsubsec:complexity_lb}

  \begin{proposition}\label{prop:abd(MF)-SigP2-c}
    Let $B$ be a finite set of Boolean functions such that $\CloneS_{00} \subseteq [B]$ or $\CloneS_{10} \subseteq [B]$ or $\CloneD_2 \subseteq [B]$.
    Then $\ABD(B,\MF)$ is $\SigPtwo$-complete.
  \end{proposition}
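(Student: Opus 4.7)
The plan is to establish both bounds. Membership in $\SigPtwo$ follows by the same guess-and-check argument as in Proposition~\ref{prop:abd(PQ)-SigP2-c}: existentially guess an explanation $E\subseteq\Lits{A}$, then verify in $\coNP$ that $\Gamma\land E$ is satisfiable while $\Gamma\land E\land\neg\varphi$ is not. For hardness, I would reduce from the $\SigPtwo$-hard problem $\ABD(B',\MPQ)$ with $[B']=\CloneBF$ (Proposition~\ref{prop:abd(PQ)-SigP2-c}). The natural route via Lemma~\ref{lem:constants_sometimes_available} is not available here: the three clones in the statement all lie inside $\CloneM$, and for $\manif=\MF$ the lemma adjoins $\true$ but not $\false$; hence $[B\cup\{\true\}]$ is still strictly contained in $\CloneM$ and negation cannot be simulated in the knowledge base by simple substitution as was done for $\MPQ$. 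A direct reduction using monotonic connectives only is required.

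The key device is the standard doubling trick. Given an instance $(\Gamma,A,q)$ with $\Gamma$ a CNF, introduce two positive variables $v^+,v^-$ for each $v\in\Vars{\Gamma}$, intended to represent $v$ and $\neg v$ respectively. Replace each occurrence of $v$ in $\Gamma$ by $v^+$ and of $\neg v$ by $v^-$, obtaining a purely positive set of clauses $\Gamma^+$, and add the monotonic clauses $v^+\lor v^-$ for every $v$; these enforce the ``at least one'' half of complementarity. The non-monotonic ``at most one'' half is pushed into the manifestation,
\[
  \varphi' \;:=\; q^+ \,\lor\, \bigvee_{v\in\Vars{\Gamma}}\bigl(v^+\land v^-\bigr),
\]
and the hypothesis set becomes $A':=\{x^+ : x\in A\}$. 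A model of the new knowledge base together with $E'$ is either well-formed (exactly one of $v^+,v^-$ is true for each $v$), in which case its projection $v\mapsto v^+$ yields a genuine model of $\Gamma\land E$, or else it satisfies $\varphi'$ trivially via some $v^+\land v^-$ disjunct. Both directions of equivalence then follow: any explanation of the original lifts canonically, and conversely any counter-model to the original entailment $\Gamma\land E\models q$ lifts to a well-formed counter-model of $\Gamma'\land E'\models\varphi'$, so the entailment transfers.

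The main obstacle is to express $\Gamma'$ and $\varphi'$ as $B$-formulae for each of the three clones (with $\true$ adjoined via Lemma~\ref{lem:constants_sometimes_available}). The subtlety is that neither $\land$ nor $\lor$ lies alone in these bases: for instance $\land\notin\CloneS_{00}$ and $\lor\notin\CloneS_{10}$, so arbitrary monotonic subformulae cannot be formed freely. However, every conjunction appearing in $\varphi'$ or in the translated CNF sits underneath an outer disjunction, which is precisely what a connective such as $h_1(x,y,z)=x\lor(y\land z)$ expresses. For $\CloneS_{00}$ I would build $\varphi'$ iteratively by nested application of $h_1$, using the running prefix $q^+\lor\cdots$ as the $0$-separating distinguished input; $k$-CNF clauses with $k\ge 3$ are first reduced to binary-conjunction form by Tseitin-style auxiliary variables (themselves doubled, with the corresponding mismatch disjuncts appended to $\varphi'$). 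The construction for $\CloneS_{10}$ is dual, and for $\CloneD_2$ the majority function combined with $\true$ first produces $\lor$ via the identity $\mathrm{maj}(x,y,\true)=x\lor y$, after which the same chaining carries through. The whole transformation is logspace and polynomial in size.
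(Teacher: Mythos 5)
Your membership argument and your core device --- doubling each variable $v$ into $v^+,v^-$, enforcing ``at least one'' by monotone clauses $v^+\lor v^-$ and pushing ``at most one'' into the manifestation as disjuncts $v^+\land v^-$ --- coincide with the paper's. But the reduction you build around this device is incorrect, for two reasons. First, by taking general $\ABD(B',\MPQ)$ with $[B']=\CloneBF$ as the source problem, you need the \emph{consistency} half of the abduction condition to transfer backwards, and it does not: the knowledge base $\Gamma'$ you construct consists of positive clauses, so $\Gamma'\land E'$ is satisfied by the all-$\true$ assignment (for positive $E'$) via non-well-formed models even when $\Gamma\land E$ is unsatisfiable. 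Concretely, take $\Gamma=\{p,\;\neg p,\;p\lor q\}$, $A=\emptyset$, query $q$: the source instance has no explanation, yet in the image every model of $\Gamma'$ sets both $p^+$ and $p^-$ to $\true$, hence satisfies the disjunct $p^+\land p^-$ of $\varphi'$, so $\emptyset$ is an explanation. This is not a repairable detail of your construction: a monotone, $\true$-reproducing knowledge base together with positive hypotheses is always consistent, so the NP-hard consistency test of general abduction cannot be faithfully embedded along these lines. Second, your manifestation $\varphi'$ ranges over all $v\in\Vars{\Gamma}$ and therefore mentions the hypothesis variables $x^+\in A'$, violating the requirement $\Vars{\varphi}\subseteq\Vars{\Gamma}\setminus A$ in the definition of the problem; excluding those disjuncts instead breaks the forward direction.

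The paper avoids both problems by reducing directly from $\QSAT_2$ on $\exists x_1\cdots\exists x_n\forall y_1\cdots\forall y_m\,\varphi$ with $\varphi$ in 3-DNF. The hypotheses are fresh selector variables $t_i,f_i$ (which do not occur in the manifestation), tied to the doubled variables by the clauses $f_i\lor x_i$, $t_i\lor x_i'$, $f_i\lor t_i$; an explanation then merely selects a consistent set $X$ of literals over the $x_i$, consistency of $\Gamma\land E$ is trivial, and the whole $\SigPtwo$-hardness is carried by the entailment test, since $\Gamma\land E\land\neg\psi\equiv\neg\varphi\land X$. In other words, the existential quantifier is realised by the choice of explanation and the universal one by the entailment check --- which is the only place hardness can live once the knowledge base is monotone. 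Your discussion of the $B$-representation step is essentially sound, though overengineered: since $[\CloneS_{00}\cup\{\true\}]=\CloneS_{01}$ and both $\lor$ and $x\lor(y\land z)$ lie in $\CloneS_{01}$, the clauses and the disjuncts $q\lor(x_i\land x_i')$ can be written directly and no Tseitin-style auxiliary variables are needed; but the reduction this step is meant to implement must first be replaced.
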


  \begin{proof}
    We prove $\SigPtwo$-hardness by giving a reduction from the $\SigPtwo$-hard problem $\QSAT_2$ \cite{wra77}. Let an instance of $\QSAT_2$ be given
    by a closed formula $\chi:=\exists x_1 \cdots \exists x_n \forall y_1 \cdots \forall y_m \varphi$ with $\varphi$ being a 3-DNF-formula.
    First observe that $\exists x_1 \cdots \exists x_n \forall y_1 \cdots \forall y_m \varphi$ is true if and only if there exists a consistent set
    $X \subseteq \Lits{\{x_1,\ldots,x_n\}}$ such that $X \cap \{x_i,\neg x_i\} \neq \emptyset$, for all $1 \leq i \leq n$, and $\neg X \lor \varphi$
    is (universally) valid (or equivalently $\neg\varphi \land X$ is unsatisfiable). 

    Denote by $\overline{\varphi}$ the negation normal form of $\neg \varphi$ and
    let $\overline{\varphi}'$   be obtained from $\overline{\varphi}$ by replacing 
    all occurrences of $\neg x_i$ with a fresh proposition $x_i'$, $1\leq i \leq n$, and
    all occurrences of $\neg y_i$ with a fresh proposition $y_i'$, $1\leq i \leq m$.
    That is, $\overline{\varphi}' \equiv \overline{\varphi}[\neg x_1/x'_1,\ldots,\neg x_n/x'_n, \neg y_1/y'_1,\ldots,\neg y_m/y'_m]$. Thus
    $\overline{\varphi}'=\bigwedge _{i\in I} c'_i$, where every $c'_i$ is a disjunction of  three propositions. To $\chi$ we associate the
    propositional abduction problem $  \calP=(\Gamma,A,\psi)$ defined as follows:
    \begin{align*}
    \Gamma:=\,& \{c'_i \lor q\mid   i \in I\} \\ 
      \cup\;& \{ x_i \lor x'_i \mid 1 \leq i \leq n\} \cup \{ y_i \lor y'_i \mid 1 \leq i \leq m\} \\
      \cup\;& \{f_i \lor x_i, t_i \lor x'_i, f_i \lor t_i \mid 1 \leq i \leq n \}, \\
        A :=\,& \{t_i,f_i \mid 1 \leq i \leq n \},\\
      \psi :=\,&\textstyle q \lor \bigvee_{1 \leq i \leq n} (x_i \land x'_i) \lor \bigvee_{1 \leq i \leq m} (y_i \land y'_i).
    \end{align*}
  
    Suppose that $\chi$ is true.
    Then there exists an assignment $\sigma\colon\{x_1,\ldots,x_n\} \longrightarrow \{\false,\true\}$ such that no extension
    $\sigma'\colon\{x_1,\ldots,x_n\} \cup \{y_1,\ldots,y_m\} \longrightarrow \{\false,\true\}$ of $\sigma$ satisfies $\neg \varphi$.
    Define $X$ as the set of literals over $\{x_1,\ldots,x_n\}$ set to $\true$ by $\sigma$.
    Defining $E:= \{ \neg f_i, t_i \mid x_i \in X \} \cup \{ \neg t_i, f_i \mid \neg x_i \in X \}$, we obtain with abuse of notation
    \begin{align*}
    \Gamma \land E \land \neg \psi \;	 \equiv \; 	&	\textstyle \bigwedge_{i \in I} c'_i \land \bigwedge_{1 \leq i \leq n} (x_i \xor x'_i)\land
                                                        \bigwedge_{1 \leq i \leq m}	(y_i \xor y'_i)  \land & \\ 
                                                  & \textstyle \bigwedge_{1 \leq i \leq n, \sigma(x_i)=1} x_i \land \bigwedge_{1 \leq i \leq
                                                        n, \sigma(x_i)=0} x'_i & \\
                                      \equiv \;	& \neg\varphi \land X, &
    \end{align*}	
    
    which is unsatisfiable by assumption.
    As $\Gamma \land E$ is satisfied by any assignment setting in addition all $x_i, x'_i$, $1 \leq i \leq n$, and all $y_j, y'_j$,
    $1 \leq i \leq m$, to $\true$, we have proved that $E$ is an explanation for $\calP$.

    Conversely, suppose that $\calP$ has an explanation $E$. 
    Assume w.l.o.g.\ that $E$ is full. 
    Due to the clause $(f_i\lor t_i)$ in $\Gamma$,
    we also may assume that $|E \cap \{\neg t_i, \neg f_i\}| \leq 1$ for all $1 \leq i \leq n$. 

    Setting $X := \{x_i \mid \neg f_i \in E\} \cup \{\neg x_i \mid \neg t_i \in E\}$ we now obtain
    $\bigwedge_{1 \leq i \leq n} \big((f_i \lor x_i) \land (t_i \lor \neg x_i)\land (f_i\lor t_i)\big) \land E\equiv X$ and 
    $\Gamma \land E \land \neg \psi \equiv \neg\varphi \land X$ as above.
    Hence, $\neg\varphi \land X$ is unsatisfiable, which implies the existence of an assignment 
    $\sigma\colon\{x_1,\ldots,x_n\} \longrightarrow \{\false,\true\}$ such that no extension 
    $\sigma'\colon\{x_1,\ldots,x_n\} \cup \{y_1,\ldots,y_m\} \longrightarrow \{\false,\true\}$ of $\sigma$ satisfies $\neg \varphi$.
    Therefore, we have proved that $\chi$ is true if and only if $\calP$ has an explanation. 

    It remains to show that $\calP$ can be transformed into an $\ABD(B,\MF)$-instance for any relevant $B$.
    Since $[\CloneS_{00} \cup \{\true\}] = \CloneS_{01}$, $[\CloneS_{10} \cup \{\true\}] = \CloneM_1$, $[\CloneD_2 \cup \{\true\}] = \CloneS^2_{01}$ 
    and $\CloneS_{01} \subseteq \CloneS^2_{01} \subseteq\CloneM_1$, it suffices to consider
    the case $[B] = \CloneS_{01}$ by Lemma~\ref{lem:constants_sometimes_available}. 
    Observe that $x \lor (y \land z) \in [B]$. The transformation can hence be done in polynomial time by
    local replacements, rewriting $\psi$ as $\bigvee_{1 \leq i \leq n} q \lor (x_i \land x'_i) \lor \bigvee_{1 \leq i \leq m} q \lor (y_i \land y'_i)$
    and using the associativity of $\lor$. 
    \qed
  \end{proof}

  \begin{theorem}\label{thm:main_variant_F}
    Let $B$ be a finite set of Boolean functions. 
    Then the symmetric abduction problem for propositional $B$-formulae with a $B$-formula manifestation, $\ABD(B,\MF)$, is
    \begin{enumerate}
      \item $\SigPtwo$-complete if $\CloneS_{00}\subseteq [B]$ or $\CloneS_{10}\subseteq [B]$ or  $\CloneD_{2}\subseteq [B]$,
      \item in $\P$ and $\ParityL$-hard if $\CloneL_{2}\subseteq [B]\subseteq \CloneL$, and
      \item in $\L$ in all other cases.
    \end{enumerate}
  \end{theorem}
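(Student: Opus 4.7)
The plan is to treat the three claimed cases separately, building on techniques already established for the positive-literal versions. Membership in $\SigPtwo$ is the standard guess-and-check for every $B$: nondeterministically pick $E\subseteq\Lits{A}$ and use $\coNP$-oracle queries to verify that $\Gamma\wedge E$ is satisfiable and $\Gamma\wedge E\wedge\neg\varphi$ is unsatisfiable. The $\SigPtwo$-hardness in case (1) is immediate from the just-established Proposition~\ref{prop:abd(MF)-SigP2-c}.

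For case (2), a bare variable is itself a valid $B$-formula for every $B$, so $\ABD(B,\MPQ)\leqlogm\ABD(B,\MF)$ via the identity, and Proposition~\ref{prop:abd(PQ)-L-P} supplies the $\ParityL$-hardness. For membership in $\P$, when $[B]\subseteq\CloneL$ every $B$-formula is equivalent to a linear equation over $\mathrm{GF}(2)$; rewriting the whole instance---using an $\xor$-tree of logarithmic depth for each formula, in the style of Proposition~\ref{prop:abd(PQ)-L-P}, to avoid exponential blow-up---yields an abduction instance whose knowledge base and manifestation are linear equations, which is polynomial-time decidable by~\cite{zanuttini03}.

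For case (3), the remaining clones all satisfy $[B]\subseteq\CloneE$, $[B]\subseteq\CloneV$, or $[B]\subseteq\CloneN$. When $[B]\subseteq\CloneN$ every $B$-formula is essentially a single literal and the problem reduces to elementary literal propagation in $\L$. When $[B]\subseteq\CloneE$, both $\Gamma$ and $\varphi$ are conjunctions of variables and constants; $\Gamma$ is either outright unsatisfiable or forces a set $S$ of variables to $\true$, and since $\Vars{\varphi}\cap A=\emptyset$ the explanation plays no role in the entailment, so the problem boils down to checking $\Vars{\varphi}\subseteq S$ in $\L$. When $[B]\subseteq\CloneV$, both $\Gamma$ and $\varphi$ consist of positive clauses, and the plan is to show that an explanation exists if and only if some clause $c\in\Gamma$ satisfies $\Vars{c}\subseteq A\cup\Vars{\varphi}$, $\Vars{c}\cap\Vars{\varphi}\neq\emptyset$, and $\Vars{c'}\not\subseteq\Vars{c}\cap A$ for every $c'\in\Gamma$; the witnessing explanation is then $E=\{\neg x\mid x\in\Vars{c}\cap A\}$ and the whole test runs in $\L$.

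The main subtlety will be proving this characterization in the $\CloneV$ case. By monotonicity, the assignment that sets every unforced variable to $\true$ and every variable of $\varphi$ to $\false$ is a model of $\Gamma\wedge E$ unless some clause of $\Gamma$ collapses to a subclause of $\varphi$ under $E$, which yields the forward direction. A minimality argument on the positive literals of $E$ then shows that they never help: any positive literal in $E$ on a variable of the collapsing clause would immediately block entailment through that clause, so purely negative explanations suffice without loss of generality. The remaining logspace verifications---that $\Gamma$ is satisfiable and that no stray clause becomes empty under $E$---are direct adaptations of Proposition~\ref{prop:abd(PQ)-E-N-V-logspace}.
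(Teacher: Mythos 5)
Your proposal is correct and follows the same route as the paper: $\SigPtwo$-hardness from Proposition~\ref{prop:abd(MF)-SigP2-c}, membership in $\P$ for the affine clones via translation to linear systems and \cite{zanuttini03} with $\ParityL$-hardness inherited from Proposition~\ref{prop:abd(PQ)-L-P}, and logspace algorithms for the remaining clones analogous to Proposition~\ref{prop:abd(PQ)-E-N-V-logspace}. The only difference is that you spell out the $\CloneV$/$\CloneE$/$\CloneN$ characterizations that the paper dismisses as ``analogous''; your $\CloneV$ criterion is sound (modulo the routine pre-simplification of constants $\true$/$\false$ in $\Gamma$ and $\varphi$, which both you and the paper leave implicit).
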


  \begin{proof}
    \begin{enumerate}
      \item See Proposition~\ref{prop:abd(MF)-SigP2-c}.
      \item Membership in $\P$ follows directly from \cite{zanuttini03}, 
      the $\ParityL$-hardness follows from Proposition~\ref{prop:abd(PQ)-L-P}.
      \item Analogous to Proposition~\ref{prop:abd(PQ)-E-N-V-logspace}. 
      \qed
    \end{enumerate}
  \end{proof}

  Observe that there are no sets $B$ of Boolean functions for which $\ABD(B,\MF)$ is $\NP$-complete.

\section{The complexity of $\ABD[\HP](B)$}\label{sec:positive_abduction}

  We will now study the complexity of positive abduction, in which an explanation consists of a set of positive literals. 
  The results of this section are summarized in Figure~\ref{fig:positive_abduction_complexity}. 
  To begin with, note that for monotonic or $\true$-reproducing sets of formulae, 
  deciding the existence of a positive explanation reduces to a testing whether $A$ is one.

  \begin{lemma}\label{lem:p-abd-useful}
    For $[B]\subseteq \CloneR_{1}$ or $[B]\subseteq \CloneM$, an instance $(\Gamma, A, \varphi)$ of $\ABD[\HP](B,\manif)$ has
    solutions if and only if $A$ is a solution.
  \end{lemma}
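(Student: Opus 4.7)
The plan is to argue both implications separately, the reverse being immediate and the forward being a two-step argument (entailment plus satisfiability), with each of the two hypotheses $[B]\subseteq\CloneR_1$ and $[B]\subseteq\CloneM$ handled in a separate case for the satisfiability part.

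First, I would observe that the ``if'' direction is completely trivial: if $A$ itself is a solution of $(\Gamma,A,\varphi)$, then of course $(\Gamma,A,\varphi)$ admits a solution. So the whole content of the lemma lies in the ``only if'' direction. Suppose then that some $E\subseteq A$ is a solution, so that $\Gamma\land E$ is satisfiable and $\Gamma\land E\models\varphi$.

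Next I would dispatch the entailment $\Gamma\land A\models\varphi$. This part does not use either hypothesis on $B$. Indeed, since $E\subseteq A$ (as positive literals), the formula $A$ logically implies $E$: any assignment setting every variable of $A$ to $\true$ in particular sets every variable of $E$ to $\true$. Hence every model of $\Gamma\land A$ is also a model of $\Gamma\land E$, and therefore satisfies $\varphi$ by the assumption on $E$.

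It then remains to establish the satisfiability of $\Gamma\land A$, and here I would split into the two cases. If $[B]\subseteq\CloneR_1$, every $\varphi'\in\Gamma$ is $\true$-reproducing, so the all-$\true$ assignment on $\Vars{\Gamma}$ satisfies $\Gamma$; since $A\subseteq\Vars{\Gamma}$, it also satisfies $A$, so $\Gamma\land A$ is satisfiable (and one does not even need the assumption that a solution exists). If $[B]\subseteq\CloneM$, I would take any model $\sigma$ of $\Gamma\land E$ (which exists by hypothesis) and define $\sigma'$ by flipping all variables in $A\setminus E$ from their current value to $\true$ (and leaving all other variables untouched). Then $\sigma\leq\sigma'$ pointwise, so monotonicity of every connective used in $\Gamma$ gives $\sigma'\models\Gamma$; by construction $\sigma'$ assigns $\true$ to all of $A$, hence $\sigma'\models\Gamma\land A$.

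No step is really an obstacle: the only mild subtlety is to remember that ``$E\subseteq A$'' in the positive setting means $E$ consists of positive literals whose variables lie in $A$, so that $A$ being all-$\true$ forces $E$ to be all-$\true$; this is what makes the entailment step uniform across the two hypotheses, while the satisfiability step is exactly where the assumption on $B$ is used.
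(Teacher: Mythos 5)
Your proof is correct and follows the same route as the paper's, which merely asserts that ``one easily verifies'' the satisfiability of $\Gamma\land A$ and the entailment $\Gamma\land A\models\varphi$ given a solution $E\subseteq A$; you supply exactly those verifications, with the entailment following from $A$ forcing $E$ and the satisfiability handled by the all-$\true$ assignment for $[B]\subseteq\CloneR_1$ and by raising a model of $\Gamma\land E$ on $A$ for $[B]\subseteq\CloneM$. No gaps.
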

  \begin{proof}
    Let $E \subseteq A$ be an arbitrary explanation for the given instance $(\Gamma,A,\varphi)$.
    One easily verifies that 
    $\Gamma \land A$ remains satisfiable if $\Gamma \land E$ was, and
    $\Gamma \land A \land \neg \varphi$ remains unsatisfiable if $\Gamma \land E \land \neg \varphi$ was.
    Conversely, if $A$ is not an explanation, no proper subset $E \subseteq A$ can be an explanation either.
    \qed
  \end{proof}
  
  As a consequence we will see that some of the formerly $\NP$-complete cases become tractable
  and that some of the formerly $\SigPtwo$-complete cases become $\coNP$-complete.

  
\subsection{The complexity of $\ABD[\HP](B,\MPQ)$}\label{subsec:complexity_pos_pq}

  \begin{proposition}\label{prop:P-abd(PQ)-M-L}
    Let $[B]\subseteq\CloneM$. Then $\ABD[\HP](B, \MPQ)\in \L$.
  \end{proposition}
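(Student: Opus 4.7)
The plan is to combine Lemma~\ref{lem:p-abd-useful} with the fact that satisfiability of monotonic $B$-formulae reduces to a single evaluation. By the lemma, the instance $\calP = (\Gamma, A, q)$ admits an explanation if and only if $A$ itself is one, so it suffices to verify in logarithmic space the two conditions: (i) $\Gamma \wedge A$ is satisfiable, and (ii) $\Gamma \wedge A \models q$.

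First I would substitute every variable $x \in A$ by $\true$ in $\Gamma$, obtaining $\Gamma' := \Gamma[x_1/\true, \ldots, x_k/\true]$, where $A = \{x_1, \ldots, x_k\}$. Since $B \subseteq \CloneM$, every formula of $\Gamma'$ still computes a monotonic Boolean function. The crucial observation is then that a monotonic formula is satisfiable if and only if it is satisfied by the assignment mapping every remaining variable to $\true$. Hence (i) reduces to evaluating every formula of $\Gamma'$ at the all-$\true$ assignment. For (ii), note that $\Gamma \wedge A \models q$ is equivalent to the unsatisfiability of $\Gamma' \wedge \neg q$, i.e., of $\Gamma'[q/\false]$; by the same monotonicity argument this holds precisely when some formula of $\Gamma'[q/\false]$ is not satisfied by the all-$\true$ assignment of the remaining variables.

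Both tests therefore amount to evaluating $\Gamma$ under one of two explicit truth assignments, which can be done in $\L$ since the Boolean formula value problem is in $\L$; the required substitutions and the iteration over the formulae of $\Gamma$ only add logarithmic overhead. The one point that needs a brief check is that the formulae at hand are $B$-formulae rather than generic Boolean formulae, but this is immediate since each connective in $B$ is a fixed monotonic function whose value on given bits can be looked up in constant space. I do not expect any real obstacle here: once Lemma~\ref{lem:p-abd-useful} is invoked, the argument collapses to routine logspace bookkeeping, entirely analogous to the monotonic cases already treated in Proposition~\ref{prop:abd(PQ)-E-N-V-logspace}.
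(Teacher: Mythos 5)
Your proposal is correct and follows essentially the same route as the paper: invoke Lemma~\ref{lem:p-abd-useful} to reduce to testing whether $A$ itself is a solution, then exploit monotonicity to decide satisfiability of $\Gamma\land A$ and unsatisfiability of $\Gamma\land A\land\neg q$ in logarithmic space. You merely spell out the detail the paper delegates to a citation (evaluation at the all-$\true$ assignment) and, if anything, are slightly more careful in explicitly including the satisfiability check of $\Gamma\land A$.
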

  \begin{proof}
    According to Lemma~\ref{lem:p-abd-useful} it suffices to test if $A$ is a solution, that is, to test if $\Gamma\land A\neg q$ or equivalently
    $\Gamma\land A[q/0]$ is unsatisfiable. This can be done in  logarithmic space, since $\Gamma\land A$ is a
    monotonic formula \cite{lew79}.
    \qed
  \end{proof}

  \begin{proposition}\label{prop:P-abd(PQ)-R1-coNP-c}
    Let $\CloneS_{02} \subseteq [B] \subseteq \CloneR_1$ or $\CloneS_{12} \subseteq [B] \subseteq \CloneR_1$ or
    $\CloneD_1 \subseteq [B] \subseteq \CloneR_1$.
    Then $\ABD[\HP](B, \MPQ)$ is $\coNP$-complete.
  \end{proposition}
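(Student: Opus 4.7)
I would prove membership in $\coNP$ first. By Lemma~\ref{lem:p-abd-useful}, since $[B]\subseteq\CloneR_{1}$, an instance $(\Gamma,A,q)$ of $\ABD[\HP](B,\MPQ)$ admits a positive explanation if and only if $A$ itself is one. Because every $B$-formula is $\true$-reproducing, $\Gamma\land A$ is satisfied by the all-true assignment, so satisfiability is automatic and the only nontrivial test is whether $\Gamma\land A\land\neg q$ is unsatisfiable, which is manifestly in $\coNP$.

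For $\coNP$-hardness I would reduce from $3$-UNSAT. Given a $3$-CNF $\varphi=\bigwedge_{i\in I}c_{i}$ over $x_{1},\ldots,x_{n}$, let $c'_{i}$ be obtained from $c_{i}$ by replacing every negative literal $\neg x_{j}$ by a fresh positive companion $x'_{j}$, and let $q$ be yet another fresh variable. Map $\varphi$ to the instance $(\Gamma,\emptyset,q)$ with
\[
 \Gamma\ :=\ \{x_{j}\lor x'_{j}\mid 1\le j\le n\}\,\cup\,\{x_{j}\to(x'_{j}\to q)\mid 1\le j\le n\}\,\cup\,\{c'_{i}\mid i\in I\}.
\]
The crux is that $\Gamma\land\neg q$ forces, for every $j$, exactly one of $x_{j},x'_{j}$ to be true: $x_{j}\lor x'_{j}$ forces at least one, and $x_{j}\to(x'_{j}\to q)$ together with $q=\false$ forbids both. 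A satisfying assignment of $\Gamma\land\neg q$ therefore corresponds to a truth assignment of $x_{1},\ldots,x_{n}$ that satisfies each $c'_{i}$, i.e.\ satisfies $\varphi$; the converse is immediate. Hence $\Gamma\models q$ iff $\varphi$ is unsatisfiable, which is exactly when the unique candidate explanation $E=\emptyset$ is a solution.

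It remains to realize $\Gamma$ in $B$-formulae. Invoking Lemma~\ref{lem:constants_sometimes_available}, it suffices to produce a $(B\cup\{\true\})$-instance, so I only need constant-size $(B\cup\{\true\})$-expressions for the two connectives used, namely $\lor$ and $\to$. If $[B]\supseteq\CloneS_{02}$, substituting $\true$ for the middle argument of the base function $x\lor(y\land\neg z)$ yields $z\to x$, and $(x\to y)\to y=x\lor y$ then gives $\lor$; in fact these computations show $[\CloneS_{02}\cup\{\true\}]=\CloneS_{0}$. If $[B]\supseteq\CloneS_{12}$ or $[B]\supseteq\CloneD_{1}$, appropriate substitutions of $\true$ into the respective base functions produce both $\land$ and $\to$, and since $[\{\land,\to\}]=\CloneR_{1}$ contains $\lor$ as well, both connectives are again accessible. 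Every replacement has constant size, so the reduction stays polynomial; a final application of Lemma~\ref{lem:constants_sometimes_available} converts the result into an $\ABD[\HP](B,\MPQ)$-instance.

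The main obstacle is this last paragraph's clone bookkeeping: one has to check in each of the three cases that $\true$ really gives access to $\lor$ and $\to$ through short formulae, thereby avoiding the possible exponential blow-up warned about after Lemma~\ref{lem:constants_sometimes_available}. The logical content of the reduction itself---using fresh companions $x'_{j}$ to emulate negation and $\neg q$ as a switch that activates the ``at most one'' constraints---is the standard device when every available connective is $\true$-reproducing.
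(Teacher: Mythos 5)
Your proof is correct and follows the same skeleton as the paper's: membership via Lemma~\ref{lem:p-abd-useful} (identical argument), and hardness via a reduction from $\overline{\ThreeSAT}$ producing an instance $(\Gamma,\emptyset,q)$ in which the empty set is an explanation iff the input formula is unsatisfiable. The only real divergence is the hardness gadget. The paper first collapses all three cases to $\CloneS_0\subseteq[B]$ (using $[\CloneD_1\cup\{\true\}]=[\CloneS_{12}\cup\{\true\}]=\CloneR_1$ and $[\CloneS_{02}\cup\{\true\}]=\CloneS_0$), writes each clause $c_i$ as a $\{\limplies,\false\}$-formula and substitutes $q$ for $\false$, so that $\Gamma\land\neg q\equiv\varphi\land\neg q$ with no extra variables and only the connective $\limplies$ needed. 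You instead double the variables ($x_j'$ as a positive stand-in for $\neg x_j$) and enforce complementarity with the switch clauses $x_j\lor x_j'$ and $x_j\to(x_j'\to q)$; this is the device the paper itself uses elsewhere (Propositions~\ref{prop:abd(MF)-SigP2-c} and~\ref{prop:P-abd(MF)-M-coNP-c}), and your equivalence argument and clone bookkeeping (constant-size $(B\cup\{\true\})$-representations of $\lor$ and $\to$ in each of the three cases) check out. The paper's encoding is slightly leaner; yours is equally valid and perhaps more transparent about why $q$ acts as a switch.
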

  \begin{proof}
    According to Lemma~\ref{lem:p-abd-useful} it suffices to test whether $A$ is a solution.
    Since $\Gamma \wedge A$ is always satisfiable, only the task of testing whether 
    $\Gamma \wedge A \wedge \neg q$ is unsatisfiable remains. And this can be done in $\coNP$.

    Since $[\CloneD_1 \cup \{\true\}] = [\CloneS_{12} \cup \{\true\}] = \CloneR_1$, $[\CloneS_{02} \cup \{\true\}] = \CloneS_0$ and
    $\CloneS_0 \subseteq \CloneR_1$, it suffices to show hardness for the case $\CloneS_0\subseteq[B]$ by Lemma~\ref{lem:constants_sometimes_available}. 
    To show $\coNP$-hardness we give a reduction from $\overline{\ThreeSAT}$. Let $\varphi = \bigwedge_{i\in I}c_i$
    be a 3-CNF-formula, \emph{i.e.}, each clause $c_i$ consists of the disjunction of exactly three literals. Since $[\{\limplies,\false\}]=\CloneBF$,
    each clause $c_i$ has a representation as a $\{\limplies,\false\}$-formula which we indicate by $c'_i$. 
    Let $q$ be a fresh proposition. We map $\varphi$ to $(\Gamma, \emptyset, q)$, where we define $\Gamma = \bigcup_{i\in I}c'_i[\false / q]$.
    Note that $\Gamma$ is a set of $\CloneS_0$-formulae of polynomial size and $\true$-reproducing.
    Let $\varphi$ be unsatisfiable. Then $\Gamma$ is satisfied by the assignment setting to $\true$ all propositions and 
    $\Gamma \wedge \neg q$ is unsatisfiable, because it is equivalent to $\varphi \wedge \neg q$. 
    Summing up, $\emptyset$ is an	explanation for $(\Gamma, \emptyset, q)$.
    Conversely, let $\varphi$ be satisfiable. 
    This implies that $\Gamma \wedge \neg q$ is satisfiable and thus $(\Gamma, \emptyset, q)$ has no explanations.
    
    It remains to transform $(\Gamma, \emptyset, q)$ into a $\ABD[\HP](B, \MPQ)$-instance for any relevant $B$.
    As $\mathord{\limplies}\in\CloneS_0\subseteq[B]$, this is done by replacing in $\Gamma$ every occurrence of 
    $\limplies$ by its $B$-representation.
    \qed
  \end{proof}

  \begin{theorem}\label{thm:P-abd(PQ)_main}
    Let $B$ be a finite set of Boolean functions. Then the positive abduction problem for propositional $B$-formulae  with a positive literal manifestation, $\ABD[\HP](B,\MPQ)$, is
    \begin{enumerate}
      \item $\SigPtwo$-complete if $\CloneD\subseteq [B]$ or $\CloneS_1\subseteq [B]$,
      \item $\coNP$-complete if $\CloneS_{02}\subseteq [B]\subseteq \CloneR_1$ or $\CloneS_{12}\subseteq [B]\subseteq \CloneR_1$
            or  $\CloneD_{1}\subseteq [B]\subseteq \CloneR_1$,
      \item in $\P$ and $\ParityL$-hard if $\CloneL_{2}\subseteq [B]\subseteq \CloneL$,
      \item in $\L$ in all other cases.
    \end{enumerate}
  \end{theorem}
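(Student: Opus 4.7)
The plan is to prove each of the four items in turn; items 2--4 assemble quickly from earlier material, while item 1 is where the real work lies.

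Item 2 is exactly Proposition~\ref{prop:P-abd(PQ)-R1-coNP-c}. For item 3, the $\ParityL$-hardness transfers directly from Proposition~\ref{prop:abd(PQ)-L-P}: that reduction produces instances with $A=\emptyset$, so the positive and symmetric problems coincide on its image; membership in $\P$ is inherited from the algorithm of \cite{zanuttini03} for abduction on linear equations, which can be seen to apply to the positive variant as well. For item 4 I would split according to whether $[B]\subseteq\CloneM$ or $[B]\subseteq\CloneN$. The first subcase is Proposition~\ref{prop:P-abd(PQ)-M-L}. In the second, $\Gamma$ is equivalent to a conjunction of literals and constants; since $q\in\Vars{\Gamma}\setminus A$ while $E\subseteq A$, an explanation exists iff $\Gamma$ is consistent and already contains $q$ as a literal, which is decidable in $\L$ in the spirit of Proposition~\ref{prop:abd(PQ)-E-N-V-logspace}.

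The main work is item 1, the $\SigPtwo$-completeness. Membership is standard: guess $E\subseteq A$ and query an $\NP$ oracle to check that $\Gamma\land E$ is satisfiable while $\Gamma\land E\land\neg q$ is not. For hardness the plan is to reduce from the symmetric problem $\ABD(B,\MPQ)$, which is $\SigPtwo$-complete in both cases of item 1 by Proposition~\ref{prop:abd(PQ)-SigP2-c} (since $\CloneS_1\supseteq\CloneS_{12}$ and $\CloneD\supseteq\CloneD_1$). Given a symmetric instance $(\Gamma,A,q)$, I would double each hypothesis variable, introducing fresh propositions $x^+,x^-$ for every $x\in A$, replace each occurrence of $x$ in $\Gamma$ by $x^+$ and each occurrence of $\neg x$ by $x^-$, and add a constraint $x^+\xor x^-$ forcing the two copies to carry opposite truth values in every model. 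With $A':=\{x^+,x^-\mid x\in A\}$, consistent positive explanations $E'\subseteq A'$ of the new instance correspond one-to-one with consistent symmetric explanations $E\subseteq\Lits{A}$ via $x\in E\Leftrightarrow x^+\in E'$ and $\neg x\in E\Leftrightarrow x^-\in E'$; preservation of satisfiability of $\Gamma\land E$ and of the entailment of $q$ is routine once one observes that the $\xor$-constraints enforce $x^-\equiv\neg x^+$ on every satisfying assignment.

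The principal obstacle is realising the above blueprint by $B$-formulae of polynomial size for both assumptions on $B$. Here Lemma~\ref{lem:constants_sometimes_available} is essential: it allows $\true$ to be adjoined to $B$, and since $[\CloneS_1\cup\{\true\}]=\CloneBF$ (using $\neg y\equiv\true\land\neg y$ together with the base $x\land\neg y$) and $[\CloneD\cup\{\true\}]=\CloneBF$ (since the self-dual ternary majority specialises to $x\land\neg y$ when its third argument is fixed to $\true$), every connective needed in the reduction---in particular $\xor$, $\land$ and $\neg$---has a bounded-size $B\cup\{\true\}$-representation. Writing each formula of the original $\Gamma$ as a balanced tree of logarithmic depth and substituting these fixed-size representations, in the manner of Proposition~\ref{prop:abd(PQ)-L-P}, prevents any exponential blow-up and yields a logspace reduction, completing item 1.
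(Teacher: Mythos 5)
Your handling of items 2 and 4 matches the paper's: item 2 is exactly Proposition~\ref{prop:P-abd(PQ)-R1-coNP-c}, and item 4 rests on Proposition~\ref{prop:P-abd(PQ)-M-L} (your explicit treatment of $[B]\subseteq\CloneN$, which the paper leaves implicit, is correct, since $q\notin A$ forces any would-be explanation to add nothing). Where you genuinely diverge is item 1. The paper does not reduce from the symmetric problem; it invokes the result of Nordh and Zanuttini \cite{noza08} that \emph{positive} abduction is already $\SigPtwo$-hard for clausal knowledge bases, and then only rewrites clauses as $B$-formulae as in Proposition~\ref{prop:abd(PQ)-SigP2-c}. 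Your variable-doubling gadget (rename $x$ to $x^+$, take hypotheses $\{x^+,x^-\}$, add the constraint $x^+\xor x^-$) instead reduces $\ABD(B,\MPQ)$ to $\ABD[\HP](B,\MPQ)$ directly. The correspondence checks out: the $\xor$-constraint rules out $\{x^+,x^-\}\subseteq E'$ and makes $x^-\in E'$ act exactly as $\neg x\in E$, in both directions of the entailment; the gadgets are constant-size $B\cup\{\true\}$-formulae since $[\CloneS_1\cup\{\true\}]=[\CloneD\cup\{\true\}]=\CloneBF$, and the rest of $\Gamma$ undergoes only a renaming, so there is no blow-up (the balanced-tree rewriting you invoke is in fact unnecessary here, as the input $\Gamma$ is already a set of $B$-formulae and contains no explicit occurrences of $\neg x$ to substitute). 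This buys a self-contained hardness proof that does not rely on the positive-explanation hardness result of \cite{noza08}, at the price of an extra gadget.

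The one step that does not hold up as written is the $\P$-membership in item 3. You assert that the Gaussian-elimination algorithm of \cite{zanuttini03} for \emph{symmetric} affine abduction ``can be seen to apply to the positive variant as well.'' The paper deliberately avoids this claim: it cites \cite[Theorem~66]{noza08} instead, and its concluding remarks state explicitly that the Gaussian-elimination method fails once explanations are restricted to be positive. That this failure is real and not cosmetic is witnessed by Proposition~\ref{prop:P-abd(PT)-L0-L3-NP-c}: for $[B]\in\{\CloneL,\CloneL_0,\CloneL_3\}$ the positive problem becomes $\NP$-complete as soon as the manifestation is a term, so searching for a positive explanation over an affine knowledge base is not a routine restriction of the symmetric search. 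For $[B]\subseteq\CloneL_1$ you could instead appeal to Lemma~\ref{lem:p-abd-useful} (only $E=A$ need be tested), but for $\CloneL_0$, $\CloneL_3$ and $\CloneL$ you need the dedicated single-literal algorithm of \cite[Theorem~66]{noza08}; as it stands, this step is a gap.
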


  \begin{proof}
    \begin{enumerate}
      \item In \cite{noza08}, Nordh and Zanuttini prove that the abduction problem 
            in which the knowledge base is a set of clauses remains $\SigPtwo$-hard even 
            if explanations are required to comprise positive literals only. A reduction 
            from this problem can be done analogously to the one in Proposition~\ref{prop:abd(PQ)-SigP2-c}.
      \item See Proposition~\ref{prop:P-abd(PQ)-R1-coNP-c}.
      \item Membership in $\P$ follows from \cite[Theorem~66]{noza08}. For the $\ParityL$-hardness the same reduction as in
            Proposition~\ref{prop:abd(PQ)-L-P} works.
      \item See Proposition~\ref{prop:P-abd(PQ)-M-L}. 
      \qed
    \end{enumerate}
  \end{proof}

\subsection{Variants of $\ABD[\HP](B)$}\label{subsec:positive_variants}

Having examined the complexity of positive abduction for positive literal manifestations, 
we will now consider positive abduction for manifestation that are restricted to be 
respectively a clause, a term, or a $B$-formula. But first let us make a remark on the 
complexity of positive abduction when the manifestation is a (not necessarily positive) literal or a negative literal.

  \begin{remark}\label{rem:p-abd-Q-PQ-NQ}
    Again all upper and lower bounds for $\ABD[\HP](B,\MPQ)$ easily carry over to $\ABD[\HP](B,\MQ)$.
    For $\ABD[\HP](B,\MNQ)$ the problem becomes trivial if $[B] \subseteq \CloneR_1$ (Lemma~\ref{lem:p-abd-useful}).
    For $\CloneL_0 \subseteq [B] \subseteq \CloneL$ and $\CloneL_3 \subseteq [B] \subseteq \CloneL$, we obtain membership in $\P$
    from \cite[Theorem~66]{noza08}. For all remaining cases (\emph{i.e.}, for $\CloneD\subseteq [B]$ and $\CloneS_1\subseteq [B]$), we obtain
    $\SigPtwo$-completeness from an easy adaption of the first part in the proof of Proposition~\ref{thm:P-abd(PQ)_main}.
  \end{remark}
  
  Analogously to the symmetric case the algorithms can be extended to clauses. Thus, $\ABD[\HP](B, \MPC)$ is classified as $\ABD[\HP](B, \MPQ)$. Similarly the classifications for $\ABD[\HP](B, \MC)$ and $\ABD[\HP](B, \MNC)$ are the same as classifications for $\ABD[\HP](B, \MQ)$ and $\ABD[\HP](B, \MNQ)$.

  \begin{theorem}\label{thm:P-abd(PC)_main}
    Let $B$ be a finite set of Boolean functions. Then the positive abduction problem for propositional $B$-formulae  with a positive clause manifestation, $\ABD[\HP](B,\MPC)$, is
    \begin{enumerate}
      \item $\SigPtwo$-complete if $\CloneD\subseteq [B]$ or $\CloneS_1\subseteq [B]$,
      \item $\coNP$-complete if $\CloneS_{02}\subseteq [B]\subseteq \CloneR_1$ or $\CloneS_{12}\subseteq [B]\subseteq \CloneR_1$
            or  $\CloneD_{1}\subseteq [B]\subseteq \CloneR_1$,
      \item in $\P$ and $\ParityL$-hard if $\CloneL_{2}\subseteq [B]\subseteq \CloneL$,
      \item in $\L$ in all other cases.
    \end{enumerate}
  \end{theorem}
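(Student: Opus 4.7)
The plan is to show that each complexity bound from Theorem~\ref{thm:P-abd(PQ)_main} transfers to $\ABD[\HP](B,\MPC)$. Since a positive literal is a one-literal positive clause, the identity map is a logspace reduction $\ABD[\HP](B,\MPQ) \leqlogm \ABD[\HP](B,\MPC)$, so the $\SigPtwo$-hardness in case~(1), the $\coNP$-hardness in case~(2), and the $\ParityL$-hardness in case~(3) are all inherited directly.

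For the upper bounds, fix an instance $(\Gamma, A, c)$ with $c = q_1 \vee \cdots \vee q_k$ a positive clause. Case~(1) is immediate: guess $E \subseteq A$ and verify that $\Gamma \wedge E$ is satisfiable while $\Gamma \wedge E \wedge \neg q_1 \wedge \cdots \wedge \neg q_k$ is not, which gives a $\SigPtwo$-procedure. In case~(2), the hypothesis $[B] \subseteq \CloneR_1$ together with Lemma~\ref{lem:p-abd-useful} reduces the problem to checking whether $A$ is a solution; as $\Gamma \wedge A$ is satisfied by the all-$\true$ assignment, it remains only to decide unsatisfiability of $\Gamma \wedge A \wedge \neg q_1 \wedge \cdots \wedge \neg q_k$, a single $\coNP$ query. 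Case~(3) follows directly from~\cite[Theorem~66]{noza08}, whose polynomial-time procedure for affine knowledge bases handles manifestations more general than positive clauses.

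Case~(4) requires the most explicit adaptation. For $[B] \subseteq \CloneM$, I would again invoke Lemma~\ref{lem:p-abd-useful} and reduce to testing unsatisfiability of $\Gamma[q_1/\false,\ldots,q_k/\false] \wedge A$, which remains monotonic and is thus decidable in $\L$ by~\cite{lew79}. For $[B] \subseteq \CloneN$ or $[B] \subseteq \CloneE$, every formula in $\Gamma$ is equivalent to a literal or a constant, so the existence of an explanation can be tested in $\L$ after a routine substitution. For $[B] \subseteq \CloneV$ I would mirror Proposition~\ref{prop:abd(PQ)-E-N-V-logspace}: each $\varphi \in \Gamma$ is a disjunction of variables and constants, and the instance has a solution iff for some $1 \leq i \leq k$ there is a formula in $\Gamma$ equivalent to $q_i \vee x_1 \vee \cdots \vee x_\ell$ with $\{x_1,\ldots,x_\ell\} \subseteq A$ and $\Gamma[x_1/\false,\ldots,x_\ell/\false]$ is satisfiable; this test is performable in logspace.

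The main obstacle, such as it is, lies in the $\CloneV$ case of~(4): unlike the single-literal variant, one must identify \emph{which} disjunct $q_i$ of $c$ to force, and a priori several disjunctions in $\Gamma$ might seem to be needed jointly. However, the minimum-model semantics of $\CloneV$-formulae implies that forcing a single $q_i$ through one suitable clause of $\Gamma$ already suffices, and all required substitutions and satisfiability checks remain logspace-computable, so the extension is essentially routine and no new technique is required.
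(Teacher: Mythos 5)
Your overall strategy is the same as the paper's (which disposes of this theorem in one line: hardness is inherited because a positive literal is a one-literal positive clause, and the algorithms for $\ABD[\HP](B,\MPQ)$ extend to clauses), and cases (1)--(3) as well as the $\CloneM$, $\CloneN$ and $\CloneE$ parts of case (4) are fine. However, your treatment of the $\CloneV$ sub-case of (4) --- the one place you single out as the ``main obstacle'' and work out in detail --- is wrong. The criterion you import from Proposition~\ref{prop:abd(PQ)-E-N-V-logspace} (``there is a formula in $\Gamma$ equivalent to $q_i \vee x_1 \vee \cdots \vee x_\ell$ with $\{x_1,\ldots,x_\ell\} \subseteq A$ and $\Gamma[x_1/\false,\ldots,x_\ell/\false]$ satisfiable'') implicitly takes $E = \{\neg x_1,\ldots,\neg x_\ell\}$ as the explanation; that is legitimate for \emph{symmetric} abduction but is exactly what \emph{positive} abduction forbids. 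Concretely, for $\Gamma = \{q \vee a\}$, $A = \{a\}$ and manifestation $q$, your criterion declares a solution to exist, yet neither $E=\emptyset$ nor $E=\{a\}$ entails $q$, so $\ABD[\HP]$ has no solution here. Your closing claim that ``forcing a single $q_i$ through one suitable clause of $\Gamma$ already suffices'' fails for the same reason: forcing $q_i$ through such a clause requires setting the $x_j$ to $\false$, which a positive explanation cannot do.

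The gap is easily repaired, and in fact your own proof already contains the repair: since $\CloneV \subseteq \CloneM$, the $\CloneV$ case is subsumed by your $[B]\subseteq\CloneM$ argument via Lemma~\ref{lem:p-abd-useful} --- test whether $A$ is a solution, i.e., whether $\Gamma\wedge A$ is satisfiable and $\Gamma[q_1/\false,\ldots,q_k/\false]\wedge A$ is not. (If one does want an explicit combinatorial criterion for $\CloneV$, the correct one is that no formula of $\Gamma$ is equivalent to $\false$ and some formula of $\Gamma$ is equivalent to a disjunction of variables of the manifestation clause only; the hypotheses $A$ play no role.) You should delete or replace the erroneous $\CloneV$ paragraph; with that change the proof is correct and matches the paper's intended argument.
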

  
\subsubsection{The Complexity of $\ABD[\HP](B, \MPT)$}\label{subsubsec:complexity_pos_pt}

  The classification for positive terms is identical to the one for a single positive literal, except for the affine clones $\CloneL_0$, $\CloneL_3$, and $\CloneL$. For these, the complexity of $\ABD[\HP](B, \MPT)$ jump from membership in $\P$ to $\NP$-completeness.
  
  \begin{proposition}\label{prop:P-abd(PT)-L0-L3-NP-c}
    Let $\CloneL_0\subseteq [B]\subseteq \CloneL$ or $\CloneL_3\subseteq [B]\subseteq \CloneL$. Then $\ABD[\HP](B,\MPT)$ is $\NP$-complete.
  \end{proposition}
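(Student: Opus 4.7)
The plan is to establish $\NP$-completeness by showing membership in $\NP$ and reducing from $\ThreeSAT$. Membership is immediate: I guess a candidate $E\subseteq A$ and verify in polynomial time via Gaussian elimination over $GF(2)$ that $\Gamma\land E$ is satisfiable and that $\Gamma\land E\land \neg q_i$ is unsatisfiable for each atom $q_i$ of the manifestation $t=q_1\land\cdots\land q_m$.

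For the hardness reduction, given a 3-CNF $\varphi=\bigwedge_{i=1}^m c_i$ with $c_i=l_{i,1}\lor l_{i,2}\lor l_{i,3}$ on variables $x_1,\ldots,x_n$, I construct $(\Gamma,A,t)$ as follows. For each $j$ I introduce hypothesis variables $a_j,b_j\in A$ and the constraint $a_j\oplus b_j$, which forces $a_j+b_j\equiv 1\pmod 2$, i.e., exactly one of $a_j,b_j$ to be true. For each clause $c_i$ I introduce three witness hypotheses $w_{i,1},w_{i,2},w_{i,3}\in A$, a query variable $q_i\notin A$, and the three constraints $w_{i,k}\oplus y_{i,k}\oplus q_i$ for $k=1,2,3$, where $y_{i,k}:=a_j$ if $l_{i,k}=x_j$ and $y_{i,k}:=b_j$ if $l_{i,k}=\neg x_j$. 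The manifestation is $t:=q_1\land\cdots\land q_m$.

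For correctness, the forward direction extends a satisfying assignment $\sigma$ of $\varphi$ to an explanation $E$ containing $a_j$ (if $\sigma(x_j)=1$) or $b_j$ (otherwise) for each $j$, together with one $w_{i,k_i}$ per clause with $l_{i,k_i}$ satisfied by $\sigma$; a direct check shows $\Gamma\land E$ is satisfiable and each $q_i$ forced to $1$. The converse is the step I expect to be most delicate: given a positive explanation $E$ and any solution $\tau$ of $\Gamma\land E$, the fact that $\tau(q_i)=1$ combined with the clause constraints yields $\tau(w_{i,k})=\tau(y_{i,k})$ for all $k$. The key observation is that if no $w_{i,k}$ lay in $E$, then simultaneously flipping $\tau$ on $\{w_{i,1},w_{i,2},w_{i,3},q_i\}$ (none of which belong to $E$) preserves every equation of $\Gamma\cup E$ and produces another solution with $q_i=0$, contradicting entailment; hence for each clause some $w_{i,k_i}\in E$, whence $\tau(y_{i,k_i})=\tau(w_{i,k_i})=1$ certifies that the assignment $\sigma(x_j):=\tau(a_j)$ satisfies clause $c_i$.

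Finally, I must express the construction within the available $B$-formulae. If $\CloneL_0\subseteq[B]\subseteq\CloneL$, then $a_j\oplus b_j$ and $w_{i,k}\oplus y_{i,k}\oplus q_i$ are directly $\CloneL_0$-formulae, hence admit polynomial-size $B$-representations in which each variable occurs once. For $\CloneL_3\subseteq[B]\subseteq\CloneL$, however, every $B$-formula computes a self-dual affine function with odd arity after cancellation, so the binary XOR $a_j\oplus b_j$ is not directly available; I work around this by introducing for each $j$ a fresh non-hypothesis variable $z_j$ and replacing $a_j\oplus b_j$ by the pair of $\CloneL_3$-formulae $\neg z_j$ and $a_j\oplus b_j\oplus z_j$, which jointly enforce $a_j+b_j\equiv 1\pmod 2$. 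The ternary XOR $w_{i,k}\oplus y_{i,k}\oplus q_i$ is already in $\CloneL_3$, completing the reduction.
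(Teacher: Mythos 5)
Your proof is correct. The membership argument coincides with the paper's: deciding whether a candidate $E\subseteq A$ is an explanation reduces to solving systems of linear equations over $GF(2)$, one satisfiability test for $\Gamma\land E$ and one unsatisfiability test per atom of the term. For hardness, however, you take a genuinely different route: the paper disposes of it in one line by invoking \cite[Theorem~70]{noza08}, whereas you build an explicit reduction from $\ThreeSAT$. Your gadget is sound: the constraints $a_j\xor b_j$ encode a truth assignment, the equations $w_{i,k}\xor y_{i,k}\xor q_i$ tie each $q_i$ to the witnesses, and the flipping argument is the right tool for the delicate direction --- flipping $\{w_{i,1},w_{i,2},w_{i,3},q_i\}$ touches no equation outside clause $i$ and no literal of a positive $E$ that avoids these witnesses, so entailment of $q_i$ forces some $w_{i,k_i}\in E$ and hence a satisfied literal in $c_i$. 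The handling of $\CloneL_3$ via the auxiliary variable $z_j$ with $\neg z_j$ in the knowledge base is also fine, and is essentially what Lemma~\ref{lem:constants_sometimes_available} would give you, since $[\CloneL_3\cup\{\true\}]=\CloneL$. One small imprecision: for $\CloneL_3\subseteq[B]\subseteq\CloneL$ the clone $[B]$ may equal $\CloneL$ itself, in which case not every $B$-formula is self-dual; but your workaround remains valid there because $\CloneL_3$-functions are in particular $\CloneL$-functions. What your approach buys is self-containedness and an explicit view of where the hardness comes from (the need to select witnesses when Gaussian elimination cannot be steered by positive hypotheses); what the paper's citation buys is brevity, at the price of importing a result proved in the constraint-language framework and leaving the translation into $B$-formulae implicit.
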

  \begin{proof}
    Let $(\Gamma, A, t)$ with $t=\bigwedge_{i\in I} x_i$ be an instance of $\ABD[\HP](B,\MPT)$. To check whether a given $E\subseteq A$ is an
    explanation, we have to test the
    satisfiability of $\Gamma \wedge E$ and the unsatisfiability of $\Gamma \wedge E \wedge \neg x_i$ for every $i\in I$. These tasks are equivalent
    to solving systems of linear equations, which can be done in polynomial time.
    The hardness follows directly from \cite[Theorem~70]{noza08}.
    \qed
  \end{proof}
  
  \begin{theorem}\label{thm:P-abd(PT)_main}
    Let $B$ be a finite set of Boolean functions. Then the positive abduction problem for propositional $B$-formulae  with a positive term manifestation, $\ABD[\HP](B,\MPT)$, is
    \begin{enumerate}
      \item $\SigPtwo$-complete if $\CloneD\subseteq [B]$ or $\CloneS_1\subseteq [B]$,
      \item $\coNP$-complete if $\CloneS_{02}\subseteq [B]\subseteq \CloneR_1$ or $\CloneS_{12}\subseteq [B]\subseteq \CloneR_1$
            or  $\CloneD_{1}\subseteq [B]\subseteq \CloneR_1$,
      \item $\NP$-complete if $[B] \in \{\CloneL,\CloneL_0,\CloneL_3\}$,
      \item in $\P$ and $\ParityL$-hard if $[B] \in \{\CloneL_1,\CloneL_2\}$,
      \item in $\L$ in all other cases.
    \end{enumerate}
  \end{theorem}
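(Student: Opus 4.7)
The plan is to treat the five cases largely in parallel with Theorem~\ref{thm:P-abd(PQ)_main}, observing that the only qualitative novelty introduced by a term manifestation appears among the affine clones. Since every positive literal is a positive term of length one, the logspace reduction $\ABD[\HP](B,\MPQ) \leqlogm \ABD[\HP](B,\MPT)$ is trivial, so all hardness results of Theorem~\ref{thm:P-abd(PQ)_main} transfer upward. It therefore remains mostly to argue the matching upper bounds and to establish the new $\NP$-hardness in case (3).

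For case (1), $\SigPtwo$-membership is the standard guess-and-check: guess $E \subseteq A$, then verify in $\coNP$ that $\Gamma \wedge E$ is satisfiable and $\Gamma \wedge E \wedge \neg t$ is not; hardness is immediate from Theorem~\ref{thm:P-abd(PQ)_main}(1). For case (2), the assumption $[B] \subseteq \CloneR_1$ allows us to invoke Lemma~\ref{lem:p-abd-useful}: an explanation exists iff $A$ itself is one. Writing $t = \bigwedge_{j=1}^n x_j$, satisfiability of $\Gamma \wedge A$ is automatic (every $B$-formula is $\true$-reproducing), while $\Gamma \wedge A \models t$ decomposes into the $n$ $\coNP$-tests ``$\Gamma \wedge A \wedge \neg x_j$ is unsatisfiable''; the conjunction of these stays in $\coNP$. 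Hardness again follows from Theorem~\ref{thm:P-abd(PQ)_main}(2).

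Cases (3) and (4) are where the affine clones split. Case (3) is handled in full by Proposition~\ref{prop:P-abd(PT)-L0-L3-NP-c}. For case (4), the crucial point is $\CloneL_1, \CloneL_2 \subseteq \CloneR_1$, so Lemma~\ref{lem:p-abd-useful} once more reduces the decision to checking whether $A$ is an explanation. For affine $\Gamma \wedge A$ both the satisfiability test and each of the $n$ entailment tests $\Gamma \wedge A \wedge \neg x_j$ reduce to deciding solvability of linear systems over $GF(2)$, placing the problem in $\P$. The $\ParityL$-hardness is inherited directly from Proposition~\ref{prop:abd(PQ)-L-P}: the reduction constructed there yields instances with empty hypothesis set, so symmetric and positive abduction necessarily coincide on its output.

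For case (5), the remaining clones are exactly those covered in the $\MPQ$-classification by Propositions~\ref{prop:abd(PQ)-E-N-V-logspace} and~\ref{prop:P-abd(PQ)-M-L}, that is, sub-clones of $\CloneM$, $\CloneE$, $\CloneV$, and $\CloneN$. Each of those logspace algorithms extends from literals to a term $t = \bigwedge_{j=1}^n x_j$ by repeating the single-literal procedure $n$ times while reusing its logarithmic workspace. The main obstacle of the whole proof is administrative: one must verify over Post's lattice that every clone outside cases (1)--(4) really does lie in one of the easy regions handled before, and that terms never expose additional structure allowing Boolean functions to be combined in an unexpected way. Once this bookkeeping is discharged, the argument closes.
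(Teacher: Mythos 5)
Your proposal is correct and follows essentially the same route as the paper: hardness inherited from the single-literal case, Lemma~\ref{lem:p-abd-useful} plus the decomposition of the term entailment into single-literal unsatisfiability tests for the upper bounds in cases (2), (4) and (5), Proposition~\ref{prop:P-abd(PT)-L0-L3-NP-c} for case (3), and the empty-hypothesis-set observation for reusing the $\ParityL$-hardness reduction. The only difference is that you spell out a few bookkeeping details the paper leaves implicit.
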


  \begin{proof}
    \begin{enumerate}
      \item Follows from the first item of Proposition~\ref{thm:P-abd(PQ)_main}.
      \item Both membership and hardness follow from Proposition~\ref{prop:P-abd(PQ)-R1-coNP-c}.
      \item See Proposition~\ref{prop:P-abd(PT)-L0-L3-NP-c}.
      \item For the $\ParityL$-hardness the same reduction as in Proposition~\ref{prop:abd(PQ)-L-P} works.
            Since $\CloneL_1 \subseteq \CloneR_1$, according to Lemma~\ref{lem:p-abd-useful}, it suffices to check whether $A$ is a solution.
            This task reduces to solve systems of linear equations which is in $\P$.
      \item Analogously to Proposition~\ref{prop:P-abd(PQ)-M-L}.
      \qed
    \end{enumerate}
  \end{proof}

  \begin{remark}\label{rem:p-abd-T-PT-NT}
    Again all upper and lower bounds for $\ABD[\HP](B,\MPT)$ carry over to $\ABD[\HP](B,\MT)$.
    For $\ABD[\HP](B,\MNT)$ the problem becomes trivial if $[B] \subseteq \CloneR_1$ (Lemma~\ref{lem:p-abd-useful}).
    For $[B] \in \{\CloneL,\CloneL_0,\CloneL_3\}$, we obtain $\NP$-completeness with
    the hardness being obtained from an easy reduction from $\ABD[\HP](B,\MPT)$: 
    as we have $x \xor y \in [B \cup \{\true\}]$, we can simply transform the given positive term into a negative one.
    For the remaining cases (\emph{i.e.}, $\CloneD\subseteq [B]$ and $\CloneS_1\subseteq [B]$), we obtain 
    $\SigPtwo$-completeness from an adaption of the first part in the proof of Proposition~\ref{thm:P-abd(PQ)_main}.
  \end{remark}

\subsubsection{The Complexity of $\ABD[\HP](B, \MF)$}\label{subsubsec:complexity_pos_lb}

  The complexity of $\ABD[\HP](B,\MF)$ differs from the complexity of $\ABD[\HP](B,\MPQ)$ 
  for the clones either (a) above $\CloneE$ or $\CloneV$ and below $\CloneM$ or (b) above $\CloneL_0$ or $\CloneL_3$ and below $\CloneL$.
  For the former the complexity increases to $\coNP$-completeness, 
  whereas for the latter we obtain membership in $\NP$ and hardness for $\ParityL$;
  the exact complexity of positive abduction when both the knowledge base and the manifestation
  are represented by non-$\true$-reproducing affine formulae remains an open problem.
  
  \begin{proposition}\label{prop:P-abd(MF)-L-NP}
    Let $B$ be a finite set of Boolean functions such that
    $\CloneL_0\subseteq [B]\subseteq \CloneL$ or $\CloneL_3\subseteq [B]\subseteq \CloneL$.
    Then $\ABD[\HP](B,\MF) \in \NP$.
  \end{proposition}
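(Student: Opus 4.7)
The plan is a standard guess-and-check argument for $\NP$ membership. Given an instance $(\Gamma, A, \varphi)$ of $\ABD[\HP](B,\MF)$, the algorithm nondeterministically guesses a candidate $E \subseteq A$ (of polynomial size) and then verifies in polynomial time that $\Gamma \wedge E$ is satisfiable while $\Gamma \wedge E \wedge \neg\varphi$ is unsatisfiable. If such an $E$ exists it is, by definition, an explanation, so this gives exactly the decision problem.

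The key observation is that for $[B] \subseteq \CloneL$, every $B$-formula is affine and is therefore semantically equivalent to a single linear equation $x_{i_1} \xor \cdots \xor x_{i_k} \xor c$ over $GF(2)$. I would first argue that this linear form can be extracted from the formula tree in polynomial time by a bottom-up traversal: at each internal node one substitutes the already-computed affine representations of the children into the affine connective labeling the node, yielding a new affine combination of the input variables encoded by a bit-vector of length at most $|\Vars{\Gamma}|+1$. Consequently, the whole conjunction $\Gamma \wedge E$ can be rewritten as a system $S$ of linear equations of polynomial total size, and $\varphi$ as a single linear equation. Furthermore, negating an affine formula only flips its constant term, so $\neg\varphi$ is again affine and yields a single linear equation~$\ell$.

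Verification therefore reduces to two polynomial-time tasks over $GF(2)$: testing that $S$ is satisfiable, and testing that $S \cup \{\ell\}$ is unsatisfiable. Both are standardly handled by Gaussian elimination, which is also the essential ingredient behind the polynomial-time algorithm of~\cite{zanuttini03} used elsewhere in the paper. Combining the guess of $E$ with these polynomial-time checks places $\ABD[\HP](B,\MF)$ in~$\NP$.

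No serious obstacle arises once one makes explicit the identification of $B$-formulae with single linear equations over $GF(2)$; the argument is then a direct packaging of that observation with the standard nondeterministic guess-and-verify template, and the restriction to positive literals in $E$ plays no role beyond limiting the search space to $2^{|A|}$ candidates.
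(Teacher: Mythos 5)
Your proposal is correct and matches the paper's own argument: the paper likewise establishes $\NP$-membership by guessing $E\subseteq A$ and observing that the satisfiability of $\Gamma\wedge E$ and the unsatisfiability of $\Gamma\wedge E\wedge\neg\varphi$ reduce to solving systems of linear equations over $GF(2)$ in polynomial time. Your write-up merely spells out the bottom-up extraction of the linear forms, which the paper leaves implicit.
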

  \begin{proof}
    Let $E\subseteq A$ be a potential solution. The test for satisfiability of $\Gamma \wedge E$ and the test for unsatisfiability of
    $\Gamma \wedge E \wedge \neg \varphi$ are equivalent to solving two systems of linear equations, which can be done in polynomial time.
    \qed
  \end{proof}

  \begin{proposition}\label{prop:P-abd(MF)-M-coNP-c}
    Let $B$ be a finite set of Boolean functions such that
    $\CloneS_{00} \subseteq [B] \subseteq \CloneM$ or $\CloneS_{10} \subseteq [B] \subseteq \CloneM$ or $\CloneD_2 \subseteq [B] \subseteq \CloneM$.
    Then $\ABD[\HP](B,\MF)$ is $\coNP$-complete.
  \end{proposition}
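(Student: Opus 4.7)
The plan has two parts: $\coNP$ membership and $\coNP$-hardness. For membership, I invoke Lemma~\ref{lem:p-abd-useful}: since $[B]\subseteq\CloneM$, the instance $(\Gamma,A,\varphi)$ admits a solution if and only if $A$ itself is one. This reduces to checking that the monotonic formula $\Gamma\land A$ is satisfiable (easy in $\P$ via the all-$\true$ assignment) and that $\Gamma\land A\models\varphi$, which is a single $\coNP$ check by guessing a falsifying assignment.

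For hardness, the plan is to reduce from $\overline{\ThreeSAT}$. Given a 3-CNF $\psi=\bigwedge_j c_j$ over $y_1,\ldots,y_m$, I introduce companion variables $y_i'$ meant to stand for $\neg y_i$, and let $c_j'$ denote the positive clause obtained from $c_j$ by substituting each $\neg y_i$ with $y_i'$. The idealized construction is $\Gamma_0:=\{c_j'\}_j\cup\{y_i\lor y_i'\}_i$, $A:=\emptyset$, and manifestation $\varphi_0:=\bigvee_i(y_i\land y_i')$. The models of $\Gamma_0$ in which exactly one of $\{y_i,y_i'\}$ is true for each $i$ are in bijection with the satisfying assignments of $\psi$ and make $\varphi_0$ false; every other model of $\Gamma_0$ has some $y_i=y_i'=\true$ and thus satisfies $\varphi_0$ automatically. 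Hence $\Gamma_0\models\varphi_0$ iff $\psi$ is unsatisfiable, and by Lemma~\ref{lem:p-abd-useful} the abduction instance admits a solution iff $\Gamma_0\models\varphi_0$.

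The remaining work is to realize $\Gamma_0,\varphi_0$ as $B$-formulae in each clone family; by Lemma~\ref{lem:constants_sometimes_available}(1), I may assume $\true\in[B]$. For $\CloneS_{00}\subseteq[B]$, the clone contains $\lor$ and $f(x,y,z)=x\lor(y\land z)$ but no $\false$, so $\varphi_0$ is not directly buildable; I plan to introduce a fresh variable $s$ into $\Vars{\Gamma}$ via the subsumed clause $s\lor y_1\lor y_1'$ and use $\varphi:=s\lor\bigvee_i(y_i\land y_i')$, realized as the iterated $B$-formula $f(\cdots f(f(s,y_1,y_1'),y_2,y_2')\cdots,y_m,y_m')$, noting that models with $s=\true$ satisfy $\varphi$ automatically so that correctness is preserved. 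For $\CloneS_{10}\subseteq[B]$, one has $[B\cup\{\true\}]=\CloneM_1\supseteq\CloneS_{01}$ with constant-size translations, so the $\CloneS_{00}$-construction transfers canonically. The delicate case is $\CloneD_2\subseteq[B]$, where $[B\cup\{\true\}]=\CloneS^2_{01}$ is generated by $\{\mathrm{maj},\true\}$: every $\{\mathrm{maj},\true\}$-representation of $f$ must duplicate a variable, so iterating would cause exponential blow-up. I plan to avoid this by using a fresh unconstrained variable $q\in\Vars{\Gamma}$ and manifestation $\varphi:=\bigvee_i\mathrm{maj}(y_i,y_i',q)$; since $\mathrm{maj}(y_i,y_i',\false)=y_i\land y_i'$ and $\mathrm{maj}(y_i,y_i',\true)=y_i\lor y_i'$, the $q=\false$ models recover the $\Gamma_0\models\varphi_0$ question while the $q=\true$ models trivially satisfy $\varphi$ because every model of $\Gamma_0$ satisfies $\bigvee_i(y_i\lor y_i')$. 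I expect this $\CloneD_2$ case, precisely because of the variable-duplication trap, to be the main technical subtlety.
\qed
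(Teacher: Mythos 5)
Your proof is essentially the paper's: the membership argument via Lemma~\ref{lem:p-abd-useful} is identical, and your hardness reduction (renaming $\neg y_i$ to $y_i'$, adding the clauses $y_i\lor y_i'$, and asking whether $\bigvee_i(y_i\land y_i')$ is entailed) is exactly the paper's construction, starting from $\overline{\ThreeSAT}$ rather than 3-DNF tautology, which is the same problem. Your guard variable $s$ plays the role of the paper's $q$ (the paper attaches it to every clause, $c_i'\lor q$, and sets $\psi:=q\lor\bigvee_i(x_i\land x_i')$), and it is needed for essentially the reason you sense, though the precise obstruction is not the absence of $\false$ but that $\bigvee_i(y_i\land y_i')$ is not $0$-separating of degree $2$ and hence lies outside $\CloneS_{01}=[\CloneS_{00}\cup\{\true\}]$ and outside $\CloneS^2_{01}=[\CloneD_2\cup\{\true\}]$. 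The one step you should repair is the realization $f(\cdots f(f(s,y_1,y_1'),y_2,y_2')\cdots)$: the connective $f=x\lor(y\land z)$ is only guaranteed to lie in $[B\cup\{\true\}]$, not in $B$ itself, so it must be replaced by its $B$-representation, which may duplicate its first argument; a linear-depth chain then has exponential size and the reduction is no longer polynomial. The paper's fix, which also disposes of your ``delicate'' $\CloneD_2$ case without any separate $\mathrm{maj}$ gadget, is to write the manifestation as $\bigvee_i\bigl(q\lor(x_i\land x_i')\bigr)$, a disjunction of constant-size pieces, arrange the outer disjunction as a tree of logarithmic depth using associativity, and only then substitute $B$-representations; since $[\CloneS_{00}\cup\{\true\}]=\CloneS_{01}\subseteq\CloneS^2_{01}=[\CloneD_2\cup\{\true\}]\subseteq\CloneM_1=[\CloneS_{10}\cup\{\true\}]$, this single $\CloneS_{01}$-construction covers all three clone families uniformly.
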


  \begin{proof}
    We will first prove $\co\NP$-membership. 
    According to Lemma~\ref{lem:p-abd-useful}, it suffices to test whether $A$ is a solution. 
    This can be done by first verifying that $\Gamma \land A$ is satisfiable, and afterwards 
    verifying that $\Gamma \wedge A \wedge \neg \varphi$ is unsatisfiable. 
    As $\Gamma$ is a set of monotonic formulae, deciding the satisfiability of $\Gamma \wedge A$ 
    can be done in logarithmic space; and deciding whether $\Gamma \wedge A \wedge \neg \varphi$ is unsatisfiable is in $\coNP$.
    
    To establish $\coNP$-hardness, we give a reduction from the $\coNP$-hard problem to decide whether a given 3-DNF-formulae $\varphi$ is a tautology.
    Let $\Vars{\varphi} = \{x_1, \dots x_n\}$. Denote by $\overline{\varphi}$ the negation normal form of $\neg \varphi$ and
    let $\overline{\varphi}'$ be obtained from $\overline{\varphi}$ by replacing 
    all occurrences of $\neg x_i$ with a fresh proposition $x_i'$, $1\leq i \leq n$.
    That is, $\overline{\varphi}' \equiv \overline{\varphi}[\neg x_1/x'_1,\ldots,\neg x_n/x'_n]$. Thus
    $\overline{\varphi}'=\bigwedge _{i\in I} c'_i$ where every $c'_i$ is a disjunction of  three propositions. To $\varphi$ we associate the
    propositional abduction problem $  \calP=(\Gamma,\emptyset,\psi)$ defined as follows:
    \begin{align*}
    \Gamma:=\,& \{c'_i \lor q\mid   i \in I\}\; \cup\;  \{ x_i \lor x'_i \mid 1 \leq i \leq n\}, \\
      \psi :=\,&\textstyle q \lor \bigvee_{1 \leq i \leq n} (x_i \land x'_i).
    \end{align*}

    \noindent
    Observe that 
    \begin{align}
      \label{eq:conp} \Gamma \land \neg \psi \;	 \equiv \;	\neg\varphi \wedge \neg q \wedge \bigwedge_{i=1}^{n}(x_i \xor x'_i).
    \end{align}
    Suppose that $\varphi$ is a tautology, \emph{i.e.}, $\neg \varphi$ is unsatisfiable.  From \eqref{eq:conp} it follows that 
    $\Gamma \wedge \neg \psi$ is unsatisfiable. As $\Gamma$ is satisfiable, $\emptyset$ is a solution for $\calP$.
    
    Suppose conversely that $\emptyset$ is a solution for $\calP$. 
    Then $\Gamma \land \neg \psi \equiv \neg\varphi \wedge \neg q \wedge \bigwedge_{i=1}^{n}(x_i \xor x'_i)$ is unsatisfiable. 
    Since  $q$ and the $x'_i$ do not occur in $\varphi$, we obtain the unsatisfiability of $\neg \varphi$. Hence, $\varphi$ is a tautology.
    
    The transformation of $\calP$ into an $\ABD[\HP](B,\MF)$-instance for any relevant $B$ can be done in exactly the same way as in the proof of
    Proposition~\ref{prop:abd(MF)-SigP2-c}.
    \qed
  \end{proof}

  \begin{theorem}\label{thm:P-abd(F)_main}
    Let $B$ be a finite set of Boolean functions. Then the positive abduction problem for propositional $B$-formulae  with a $B$-formula manifestation, $\ABD[\HP](B,\MF)$, is
    \begin{enumerate}
      \item $\SigPtwo$-complete if $\CloneD\subseteq [B]$ or $\CloneS_1\subseteq [B]$,
      \item $\coNP$-complete if
        $\CloneS_{02}\subseteq [B]\subseteq \CloneR_1$ or $\CloneS_{12}\subseteq [B]\subseteq \CloneR_1$ or $\CloneD_{1}\subseteq [B]\subseteq \CloneR_1$
        or
        $\CloneS_{00} \subseteq [B] \subseteq \CloneM$ or $\CloneS_{10} \subseteq [B] \subseteq \CloneM$ or $\CloneD_2 \subseteq [B] \subseteq \CloneM$ 
      \item in $\NP$ and $\ParityL$-hard if $[B] \in \{\CloneL,\CloneL_0,\CloneL_3\}$,
      \item in $\P$ and $\ParityL$-hard if $[B] \in \{\CloneL_1,\CloneL_2\}$,
      \item in $\L$ in all other cases.
    \end{enumerate}
  \end{theorem}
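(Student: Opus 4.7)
The plan is to derive each of the five items from propositions already established in this
  section, exploiting the fact that every positive literal is a particular $B$-formula (after
  adding constants via Lemma~\ref{lem:constants_sometimes_available} if needed). Consequently,
  every lower bound previously proved for the $\MPQ$ variant transfers verbatim: this gives
  the $\SigPtwo$-hardness of item~1 from Theorem~\ref{thm:P-abd(PQ)_main}(1), the
  $\coNP$-hardness inside the $\CloneR_1$ subcases of item~2 from
  Proposition~\ref{prop:P-abd(PQ)-R1-coNP-c}, and the $\ParityL$-hardness in items~3 and~4
  from Proposition~\ref{prop:abd(PQ)-L-P}. Membership in $\SigPtwo$ for item~1 is the routine
  guess-and-verify.

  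For the upper bounds I would split on whether $[B]\subseteq\CloneR_1$. In that case,
  Lemma~\ref{lem:p-abd-useful} collapses the entire question to testing whether $A$ itself
  is a solution; the satisfiability of $\Gamma\wedge A$ is automatic because the all-$\true$
  assignment is a model, so only the unsatisfiability of $\Gamma\wedge A\wedge\neg\varphi$
  needs to be decided. This single test is a $\coNP$ query in general (completing item~2 for
  the $\CloneR_1$ clones), becomes solvability of a linear system over $GF(2)$ and is in
  $\P$ for $[B]\in\{\CloneL_1,\CloneL_2\}$ (item~4), and is an $\L$-computation for
  $[B]\subseteq\CloneE$ and $[B]\subseteq\CloneV$ (part of item~5). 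When
  $[B]\not\subseteq\CloneR_1$, the upper bounds are supplied by already-proved propositions:
  Proposition~\ref{prop:P-abd(MF)-L-NP} for item~3, Proposition~\ref{prop:P-abd(MF)-M-coNP-c}
  for the monotonic half of item~2, and for $[B]\subseteq\CloneN$ in item~5 the observation
  that every formula of $\Gamma$ depends on at most one variable so the question reduces to
  logspace propagation, in the spirit of Proposition~\ref{prop:abd(PQ)-E-N-V-logspace}.

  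The substantive ingredient behind this plan is Proposition~\ref{prop:P-abd(MF)-M-coNP-c},
  whose $\coNP$-lower bound for the monotonic subcase of item~2 cannot be inherited from the
  $\MPQ$-classification, since with a positive literal manifestation the monotonic clones
  drop to $\L$; the main difficulty there is to construct a tautology-based reduction whose
  manifestation is a genuinely non-literal $B$-formula representable in the target monotonic
  clone. Once that proposition is available, the rest of the theorem is bookkeeping which
  clone falls into which regime of Lemma~\ref{lem:p-abd-useful}.
  \qed
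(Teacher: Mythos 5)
Your proof takes essentially the same route as the paper's: each of the five items is assembled from the same prior results (hardness inherited from the $\MPQ$ variants, Lemma~\ref{lem:p-abd-useful} for the upper bounds, Propositions~\ref{prop:P-abd(MF)-L-NP} and~\ref{prop:P-abd(MF)-M-coNP-c} for the genuinely new cases), and you correctly single out Proposition~\ref{prop:P-abd(MF)-M-coNP-c} as the one ingredient that cannot be inherited from the positive-literal classification. The only slip is filing $\CloneE$ and $\CloneV$ under the $[B]\subseteq\CloneR_1$ branch: these clones contain $\false$ and are not $\true$-reproducing, so they fall under the monotonic branch of Lemma~\ref{lem:p-abd-useful}, where satisfiability of $\Gamma\land A$ is not automatic but is still decidable in logarithmic space for monotonic formulae (the paper instead routes $\CloneV$ through Theorem~\ref{thm:P-abd(PC)_main}, observing that a $\CloneV$-formula manifestation is just a positive clause).
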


  \begin{proof}
    \begin{enumerate}
      \item Follows from the first item of Proposition~\ref{thm:P-abd(PQ)_main}.
      \item See Proposition~\ref{prop:P-abd(MF)-M-coNP-c} and~\ref{prop:P-abd(PQ)-R1-coNP-c}.
      \item For membership in $\NP$, see Proposition~\ref{prop:P-abd(MF)-L-NP}.
            The $\ParityL$-hardness, on the other hand, is established using the same reduction as in the proof of Proposition~\ref{prop:abd(PQ)-L-P}.
      \item For membership in $\P$, see the fourth item of Proposition~\ref{thm:P-abd(PT)_main}. 
            The $\ParityL$-hardness follows from Proposition~\ref{prop:abd(PQ)-L-P} as above.
      \item For $[B] \subseteq \CloneV$, a $B$-formula is a positive clause. Thus the result follows from Theorem~\ref{thm:P-abd(PC)_main}.
            For $[B] \subseteq \CloneN$ and $[B] \subseteq \CloneE$, see Proposition~\ref{prop:abd(PQ)-E-N-V-logspace}.
      \qed
    \end{enumerate}
  \end{proof}

\section{Overview of   Results}\label{sec:overview}
The following two tables give an overview of the results for the studied symmetric and positive abduction problems. 
The small numbers on the right side in the table cells refer to the corresponding theorem/proposition/remark. 
The number is omitted for trivial results.

\begin{table}[H]
\setlength{\tabcolsep}{3pt}
\renewcommand{\arraystretch}{1.2}
  \begin{tabularx}{\linewidth}{X|cc|cc|cc|cc|cc|cc}
      Manifestation 																					&
      \multicolumn{2}{c|}{$\CloneE_*$} 												&
      \multicolumn{2}{c|}{$\CloneN_*$} 												&
      \multicolumn{2}{c|}{$\CloneV_*$}												&
      \multicolumn{2}{c|}{$\CloneL_*$} 												&
      \multicolumn{2}{c|}{$\CloneD_2,\CloneS_{*0}\subseteq$}	&
      \multicolumn{2}{c}{$\CloneD_1,\CloneS_{*2}\subseteq$}	\\
                                                  &
      \multicolumn{2}{c|}{} 											&
      \multicolumn{2}{c|}{} 											&
      \multicolumn{2}{c|}{}												&
      \multicolumn{2}{c|}{} 											&
      \multicolumn{2}{c|}{$[B]\subseteq\CloneM$} 	&
      \multicolumn{2}{c}{$[B]\subseteq\CloneBF$} \\
    \hline
      $\MNQ,\MNC,\MNT$	&
      $\in \sL$					& {\mini\ref{rem:Q-PQ-NQ}} &
      $\in \sL$ 					& {\mini\ref{rem:Q-PQ-NQ}} &
      $\in \sL$ 					& {\mini\ref{rem:Q-PQ-NQ}} &
      $\in \P$ 					& {\mini\ref{prop:abd(PQ)-L-P}} &
      $\in \sL$ 					& {\mini\ref{rem:Q-PQ-NQ}} &
      $\SigPtwo$-c. 		& {\mini\ref{rem:Q-PQ-NQ}} \\
      $\MPQ,\MPC,\MQ,\MC$	&
      $\in \sL$					& {\mini\ref{prop:abd(PQ)-E-N-V-logspace}} &
      $\in \sL$					& {\mini\ref{prop:abd(PQ)-E-N-V-logspace}} &
      $\in \sL$					& {\mini\ref{prop:abd(PQ)-E-N-V-logspace}} &
      $\in \P$					& {\mini\ref{prop:abd(PQ)-L-P}} &
      $\NP$-c.					& {\mini\ref{prop:abd(PQ)-M-NP-c}} &
      $\SigPtwo$-c.			& {\mini\ref{prop:abd(PQ)-SigP2-c}} \\
      $\MPT,\MT$				&
      $\in \sL$ 				& {\mini\ref{thm:main_variant_PT}} &
      $\in \sL$ 				& {\mini\ref{thm:main_variant_PT}} &
      $\NP$-c. 					& {\mini\ref{prop:S-ABD(PT)-V-NP-c}} &
      $\in \P$ 					& {\mini\ref{thm:main_variant_PT}} &
      $\NP$-c. 					& {\mini\ref{thm:main_variant_PT}} & 
      $\SigPtwo$-c. 		& {\mini\ref{thm:main_variant_PT}} \\
      $\MF$ & 
      $\in \sL$  				& {\mini\ref{thm:main_variant_F}} &
      $\in \sL$  				& {\mini\ref{thm:main_variant_F}} &
      $\in \sL$  				& {\mini\ref{thm:main_variant_F}} &
      $\in \P$ 					& {\mini\ref{thm:main_variant_F}} &
      $\SigPtwo$-c.  		& {\mini\ref{prop:abd(MF)-SigP2-c}} &
      $\SigPtwo$-c.  		& {\mini\ref{prop:abd(MF)-SigP2-c}} \\

  \end{tabularx}
  \smallskip
  \caption{%
    The complexity of $\ABD$, where 
    $*$-subscripts on clones denote all valid completions,
    $\sL$ abbreviates $\L$, and
    the suffix ``-c.'' indicates completeness for the respective complexity class.
  }
\end{table}

\begin{table}[H]
\setlength{\tabcolsep}{3pt}
\renewcommand{\arraystretch}{1.2}
  
  \begin{tabularx}{\linewidth}{X|c|c@{\;}c|c@{\;}c|c@{\;}c|c@{\;}c|c@{\;}c}
    Manifestation										 	&
      $\CloneE_*,\CloneN_*,$	&
      \multicolumn{2}{c|}{$\CloneL_1,\CloneL_2$} 	&
      \multicolumn{2}{c|}{$\CloneL_0,\CloneL_3,\CloneL$} &
      \multicolumn{2}{c|}{$\CloneD_2,\CloneS_{* 0}\subseteq$}	&
      \multicolumn{2}{c|}{$\CloneD_1,\CloneS_{* 2}\subseteq$} &
      \multicolumn{2}{c}{$\CloneD,\CloneS_1\subseteq$} \\
                            &
      $\CloneV_*$           &
      \multicolumn{2}{c|}{} &
      \multicolumn{2}{c|}{} &
      \multicolumn{2}{c|}{$[B]\subseteq\CloneM$}	&
      \multicolumn{2}{c|}{$[B]\subseteq\CloneR_1$} &
      \multicolumn{2}{c}{$[B]\subseteq\CloneBF$}\\    	
    \hline
    $\MNQ,\MNC$ 				& {$\in \sL$} 	&
      {$\in \sL$} 			& &
      {$\in \P$}				& {\mini\ref{rem:p-abd-Q-PQ-NQ}}	&
      {$\in \sL$}				& &
      {$\in \sL$}				& &
      {$\SigPtwo$-c.}		& {\mini\ref{rem:p-abd-Q-PQ-NQ}}	\\
    $\MNT$							& {$\in \sL$} 	&
      {$\in \sL$} 			& &
      {$\NP$-c.}				& {\mini\ref{rem:p-abd-T-PT-NT}}	&
      {$\in \sL$} 			& &
      {$\in \sL$}				& &
      {$\SigPtwo$-c.}		& {\mini\ref{rem:p-abd-T-PT-NT}}	\\
    $\MPQ,\MPC,\MQ,\MC$	& {$\in \sL$} 	&
      {$\in \P$}				& {\mini\ref{thm:P-abd(PQ)_main}}	&
      {$\in \P$}				& {\mini\ref{thm:P-abd(PQ)_main}}	&
      {$\in \sL$}		 		& {\mini\ref{prop:P-abd(PQ)-M-L}}	&
      {$\co\NP$-c.}			& {\mini\ref{prop:P-abd(PQ)-R1-coNP-c}}	&
      {$\SigPtwo$-c.}		& {\mini\ref{thm:P-abd(PQ)_main}}	\\
    $\MPT,\MT$					& {$\in \sL$} 	&
      {$\in \P$}				& {\mini\ref{thm:P-abd(PT)_main}}	&
      {$\NP$-c.} 				& {\mini\ref{prop:P-abd(PT)-L0-L3-NP-c}}	&
      {$\in \sL$}		 		& {\mini\ref{thm:P-abd(PT)_main}}	&
      {$\co\NP$-c.}			& {\mini\ref{thm:P-abd(PT)_main}}	&
      {$\SigPtwo$-c.}		& {\mini\ref{thm:P-abd(PT)_main}}	\\
    $\MF$								& {$\in \sL$} 		&
      {$\in \P$}				& {\mini\ref{thm:P-abd(F)_main}}	&
      {$\in \NP$}				& {\mini\ref{prop:P-abd(MF)-L-NP}} &
      {$\co\NP$-c.}			& {\mini\ref{prop:P-abd(MF)-M-coNP-c}}	&
      {$\co\NP$-c.}			& {\mini\ref{thm:P-abd(F)_main}}	&
      {$\SigPtwo$-c.}		& {\mini\ref{thm:P-abd(F)_main}}	\\

  \end{tabularx}
  \smallskip
  \caption{
    The complexity of $\ABD[\HP]$, where 
    $*$-subscripts on clones denote all valid completions,
    $\sL$ abbreviates $\L$, and
    the suffix ``-c.'' indicates completeness for the respective complexity class.
  }
\end{table}

Our results show, for instance, that when the knowledge base's formulae are restricted to be represented as positive clauses (\emph{i.e.}, $[B]=\CloneV$), then the abduction problem for single literal manifestations is very easy (solvable in $\L$); this still holds if the manifestations are represented by positive clauses. But its complexity jumps to $\NP$-completeness if we change the restriction on the manifestations to allow for positive terms.

Considering the case that all monotonic functions can be simulated (\emph{i.e.}, $X \subseteq [B] \subseteq \CloneM$ for $X \in \{\CloneD_2,\CloneS_{00},\CloneS_{10}\}$), the abduction problem is $\NP$-complete for manifestations represented by literals, clauses, or terms. Here allowing manifestation represented by a monotonic formulae, causes the jump to $\SigPtwo$-completeness. 
This increase in the complexity of the problem can be intuitively explained as follows.
The complexity of the abduction rests on two sources: finding a candidate explanation and checking that it is indeed a solution. 
The $\NP$-complete cases that occur in our classification hold for problems in which the verification can be performed in polynomial time. 
If both the knowledge base and the manifestation are represented by monotonic formulae, verifying a candidate explanation is $\co\NP$-complete.

It comes as no surprise that the complexity of $\ABD[\HP](B,\manif)$ is lower than or equal to the complexity of $\ABD(B,\manif) $ in most cases 
(except for the affine clones). We have seen in Lemma~\ref{lem:p-abd-useful} that 
for monotonic or $\true$-reproducing knowledge bases only one candidate needs to be considered. 
In these cases the complexity of the abduction problem is determined by the verification of the candidate.
This explains the appearance of $\coNP$-complete cases in our classification.
For the affine clones (\emph{i.e.}, $[B] \in \{\CloneL,\CloneL_0,\CloneL_3\}$), on the other hand,
the tractability of $\ABD(B,\manif)$ relies on Gaussian elimination. 
This method fails when restricting the hypotheses to be positive, and there is no obvious alternative.

\section{Counting complexity}\label{sec:counting}

  We now turn to the complexity of counting. We focus on the case where the manifestation is a single positive literal. For the symmetric abduction
  problem we are interested in counting the number of full explanations, $\#\ABD(B,\MPQ)$. For positive abduction two counting problems commonly arise: either to count all positive explanations, denoted by $\#\ABD[\HP](B,\MPQ)$; or to count only the subset-minimal explanations, denoted by $\#\text{-}\mathord{\subseteq}\text{-}\ABD[\HP](B,\MPQ)$. 
  Our first result in this section is the complete classification of $\#\ABD(B,\MPQ)$.
  
  
  

  \begin{theorem}\label{thm:counting_main}
    Let $B$ be a finite set of Boolean functions. Then the counting problem of symmetric abduction for propositional $B$-formulae with a positive literal manifestation, $\#\ABD(B,\MPQ)$, is
    \begin{enumerate}
      \item $\SHcoNP$-complete if $\CloneS_{02} \subseteq [B]$ or $\CloneS_{12} \subseteq [B]$ or	$\CloneD_{1} \subseteq [B]$,
      \item $\SHP$-complete if $\CloneV_{2}\subseteq [B]\subseteq \CloneM$ or $\CloneS_{10}\subseteq [B]\subseteq \CloneM$ or
            $\CloneD_{2}\subseteq [B] \subseteq \CloneM$,
      \item in $\FP$ in all other cases.
    \end{enumerate}
  \end{theorem}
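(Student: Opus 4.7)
The plan is to mirror Theorem~\ref{thm:main} and upgrade each of the reductions involved there to parsimonious ones, identifying the appropriate counting class in each region of Post's lattice. As in the decision case, part~1 of Lemma~\ref{lem:constants_sometimes_available} — which is already parsimonious, since the fresh variable $t$ is not added to the hypotheses — together with the fact that $\lor \in [B \cup \{\true\}]$ for every $B$ above $\CloneS_{00}$ reduces the number of canonical bases that must be considered.

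For membership I would split along the three classes. In the $\FP$ cases ($[B] \subseteq \CloneE$, $\CloneN$, $\CloneV$ or $\CloneL$) the set $\Gamma$ is either equivalent to a system of literals or to a system of affine equations over $GF(2)$, and the number of full explanations can be computed directly — by enumeration of the forced values for $\CloneE$, $\CloneN$, $\CloneV$, and by Gaussian elimination for $\CloneL$. For the $\SHP$ cases the verification that a candidate $E \subseteq \Lits{A}$ is a full explanation reduces to satisfiability and entailment of monotonic formulae, both decidable in polynomial time by Lewis~\cite{lew79}. For the $\SHcoNP$ cases verification is more subtle: a priori it requires both an $\NP$-test (satisfiability of $\Gamma \land E$) and a $\coNP$-test ($\Gamma \land E \models q$), and one must normalise the instance so that the satisfiability check can be absorbed into the witness without changing the count, leaving only the $\coNP$-condition to be verified.

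For $\SHP$-hardness in the case $\CloneV_2 \subseteq [B] \subseteq \CloneM$ I would give a parsimonious reduction from the $\SHP$-complete problem of counting satisfying assignments of monotone $2$-CNF formulae (Valiant). Given $\varphi = \bigwedge_k (y_{i_k} \lor y_{j_k})$ over $\{y_1, \ldots, y_n\}$, construct $\calP := (\Gamma, A, q)$ with $A := \{y_1, \ldots, y_n\}$, $q$ fresh and $\Gamma := \varphi \cup \{q\}$. Then $\Gamma \land E \models q$ holds trivially, while $\Gamma \land E$ is satisfiable iff $E$, viewed as an assignment to $A$, satisfies $\varphi$; hence full explanations correspond bijectively to satisfying assignments of $\varphi$. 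This reduction lifts to every $[B] \supseteq \CloneV_2$ by replacing each binary $\lor$ by its constant-sized $B$-representation. For the remaining $\SHP$-cases $\CloneS_{10} \subseteq [B] \subseteq \CloneM$ and $\CloneD_2 \subseteq [B] \subseteq \CloneM$ I would verify that the map $\sigma \mapsto \{\neg x : \sigma(x) = 0\} \cup \{\neg q_i : i \in I\}$ used in Proposition~\ref{prop:abd(PQ)-M-NP-c} is in fact a bijection between the $1$-in-$3$ satisfying assignments of the source formula and the full explanations of the constructed instance, turning that reduction parsimonious and transporting the $\SHP$-completeness of $\#\TWOINTHREESAT$.

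For $\SHcoNP$-hardness I would reduce from the counting analogue of abduction with a clausal knowledge base, which is $\SHcoNP$-complete by Hermann and Pichler~\cite{hepi07}, via exactly the transformation used in Proposition~\ref{prop:abd(PQ)-SigP2-c}: rewrite each clause as a logarithmic-depth $\lor$-tree and replace each $\lor$ and $\neg$ by its $B$-representation. Since $A$ and $q$ are untouched and $\Gamma$ is replaced by a logically equivalent set of $B$-formulae, the set of full explanations is preserved verbatim, so the reduction is parsimonious. The main obstacle I anticipate is the $\SHcoNP$-membership argument indicated above: placing the problem in $\SHcoNP$ requires a careful choice of witness function (or equivalent pre-processing) that collapses the two-step satisfiability-plus-entailment verification into a single $\coNP$-check while preserving the count.
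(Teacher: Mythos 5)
Your overall skeleton (lift the decision-level clone analysis, make the reductions parsimonious) is the right one, and your $\SHP$-hardness gadget for $\CloneV_2$ — adding the unit clause $q$ to $\Gamma$ so that entailment is trivial and full explanations biject with satisfying assignments of the monotone $2$-CNF — is correct and arguably cleaner than the paper's, which instead uses $\Gamma := \{p_i \lor q_i \lor q\}$ and the subtractive identity $\#\mathrm{SAT}(\varphi) = 2^n - \nSols{\calP}$ (a Turing reduction from $\POSTWOSAT$ suffices, since completeness here is with respect to Turing reductions anyway). Your detour through a parsimonious version of the $\TWOINTHREESAT$ reduction for $\CloneS_{10}$ and $\CloneD_2$ is unnecessary: as you note yourself, part 1 of Lemma~\ref{lem:constants_sometimes_available} is parsimonious and $\CloneV_2 \subseteq [\CloneS_{10} \cup \{\true\}]$, so the $\CloneV_2$ reduction already covers those clones; keeping the detour would force you to additionally establish $\SHP$-hardness of counting $2$-in-$3$ assignments. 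Also note a slip in your membership paragraph: you list $[B]\subseteq\CloneV$ among the $\FP$ cases, which contradicts both the theorem and your own hardness argument — for $\CloneV$ the knowledge base is not equivalent to a system of literals, and the counting problem is $\SHP$-complete.

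The genuine gaps are in the $\SHcoNP$ part. For membership, the obstacle you anticipate does not need any normalisation of the instance: checking whether $E$ is an explanation is in $\P^{\NP}=\DeltaPtwo$, and $\SHclass{\DeltaPtwo}=\SHcoNP$ by Hemaspaandra and Vollmer~\cite{hevo95}; this is exactly how the paper closes the membership argument, and without this equality your proposal has no membership proof at all. For hardness, your source problem is misidentified: the $\SHcoNP$-completeness established by Hermann and Pichler~\cite{hepi07} concerns counting explanations that are \emph{subsets of the hypotheses} (i.e., positive explanations, and in part subset-minimal ones) of a clausal abduction instance, whereas $\#\ABD(B,\MPQ)$ counts \emph{full} explanations over $\Lits{A}$. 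Replacing clauses by $B$-representations preserves the set of full symmetric explanations verbatim, but that is not the quantity known to be $\SHcoNP$-hard, so the hardness does not transport. The paper avoids this mismatch by giving a direct parsimonious reduction from the $\SHcoNP$-complete problem of counting the models of $\forall y_1\cdots\forall y_m\,\varphi$ with $\varphi$ a DNF formula~\cite{dhk05}, using the gadget $\{x_i\to r_i,\ x_i'\to r_i,\ \neg x_i\lor\neg x_i'\}$ together with $\bigwedge_i r_i\land t\to q$ to force every full explanation to encode exactly one assignment to $x_1,\ldots,x_n$. You would need either this construction or a verified matching of witness notions in~\cite{hepi07} to complete the argument.
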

  
  \begin{proof} 
    The $\SHcoNP$-membership for $\#\ABD(B,\MPQ)$ follows from the facts that checking whether a set of literals is indeed an explanation for an
    abduction problem is in $\P^\NP=\DeltaPtwo$ and from the equality $\SHclass{\DeltaPtwo}=\SHcoNP$, see \cite{hevo95}.
  
    We show $\SHcoNP$-hardness by giving a parsimonious reduction from the following $\SHcoNP$-complete problem:
    Count the number of satisfying assignments of $\psi(x_1,\ldots,x_n)=\forall y_1\cdots \forall y_m \varphi(x_1,\ldots,x_n,y_1,\ldots,y_m)$, where
    $\varphi$ is a DNF-formula (see, \emph{e.g.}, \cite{dhk05}).
    Let $x'_1,\ldots ,x'_n,  r_1,\ldots ,r_n,t$ and $q$ be fresh, pairwise distinct propositions. 
    We define the propositional abduction problem $\calP=(\Gamma, A, q)$ as follows:
    \begin{align*}
      \Gamma :=&\; \{x_i\rightarrow r_i, x'_i\rightarrow r_i, \neg x_i \vee \neg x_i' \mid 1\le i\le n\} \\
      \cup&\;\textstyle \{\varphi\rightarrow t\} \cup \{\bigwedge_{i=1}^n r_i\land t \rightarrow q\}, \\
      A :=&\; \{x_1,\ldots ,x_n\}\cup\{x'_1,\ldots ,x'_n\}.
    \end{align*}
    Observe that the manifestation $q$ occurs only in the formula $\bigwedge_{i=1}^n r_i\land t \rightarrow q$.
    This together with the formulae $x_i\rightarrow r_i, x'_i\rightarrow r_i, \neg x_i \vee \neg x_i'$, $1 \leq i \leq n$, enforces that every full
    explanation of $\calP$ has to select for each $i$ either $x_i$ and $\neg x'_i$, or  $\neg x_i$ and $x'_i$. By this the value of $x'_i$ is fully
    determined by the value of $x_i$ and is its dual. Moreover, it is easy to see that there is a one-to-one correspondence between the models of 
    $\psi$ and the full explanations of $\calP$.  

    Observe that since the reductions in Lemma~\ref{lem:constants_sometimes_available} are parsimonious, we can suppose w.l.o.g.\ that $B$ contains the
    two constants $\true$ and $\false$. Therefore, 
    it suffices to consider the case $[B] = \CloneBF$.
    Suppose that $\varphi = \bigvee_{i \in I} t_i$ and 
    let $\Gamma'$ be the set of formulae obtained by replacing 
    $\varphi \to t \equiv \neg \big(\bigvee_{i \in I} t_i\big) \lor t$ 
    by the set of clauses $\{\neg t_i \lor t | i \in I\}$.
    Then $\Gamma'$ is a set of disjunctions of literals, whose size is polynomially bounded by $|\Gamma|$.
    Hence, by the associativity of $\lor$, $\Gamma'$ can be transformed in logarithmic space into an 
    equivalent set of $B$-formulae. 
    This provides a  parsimonious reduction from the above $\SHcoNP$-complete problem to $\#\ABD(B,\MPQ)$.

    Let us now consider the $\SHP$-complete cases.
    When $[B]\subseteq \CloneM$, checking whether a set of literals $E$ is an explanation for an abduction problem with $B$-formulae is 
    in $\P$ (see Proposition~\ref{prop:abd(PQ)-M-NP-c}). This proves membership in $\SHP$. For the hardness result, it suffices to consider the case
    $[B]=\CloneV_{2}$, because the reduction provided in
    Lemma~\ref{lem:constants_sometimes_available} is parsimonious and $\CloneV_2 \subseteq [\CloneS_{10}\cup\{1\}]$.
    We provide a Turing reduction from the problem $\#\POSTWOSAT$, which is known to be $\SHP$-complete \cite{val79}. 
    Let $\varphi =\bigwedge_{i=1}^k ( p_i\lor  q_i)$ be an instance of this problem, 
    where $p_i$ and $q_i$ are propositional variables from the set $X=\{x_1,\ldots ,x_n\}$. 
    Let $q$ be a fresh proposition. Define the propositional abduction problem $\calP=(\Gamma, A, q)$ as follows:
    \[
      \Gamma := \{p_i\lor q_i\lor q\mid 1\le i\le k  \}, \qquad A:= \{x_1,\ldots ,x_n\}.
    \]
    It is easy to check that the number of satisfying assignments for $\varphi$ is equal to $2^n-\nSols{\calP}$.
    Finally, since $\lor\in[B]=\CloneV_{2}$, 
    $\calP$ can easily be transformed in logarithmic space into an $\ABD(B,\MPQ)$-instance.

    As for the tractable cases, the clones  $\CloneE$ and $\CloneN$ are easy; 
    and finally, for $[B]\subseteq \CloneL$, the number of full explanations is polynomial time computable according to \cite[Theorem~8]{hepi07}.
    \qed
  \end{proof}

  Turning to positive abduction the $\SHP$-complete cases vanish, while for the $\CloneL$-clones the exact complexity remains open.

  \begin{theorem}\label{thm:counting-p-abd}
    Let $B$ be a finite set of Boolean functions. 
    Then both counting problems of positive abduction for propositional $B$-formulae with a positive literal manifestation,
    $\#\ABD[\HP](B,\MPQ)$, $\#\text{-}\mathord{\subseteq}\text{-}\ABD[\HP](B,\MPQ)$ are
    \begin{enumerate}
      \item $\SHcoNP$-complete if $\CloneS_{02} \subseteq [B]$ or $\CloneS_{12} \subseteq [B]$ or	$\CloneD_{1} \subseteq [B]$,
      \item in $\SHP$ if $\CloneL_2 \subseteq B \subseteq \CloneL$,
      \item in $\FP$ in all other cases.
    \end{enumerate}
  \end{theorem}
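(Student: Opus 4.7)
The plan is to treat the three cases in turn, following the template of Theorem~\ref{thm:counting_main} but adapted to positive explanations. Throughout, I would rely on the parsimony of the reductions in Lemma~\ref{lem:constants_sometimes_available} in order to freely add the constants $\true$ and $\false$ to $B$.

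For the $\SHcoNP$-completeness of case~1, membership is routine: verifying that $E\subseteq A$ is an explanation requires one $\NP$-test (satisfiability of $\Gamma\land E$) and one $\coNP$-test (entailment $\Gamma\land E\models q$), so verification is in $\P^{\NP}=\DeltaPtwo$, and the subset-minimality check adds only polynomially many further $\DeltaPtwo$ tests. Since $\SHclass{\DeltaPtwo}=\SHcoNP$ by \cite{hevo95}, both counting problems lie in $\SHcoNP$. For hardness, I would parsimoniously reduce from counting the satisfying $\vec x$-assignments of $\psi=\forall\vec y\,\varphi(\vec x,\vec y)$ with $\varphi$ a DNF, and by Lemma~\ref{lem:constants_sometimes_available} it suffices to work with $[B]=\CloneBF$. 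The construction would introduce for each $x_i$ a twin hypothesis $x'_i$ and a fresh propagator $z_i$, set $A:=\{x_i,x'_i\mid 1\leq i\leq n\}$, and take
\begin{align*}
\Gamma := \{\neg x_i\lor \neg x'_i\mid 1\leq i\leq n\}
&\cup\{x_i\to z_i,\,x'_i\to z_i\mid 1\leq i\leq n\}\\
&\cup\{\varphi\to t\}\cup\bigl\{\textstyle\bigwedge_{i=1}^{n} z_i\land t\to q\bigr\}.
\end{align*}
The exclusion clauses forbid taking both members of a pair; taking neither leaves $z_i$ free to be $0$ in a model of $\Gamma\land E$, falsifying the premise of the $q$-clause. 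Hence every explanation picks exactly one member of each pair, encoding an assignment $\sigma$, and a short case analysis shows $\Gamma\land E\models q$ iff $\varphi(\sigma,\vec y)$ is universally true, i.e.\ iff $\sigma$ witnesses $\psi$. Removing any literal from such an $E$ de-forces some $z_i$, so every explanation is subset-minimal and the very same reduction yields hardness for the $\subseteq$-minimal variant. Rewriting $\Gamma$ as a $B$-formula instance follows the pattern of Proposition~\ref{prop:abd(MF)-SigP2-c}.

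For case~2, both satisfiability of $\Gamma\land E$ and unsatisfiability of $\Gamma\land E\land\neg q$ reduce to Gaussian elimination over $GF(2)$ and hence are in $\P$; minimality adds $|E|$ more such checks, giving membership in $\SHP$ for both counting variants. For case~3 I would dispatch the remaining clones structurally. For $[B]\subseteq\CloneE$ or $[B]\subseteq\CloneN$ the knowledge base collapses to a conjunction of unary literals and constants and counting reduces to a direct syntactic test. For $[B]\subseteq\CloneM$ (which covers the $\CloneV$-subclones), $\Gamma$ is monotonic and $E$ is purely positive, so taking the pointwise maximum of any model of $\Gamma$ with the indicator of $E$ produces a model of $\Gamma\land E$ preserving the value of $q$ (using $q\notin A$); hence $\Gamma\land E\models q$ is equivalent to $\Gamma\models q$, independently of $E$. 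Consequently the number of explanations is either $0$ or $2^{|A|}$ (respectively $0$ or $1$ for the minimal version), which is obviously in $\FP$.

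The main obstacle is the case~1 reduction: engineering a positive-only gadget that faithfully encodes the $\forall$-alternation of $\psi$ without admitting any spurious $E$ as an explanation. The twin-variable plus $z_i$-propagator device sketched above is what substitutes for the negative-literal selection used in the symmetric counting classification, and verifying that every explanation is already subset-minimal is what lets a single reduction cover both counting variants simultaneously.
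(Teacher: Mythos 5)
Your proposal is correct and follows essentially the same route as the paper: the $\SHcoNP$-hardness gadget is exactly the reduction used for the symmetric counting problem in Theorem~\ref{thm:counting_main} (your $z_i$ are the paper's $r_i$), reused with the observation that positive explanations pick exactly one of each twin pair and are therefore pairwise incomparable, so one parsimonious reduction covers both counting variants; the $\SHP$ and $\FP$ cases are argued as in the paper (Gaussian elimination for the affine clones, and the ``$E$ is an explanation iff $\emptyset$ is'' collapse for monotonic $B$, giving counts $0$ or $2^{|A|}$, resp.\ $0$ or $1$). Your explicit separate treatment of $[B]\subseteq\CloneN$ is a small point in your favour, since the paper's blanket claim that the remaining cases are encompassed by $[B]\subseteq\CloneM$ overlooks the non-monotonic clones $\CloneN$ and $\CloneN_2$.
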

  \begin{proof}
    The $\SHcoNP$-membership follows analogously to the proof of Theorem~\ref{thm:counting_main}. Indeed, the same reduction as in the proof of 
    Theorem~\ref{thm:counting_main} works: there is an one-to-one correspondence between full explanations and purely positive explanations.
    Moreover, all explanations are incomparable and hence subset-minimal.

    For the affine case membership in $\SHP$ follows from the $\NP$-membership of the corresponding decision problem.
    
    The remaining cases are encompassed by $[B] \subseteq \CloneM$. In this case a
    slight strengthening of Lemma~\ref{lem:p-abd-useful} is easily seen: Let $(\Gamma, A, q)$ be an instance. Then $A$ is an explanation
    if and only if all subsets of $A$ are explanations. Hence for $\#\ABD[\HP](B,\MPQ)$ the number of solutions is either $0$ or $2^{|A|}$
    (all subsets), while for $\#\text{-}\mathord{\subseteq}\text{-}\ABD[\HP](B,\MPQ)$ it is either $0$ or $1$ (the empty set). 
    We obtain membership in $\FP$, because for monotonic formulae deciding whether $A$ is an explanation can be done in $\L$.
    \qed
  \end{proof}

  We note that for manifestations represented as terms, clauses, or $B$-formulae, 
  most of the classifications of the corresponding counting problems can be easily derived from the above results; 
  the exceptions to this are $\#\ABD(\CloneM,\MF)$ and some cases satisfying $[B \cup \{\true\}]=\CloneL$, 
  whose exact complexity remains open.

\section{Concluding Remarks}\label{sec:conclusion}

In this paper we studied the decision and counting complexity of symmetric and positive propositional abduction from a knowledge base being represented as sets of $B$-formulae, for all possible finite sets $B$ of Boolean functions. 
We gave a detailed picture of the complexity of abduction in considering restrictions on both manifestations and hypotheses. 
Thus our results highlight the sources of intractability, identify fragments of lower complexity, and may help to identify candidates for parameters in the study of parameterized complexity of abduction.

Our restrictions on the hypotheses covered only the symmetric and the positive case.
One can as well define \emph{negative} abduction, where explanations consist of negative literals only; 
or \emph{non-symmetric} abduction, where explanations are formed upon a given set of \emph{literals}, which is \emph{not} demanded to be
closed under complement (in contrast to $\ABD[\HS]$). However, results not mentioned herein indicate that 
the classifications of these variants are easily seen to be identical to $\ABD[\HS]$ (except for the $\CloneL$-clones).

It is worth noticing that, with the exception of the clones between $\CloneL_2$ and $\CloneL$, 
whenever the abduction problem is tractable for some clone, it is trivial.
In contrast, tractability for the clones between $\CloneL_2$ and $\CloneL$ relies on Gaussian elimination, 
which fails when we restrict explanations to be positive.
Determining the complexity of positive abduction for the clone $\CloneL$ 
with manifestations represented by $\CloneL$-formulae might hence prove to be a challenging task
(note that a similar case, the circumscriptive inference of an affine formula from a set of affine formulae, remained unclassified in \cite{thomas09}).


\newcommand{\etalchar}[1]{$^{#1}$}

\end{document}